%% file: main.tex
\documentclass[11pt]{article}
\usepackage{fullpage}
\usepackage{epsfig}
\usepackage{multienum,subfigure}
\usepackage{algorithm}
\usepackage[noend]{algpseudocode}
\usepackage{amsmath,amssymb,amsthm}
\usepackage{ifsym}
\usepackage{verbatim}
\usepackage[dvipsnames]{xcolor}
\usepackage{graphicx}
\usepackage{bbm}
\usepackage{todonotes}
\usepackage{xspace}
\usepackage[normalem]{ulem}

\usepackage{mathtools}
\usepackage{enumitem,needspace,booktabs}
\usepackage{thm-restate}
\usepackage[square,numbers]{natbib}
\usepackage{hyperref}
\usepackage[nameinlink,noabbrev]{cleveref}

\input{paper-macros}

\hypersetup{pdftitle={Utility maximization with i.i.d.~values: pricing, free allocation, and uniform rationing},pdfauthor={Research lecture notes}}
\title{A Constant-Factor Approximation to Multidimensional \\ Consumer Utility}
\author{Kira Goldner \and Taylor Lundy \and Thodoris Tsilivis}
\date{\today}
\begin{document}
\maketitle
\thispagestyle{empty}

\begin{abstract}
    Motivated by social services where consumers pay with non-transferable ordeals, we study mechanisms that maximize consumer utility for multiple unit-demand buyers and heterogeneous items whose values are drawn independently from known prior distributions. Prior work in utility maximization approximates social welfare and shows that the gap between optimal utility and social welfare is logarithmic. We resolve the question of whether simple mechanisms can guarantee a constant-factor approximation to optimal utility itself. Our mechanisms achieve a $(5.67+\varepsilon)$-approximation for general independent values, improving to $2e/(e-1)<3.164$ when values are i.i.d. Each buyer chooses their favorite option from posted item prices or free item lotteries, and contention resolution determines which buyers' requests are served; this is BIC, ex-post individually rational, and computable in polynomial time. Our main technical contribution is a general upper bound on the optimal ex-ante-constrained utility that separates the contributions captured by posted prices and free lotteries. These results establish a utility counterpart to the theory of simple, approximately revenue-optimal mechanisms.
\end{abstract}

\section{Introduction}

In this paper, we study the design of simple, approximately-optimal mechanisms for consumer utility. This objective is well-motivated by social services, where recipients are often required to pay with their time and effort. For example, patients wait to receive treatment, and applicants for public
benefits complete labor-intensive paperwork~\citep{BravermanCK16,Heflin}. For a service provider who cannot or does not wish to charge money, these
\emph{ordeals} offer a way to ration demand and elicit information about how much people value the service being provided. However, such ordeals are costly to consumers, and provide no tangible benefit to providers. We study 
the design of mechanisms for this setting by modeling nonnegative (non-monetary) payments that are burned---that is, not transferred to the seller. Our objective is hence the buyers' total utility, also called \emph{consumer surplus} or
\emph{residual surplus}: their value for the goods they receive, less the payments they make.

Utility maximization requires a trade-off between maximizing efficiency in the allocation and the payments that are required to truthfully implement such an allocation. For example, for two buyers with similar values, a second-price auction achieves almost no utility, as the payment consumes all of it. Here, a free lottery would be a much better choice. However, if the buyers have a very large gap between their values, the opposite is true; prices are necessary to distinguish the buyer with the high value. In a multi-item setting, the designer also needs to factor in which items each buyer prefers, even if no payment is charged.

The single-dimensional setting is completely resolved, which a Myersonian-like theory by \citet{HartlineR08} giving a closed-form description of the optimal mechanism. \citet{goldnerlundy25} initiate the study of multidimensional Bayesian utility maximization, constructing simple mechanisms for $n$ unit-demand buyers and $m$ items with i.i.d.~values. Their mechanism, which ask buyers what their favorite item is, and then sells each item separately,  gives a tight $\Theta(1 + \log{\frac{n}{m}})$-approximation to the benchmark of social welfare. \citet{ezra2024optimal} extend this approximation to general distributions by parameterizing the amount of item-symmetry present. \citet{goldnerlundy25} leave the open question: what is a better upper-bound on optimal utility than social welfare, and can simple mechanisms guarantee a constant-factor approximation to the optimal utility?

Revenue maximization has been the impetus for decades of foundational work. It is one of the largest and most influential literatures in algorithmic mechanism design. A landmark achievement of this area, reached through a long sequence of breakthroughs, was to show that simple mechanisms can obtain a constant-fraction of the optimal revenue in multidimensional multi-buyer settings---from unit-demand values~\citep{CHMS} to additive values~\citep{Yao15} to a unified framework~\citep{CDW} to XOS values~\citep{CZ17}---despite the complexity of optimal mechanisms. Our result establishes a parallel milestone for consumer utility, providing a foundation for a corresponding theory of simple, approximately-utility-optimal mechanism design.

\paragraph{Our Results.} We develop a general benchmark for optimal expected utility and use it to construct simple mechanisms that obtain a constant fraction of optimal utility for $n$ unit-demand buyers and $m$ heterogeneous items. We assume that all buyer-item values are drawn independently from known distributions. Our contributions are:
\begin{itemize}
\item \emph{A general utility benchmark.}
 We upper-bound optimal ex-ante-constrained utility by a sum of
 optimal single-dimensional utilities and a weighted free lottery
 term (\Cref{thm:single-buyer-benchmark}). This bound applies broadly, whenever
 utility is linear in item 
 parameters, including constrained-additive valuations. For
 unit-demand buyers, it gives a constant-factor approximation to
 optimal utility even when social welfare is a much larger benchmark.
 \item \emph{Symmetric items.} For values drawn i.i.d.~across items and buyers, we give a
 $2e/(e-1)<3.164$ approximation (\Cref{thm:iid-favorite}). The same
 guarantee, up to a multiplicative $1+\varepsilon$, holds when values
 are i.i.d.~across items for each buyer but buyers have different
 distributions (\Cref{thm:iid-stronger}).
 \item \emph{General independent values.} For independent non-identical values,
 we give a
 $2e/(e-1)+(e/(e-1))^2+\varepsilon<5.667+\varepsilon$
 approximation for every $\varepsilon\in(0,1]$
 (\Cref{thm:non-iid-multi}).
\end{itemize}
All of these mechanisms are BIC, ex-post individually rational, and computable in polynomial time. They are incredibly simple: for each buyer, either post a price for every item, or determine a probability for each item with which the item is available for free to the buyer. In both the price and lottery mechanisms, the buyer picks their favorite item given these prices or probabilities, and if multiple buyers choose the same item, contention resolution determines who gets it. The better of these two mechanisms, or a randomized mixture, achieves the above guarantees.

\paragraph{Our Approach: A Single-Buyer Upper Bound.} We decompose utility into the contribution from each buyer. Typically, the revenue maximization literature has first solved the single-buyer problem as a foothold to the multi-buyer problem. However, this poses a problem for utility maximization, as it is the competition between buyers that requires us to use ordeals. For just a single-buyer, we can offer them any item for free.  

Instead, we use the \emph{ex ante relaxation}~\citep{Alaei11,CM,chawla2023buy} to limit how often each item can be allocated. These probabilities come from a multi-buyer mechanism and determine the  probability that item $j$ is sold to bidder $i$ in expectation over \emph{all} of the multi-bidder values profiles $\bf{V}$ that could be drawn. The sum of the optimal utility from each single-buyer ex-ante-constrained problem provides an upper bound on the optimal multi-buyer utility.

With this constraint on allocation, even in a single-buyer problem, the buyer's different possible types now compete for each item's limited allocation. Allocating an item to one type leaves less probability for other types, but incentivizing the buyer to choose an alternative requires costly payments. 
Hence we focus on simple mechanisms that achieve a constant-factor approximation to the optimal ex-ante-constrained utility, and require an upper bound on this quantity.

To form our upper bound, we aim to decompose the multi-item utility into single-item terms. We do so by considering each item's \emph{deletion loss}: how much utility is lost in the multi-item mechanism if we remove the item. We must consider that this term may be large (this item is uniquely the buyer's favorite) or small (there are many other good options for the buyer). 

To account for both possibilities, we consider the utility from two independent draws from the buyer's value distribution, $\Vvec$ and $\Wvec$. Their average has the same expectation as a single draw, so we can write it as
\[
 \E[u(\Vvec)]
 =\E[\min\{u(\Vvec),u(\Wvec)\}]
  +\frac12\E[|u(\Vvec)-u(\Wvec)|].
\]
In \Cref{sec:common-deletion}, we show that this first term is the portion of utility bounded by free lotteries, and this second term is bounded by deletion loss, which we can then bound using item-pricing. \Cref{sec:single,sec:non-iid-single} construct truthful single-buyer multi-item mechanisms that then approximate these terms, and convert them into a multi-buyer mechanism. 

Our approach parallels the core-tail analysis of revenue maximization~\citep{HN13,BILW}, where a cutoff on item value separates items into the outliers (in the tail) and the typical values (in the core), and then simple mechanisms are used to bound each portion. In contrast, rather than divide items based on their realizations, we divide the utility contribution below and above a threshold. This is most similar to a prophet inequality analysis~\citep{Samuel-Cahn,KleinbergW12}. We then show that simple mechanisms bound each portion.

\citet{CDW} gave a unifying framework for generating an upper bound on optimal revenue across multiple valuation classes. There are several impediments to applying that framework here, discussed in \citet{goldnerlundy25}. First, the CDW framework decomposes revenue into single-buyer contributions and generates upper bounds for each separately. As discussed, that does not work in our setting, as the single-buyer problem is trivial. Second, even for the ex-ante-constrained single-buyer problem, the analog of the canonical revenue flows can give bounds no better than welfare. Note also that the copies setting, where each buyer is replaced with single-minded copies, is not necessarily an upper bound on utility~\citep{goldnerlundy25}.

\paragraph{Combining Buyers Using Contention Resolution.} Our multi-buyer mechanism just runs the single-buyer ex-ante-constrained mechanism for each buyer independently. The only problem, then, is when buyers request the same item. We resolve this contention by making every item equally available to each buyer with a constant probability, so the buyers have no incentive to prefer one item over another due to the competition.

When buyer values are drawn symmetrically, resolving contention uniformly-at-random is truthful and guarantees at least a $(1-1/e)$-fraction of utility for each buyer (\Cref{sec:service}). Without item symmetry, for general independent distributions, we must carefully choose how often to offer each item, accounting for different competition across items. A partial Lagrangian upper bound gives multipliers that let us control the demand for each item. With some additional reduction in the probability that a request is served, we regain equal competition for requests across items (\Cref{sec:hetservice}).

\paragraph{Related Work.} The question of single-dimensional Bayesian utility maximization was completely resolved by \citet{HartlineR08}. For multidimensional utility maximization, several works have given logarithmic approximations to the upper bound of social welfare. 
\citet{fotakis2016efficient}
give an $O(\log|\Omega|)$ approximation to welfare, where $\Omega$
is the set of feasible outcomes. More recently,
\citet{ezra2026moneyburning} obtain the tight factor
$H_n=\sum_{k=1}^n1/k$ against welfare for arbitrary nonnegative
valuations over a finite outcome space, and give reductions from
truthful welfare approximations to utility approximations. 

Our mechanism is most similar to the Favorites mechanisms of \citet{goldnerlundy25}: each buyer chooses their favorite item, and a fixed single-item mechanism allocates among buyers who choose the same item. Our mechanism differs in that, rather than being prior-free, it uses known distributions to approximate optimal utility. However, our favorite-item mechanism (Appendix~\ref{sec:combined-bounds}) builds on the Favorites approach, showing that appropriately chosen prices
and contention resolution do in fact give a constant-factor approximation to
optimal utility for item-i.i.d.~values, even when buyers have
different distributions. In contrast to our mechanism, \citet{ezra2024optimal} design the VCG-with-copies mechanism that randomizes the number of copies of each item, runs VCG on that
instance, and rounds the allocation and payment to satisfy the
original supply constraints. Their mechanism is also prior-free. 

Our work follows the approach of designing simple mechanisms
to approximate multidimensional revenue~\citep{CHMS,BILW,CDW} and draws on many tools from this literature.
We use the ex ante relaxation to reduce to single-buyer problems,
as in~\citet{Alaei11,CM,chawla2023buy}, and similarly to~\citet{BMMP24}. Our use of single-dimensional mechanisms to bound the utility of a multi-item mechanism parallels the revenue bounds of~\citet{CHMS,CMS15}.

Our use of deletion to bound optimal utility is conceptually related to \citet{EFFGK}, who in the interdependent values setting, decompose welfare using the value remaining after deleting an agent's signal and the corresponding deletion loss, and bound the resulting contributions using different mechanisms. Related per-signal deletion bounds also appear in \citet{EGZ,EFGMMfocs}. We similarly use per-item deletion to obtain single-dimensional quantities, combining bounds on these deletion losses with a bound on the remaining utility.

There has been a recent surge in the study of utility maximization. This includes utility maximization in market regulation~\citep{EssaidiGW24}, online allocation with predictions~\citep{GMT26}, budget-feasible procurement~\citep{EGKT26}, and dynamic or constrained auctions~\citep{BOT25,BOT26}. Related work in additional domains examine waiting times in healthcare~\citep{BravermanCK16}, verification in resource allocation~\citep{lundy2019allocation,patel2022costly}, utility and retention in mobile games~\citep{lundy2024pay}, and combinatorial pen testing~\citep{GH24}.

\section{Preliminaries}\label{sec:setup}

\subsection{The unit-demand environment and money burning}

There are $n\ge1$ buyers and $m\ge1$ items, with one copy of each
item. Buyer $i$ has value vector $v_i=(v_{i1},\ldots,v_{im})\in\Rplus^m$
and unit-demand value $v_i(S)=\max_{j\in S}v_{ij}$ for a nonempty set
$S\subseteq[m]$, with $v_i(\varnothing)=0$.

For each buyer $i$, values are drawn independently from 
known distributions $V_{ij} \sim F_{ij}$ on $\Rplus$; we use $v_{ij}$ for a realization.
We write $\Vvec_i=(V_{i1},\ldots,V_{im})$ for the random value vector, and
$\Vprofile=(\Vvec_1,\ldots,\Vvec_n)$ for the random instance.
We assume $\E[V_{ij}]<\infty$ throughout, so the expected-utility
benchmarks and the normalizations are finite.

For a reported profile $v$, let $x_{ij}(v)$ be the probability that buyer
$i$ receives item $j$ and let $p_i(v)\ge0$ be their expected payment.
We require a \emph{matching-feasible} allocation: 
each buyer receives at
most one item, and each item is allocated to at most one buyer.
In particular, $\sum_jx_{ij}(v)\le1$ and $\sum_ix_{ij}(v)\le1$. For a single-buyer, only the former constraint remains. 

Buyer $i$'s interim allocation and payment rules average over the other
buyers' values:
\[
 \bar x_i(v_i):=\E_{\Vprofile_{-i}}[x_i(v_i,\Vprofile_{-i})],
 \qquad
 \bar p_i(v_i):=\E_{\Vprofile_{-i}}[p_i(v_i,\Vprofile_{-i})].
\]
A mechanism is Bayesian incentive compatible (BIC) if, for every true
type $v_i$ and alternative report $w_i$,
\[
 u_i(v_i):=v_i\cdot\bar x_i(v_i)-\bar p_i(v_i)
 \ge v_i\cdot\bar x_i(w_i)-\bar p_i(w_i).
\]
Interim individual rationality requires $u_i(v_i)\ge0$.
Our constructed mechanisms satisfy the stronger ex-post requirement:
truthful utility is nonnegative in every realized outcome. For
a single buyer, a mechanism $(x,p)$ is \emph{truthful} (in expectation) if, for every
true value vector $v\in\Rplus^m$ and report $w\in\Rplus^m$,
\[
 u(v):=v\cdot x(v)-p(v)\ge v\cdot x(w)-p(w).
\]
Here $x$ and $p$ average over the mechanism's internal randomness, and 
individual rationality requires $u(v)\ge0$ for every $v$. BIC for multi-buyer mechanisms reduces to this condition
when there is only one buyer.

Our objective is the expected total buyer utility
$\sum_i\E_{\Vvec_i}[u_i(\Vvec_i)]$, also called \emph{residual surplus}.
Let $\OPT$ be its maximum over feasible, BIC, interim individually
rational mechanisms with nonnegative burned payments.
Unless stated otherwise, expectations include the mechanism's internal
randomness. Write $U(\mathcal M)$ for the expected utility of any mechanism $\mathcal M$. For multiple buyers this is total expected utility:
$U(\mathcal M)=\sum_iU_i(\mathcal M)$, where
$U_i(\mathcal M):=\E_{\Vvec_i}[u_i^{\mathcal M}(\Vvec_i)]$.

For a real number $z$, write $z^+:=\max\{z,0\}$ for its
positive part.

\subsection{The ex-ante relaxation}
\label{sec:ex-ante-relaxation}

Let $\SingleOPT(\Vvec_i;\mathbf q_i)$ denote the maximum expected
utility of a truthful, individually rational single-buyer mechanism
with nonnegative payments that allocates at most one item and allocates
item $j$ with ex-ante probability at most $q_{ij}$. We call this the ex ante constraint, or cap. Write $\mathbf q_i=(q_{i1},\ldots,q_{im})$ and $Q_i:=\sum_jq_{ij}$
for the sum of these item caps. When we suppress the buyer index, we
write $\mathbf q=(q_1,\ldots,q_m)$ and $Q:=\sum_jq_j$.
For common item caps $q_j=q$, this gives $Q=mq$.
For independent buyer types, the ex-ante relaxation gives
\begin{equation}\label{eq:ex-ante-relaxation}
 \OPT\le
 \sup_{\substack{\mathbf q_i\in[0,1]^m\ \forall i\\
                  \sum_iq_{ij}\le1\ \forall j}}
 \sum_{i=1}^n\SingleOPT(\Vvec_i;\mathbf q_i).
\end{equation}
The optimal multi-buyer mechanism satisfies the item-supply 
constraints for every realized profile, and therefore also 
in expectation, making its induced single-buyer mechanisms 
feasible for the ex-ante relaxation. 

To distinguish actual allocation probabilities from their
caps, for a single-buyer mechanism $A=(x^A,p^A)$ write
$u^A(v):=v\cdot x^A(v)-p^A(v)$ and
\[
 U(A):=\E_{\Vvec}[u^A(\Vvec)],\qquad
 \eaqj(A):=\E_{\Vvec}[x_j^A(\Vvec)],\qquad
 \eaq(A):=(\eaq_1(A),\ldots,\eaq_m(A)).
\]
For a multi-buyer mechanism
$\mathcal M$, write
$\eaq_{ij}(\mathcal M):=\E_{\Vprofile}[x_{ij}^{\mathcal M}(\Vprofile)]$;
this equals $\eaqj(\mathcal M_i)$ for its interim mechanism
$\mathcal M_i$. 
Actual unit-demand allocation
probabilities always satisfy $\sum_j\eaqj(A)\le1$, even when the sum
of the given upper bounds exceeds 1.

We can view each single-buyer mechanism, including an interim mechanism,
as a menu of allocation lotteries and expected payments, with the option
to receive nothing and pay nothing. If a mechanism is specified only on
the distribution's support, extend it to all $v\in\Rplus^m$ by letting
each type choose a utility-maximizing menu option, taking the menu's
closure if needed. We retain the original choices on the support, so this
extension changes neither expected utility nor allocation probabilities.
In particular, $u(v_{-j},0)$ is well-defined even if zero is outside the
support, and nonnegative payments imply $u(0)=0$.

Note that in our ex-ante-constrained single-buyer mechanisms, we assume buyers break ties between purchasing and buying nothing in favor of buying nothing, as a convenience for satisfying ex-ante caps.

\subsection{The single-dimensional benchmark for a single buyer} \label{sec:sigma}

Let $X$ be a nonnegative single-dimensional value with finite
expectation. A single-dimensional mechanism specifies an allocation
probability $a(z)\in[0,1]$ and an expected burned payment $\tau(z)\ge0$
for each report $z \geq 0$. For $q\in[0,1]$, define
\begin{equation}\label{eq:sigma-definition}
 \sig(X;q):=\max\left\{
  \E[Xa(X)-\tau(X)]:
  (a,\tau)\text{ is BIC and IR},\ \E[a(X)]\le q
 \right\}.
\end{equation}
By a slight abuse of notation, we use $\sig(X;q)$ for both
the optimal mechanism and its expected utility. As is the common convention in Bayesian mechanism design, we assume efficient computation of mechanism $\sig(X;q)$. Note that this mechanism can always be implemented
by drawing a (possibly infinite) posted price independently of the value from at most 2 (price) options; see
Appendix~\ref{sec:compute} for this characterization and computational considerations. 

\section{A single buyer ex-ante utility upper bound}\label{sec:common-deletion}

In this section, we focus on the single-buyer's ex-ante-constrained utility and provide a general upper-bound to then approximate via simple mechanisms. This benchmark holds for buyer distributions with independence across items and when utility can be written $v \cdot x - p$. At the end of the section, we impose a unit-demand setting to show how separate item prices can recover the single-dimensional terms in this bound. 

Since there is only one buyer for this section, we drop buyer subscripts $i$ for ease of reading; subscripts $j$ only refer to items. We then bound the utility of \emph{any} given truthful and individually rational mechanism $(x,p)$ with non-negative payments. As the mechanism is fixed, our analysis is in terms of utility $u(v):=v\cdot x(v)-p(v)$, but note that this is a function of the mechanism $(x,p)$ as well. Let $q_j$ be item $j$'s ex ante item cap, and hence an upper bound on the ex-ante probability of allocating item $j$ in $x$.

\textbf{Decomposing Utility:} We first ask: when is it better to use a free lottery instead of a price? In a lottery, the allocation budget $q$ is divided equally among all types, with no cost. In contrast, setting a price allows those willing to pay it to get a larger-than-$q$ allocation at the expense of those below it while still allocating with probability at most $q$ in expectation over the realization of types. The question is then: when is it necessary to allocate extra to high types, treating them differently?

When the types concentrate, there is little benefit to treating them differently, and a lottery can suffice. However, when there is substantial variation, prices can distinguish and benefit the higher types. This is precisely the intuition behind our utility benchmark. We consider two independent draws $\Vvec$ and $\Wvec$, whose average utility has the same expectation $\E_{\Vvec}[u(\Vvec)]$. For each pair of draws, we decompose their average utility into the \emph{overlapping} utility present in both draws---the smaller of their two utilities---and the \emph{variation} from there to their average. Observe that if utility is constant across draws, such as in a point mass distribution, then all utility is in the overlap term. More generally, taking the minimum removes the excess utility present in only one draw. In essence, the overlap term is then the portion of utility for which a lottery gives a good approximation; the variation term is portion for which we need to use prices to differentiate types.

Note that while different from the classic core-tail analysis in revenue, there are multiple parallels. We too divide our bound into a term that one candidate mechanism can cover and the excess, then bounded by the other candidate mechanism. Concentration or lack thereof helps explain the difference between the terms.

Formally, with independent draws $\Vvec$ and $\Wvec$ from the buyer's
value distribution, we get the following decomposition

\begin{equation}\label{eq:utility-base}
\begin{aligned}
 \E_{\Vvec}[u(\Vvec)]
 &=\frac{\E_{\Vvec}[u(\Vvec)]+\E_{\Wvec}[u(\Wvec)]}{2}\\
 &=\underbrace{\E_{\Vvec,\Wvec}
       \bigl[\min\{u(\Vvec),u(\Wvec)\}\bigr]}_{\text{overlap term}}
   +\underbrace{\frac12\E_{\Vvec,\Wvec}
       \bigl[|u(\Vvec)-u(\Wvec)|\bigr]}_{\text{variation term}}.
\end{aligned}
\end{equation}

\textbf{Overlap term:} One reason expected utility may be high and difficult to upper-bound is because of correlation between the values and allocation rules: when a buyer's values are higher, their allocation may also be higher. For the first term in our decomposition, the minimum utility of two independent value draws $\Vvec$ and $\Wvec$, we show that this is upper-bounded by replacing the correlation with independence, using the value and allocation from independent draws.

More precisely, we show that this component of utility is well-approximated by a lottery. Using the AM-GM inequality gives that this is upper-bounded in expectation by allocating to a bidder with value $\Vvec$ with the allocation probability of a fresh draw $\Wvec$, eliminating the correlation between value and allocation. 
Taking expectations, independence lets us upper-bound each component separately by the expected highest item value multiplied by the expected allocation probability, which is at most the sum of the ex-ante allocation caps across items, $Q$.

For ex ante caps stemming from a unit-demand mechanism, $Q \leq 1$, and this is precisely the expected utility from giving the bidder their favorite item for free with probability $Q$. Hence the expected minimum utility of two draws is upper-bounded by the ex-ante constrained free lottery. In the non-i.i.d.~setting, we bound the term using item-specific free lotteries, where each item's probability of allocation is chosen in order to balance utility against demand.

\textbf{Variation term:} The variation term measures the expected difference in utility between independent draws, and we show that this component of utility is well-approximated by a item prices. By symmetry, half of the expected absolute difference between $u(\Vvec)$ and $u(\Wvec)$ equals the expected positive part of $u(\Vvec)-u(\Wvec)$. To bound this difference, we consider the more extreme effect of removing each item's 
value entirely, when positive. That is, we replace the coordinates of $\Wvec$ with those of $\Vvec$, one at a time, and observe that the total increase in utility is at most the sum of the positive increases at each step. Since $W_j \geq 0$, increasing coordinate $j$ from 0 to $V_j$ can only give a larger gain than replacing $W_j$ with $V_j$. 
This gain is exactly the loss from zeroing out coordinate $j$ in the resulting vector. Independence ensures that each intermediate vector has the same distribution as $\Vvec$, so its expected \emph{deletion loss} is $D_j$. Summing over coordinates bounds the variation term by $D$. 

Formally we define the total expected deletion loss $D$ of a 
mechanism as
\begin{equation}\label{eq:deletion}
 D:=\sum_{j=1}^m D_j,
 \qquad D_j:=\E_{\Vvec}\bigl[u(\Vvec)-u(\Vvec_{-j},0)\bigr].
\end{equation} 
We then bound each of these terms to establish our utility benchmark in \Cref{thm:single-buyer-benchmark}. 
In \Cref{lem:deletion-coordinate}, we show how to attain expected utility equal to $D_j$ via an implementable single-item mechanism, hence upper-bounding $D$ by the sum of utility from each optimal ex-ante-constrained single-item mechanism. Finally, in \Cref{lem:combine-independent-prices}, we show that running the per-item mechanisms simultaneously to a unit-demand buyer yields a constant-fraction of the utility from each of the single-item mechanisms separately, where there is no competition among items.

\begin{lemma}\label{lem:weighted-residual}
Consider any truthful, individually rational single-buyer
mechanism with nonnegative payments, expected utility $U$, total expected
deletion loss $D$, and ex-ante probability of allocating item $j$ at most
$q_j$. For any $\ell_1,\ldots,\ell_m \in(0,1]$,
\begin{equation}\label{eq:weighted-residual}
 U\le 
 \E_{\Vvec}\!\left[\max_j \ell_jV_j\right]\sum_j\frac{q_j}{\ell_j} + D.
\end{equation}
\end{lemma}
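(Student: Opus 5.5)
The plan is to start from the decomposition \eqref{eq:utility-base} and bound its two terms separately. I will show that the overlap term is at most $\E[\max_j\ell_jV_j]\sum_jq_j/\ell_j$ and that the variation term is at most $D$.

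\emph{Overlap term.} Since payments are nonnegative and the mechanism is IR, we have $0\le u(v)\le v\cdot x(v)$ for every $v$. Write $M(v):=\max_k\ell_kv_k$ and $S(v):=\sum_jx_j(v)/\ell_j$. Because each $\ell_j\in(0,1]$,
\[
v\cdot x(v)=\sum_j(\ell_jv_j)\frac{x_j(v)}{\ell_j}\le M(v)S(v).
\]
For two independent draws $\Vvec,\Wvec$, the nonnegativity of both utilities lets me use $\min\{a,b\}\le\sqrt{ab}$, and then AM-GM on the regrouped product:
\[
\min\{u(\Vvec),u(\Wvec)\}\le\sqrt{M(\Vvec)S(\Vvec)\,M(\Wvec)S(\Wvec)}\le\tfrac12\bigl(M(\Vvec)S(\Wvec)+M(\Wvec)S(\Vvec)\bigr).
\]
This regrouping is the key step: it pairs the value of one draw with the allocation of the other, which removes the correlation between values and allocations. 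Taking expectations and using independence of $\Vvec$ and $\Wvec$, both cross terms equal $\E[M(\Vvec)]\,\E[S(\Vvec)]$. The ex-ante caps give $\E[S(\Vvec)]=\sum_j\eaqj/\ell_j\le\sum_jq_j/\ell_j$, which yields the first summand of \eqref{eq:weighted-residual}. This step uses only IR and nonnegative payments, not truthfulness.

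\emph{Variation term.} By the symmetry of $\Vvec$ and $\Wvec$, $\frac12\E|u(\Vvec)-u(\Wvec)|=\E[(u(\Vvec)-u(\Wvec))^+]$. I will use the menu view from \Cref{sec:ex-ante-relaxation}: $u(v)$ is the maximum over menu options of $v\cdot x-p$ with $x\ge0$, so $u$ is nondecreasing in each coordinate. Define the hybrids
\[
Z^k:=(V_1,\ldots,V_k,W_{k+1},\ldots,W_m),
\]
so that $Z^0=\Wvec$ and $Z^m=\Vvec$. Telescoping and subadditivity of the positive part give
\[
(u(\Vvec)-u(\Wvec))^+\le\sum_k\bigl(u(Z^k)-u(Z^{k-1})\bigr)^+.
\]
The vectors $Z^k$ and $Z^{k-1}$ differ only in coordinate $k$, where $Z^k$ has $V_k$ and $Z^{k-1}$ has $W_k\ge0$. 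Monotonicity then gives $u(Z^{k-1})\ge u(Z^k_{-k},0)$, so each summand is at most $u(Z^k)-u(Z^k_{-k},0)$, which is itself nonnegative.

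By independence across items and across the two draws, each $Z^k$ is distributed as $\Vvec$. Hence the $k$-th summand has expectation at most $D_k$, and summing over $k$ bounds the variation term by $D$. Adding the two bounds proves the lemma.

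I expect the overlap bound to be the main obstacle, specifically choosing a factorization of $v\cdot x(v)$ for which the $\sqrt{ab}$/AM-GM step decouples value from allocation and still keeps the weights $\ell_j$ intact. Once the product form $M(v)S(v)$ is in place, the rest is routine. The variation part needs only care about monotonicity of $u$ and the correct distributional identity for the hybrids $Z^k$.
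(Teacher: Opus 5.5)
Your proposal is correct and follows the paper's proof essentially step for step. It uses the same overlap/variation split. For the overlap term, it uses the same pointwise bound $u(v)\le(\max_j\ell_jv_j)\sum_jx_j(v)/\ell_j$ with $\min\{a,b\}\le\sqrt{ab}$ and AM-GM to decouple values from allocations. For the variation term, it uses the same coordinate-by-coordinate hybrid telescoping with monotonicity of $u$ to bound it by $D$.
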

 
The parameters $\ell_j$ describe free item lotteries: choosing item $j$ gives it 
to the buyer with probability $\ell_j$, so their expected utility from the best lottery is 
$\max_j \ell_jV_j$.

\begin{proof}

We bound the overlap and variation terms in \Cref{eq:utility-base} separately.

First, observe that utility is of course upper-bounded by welfare as payments are nonnegative. 
Then, for each item $j$, we can rewrite its contribution to the \emph{welfare}, $v_j \cdot x_j(v)$ as the \emph{utility} of a free lottery that selects $j$ with probability $\ell_j$, namely $\ell_j v_j$, scaled by the weight  $x_j(v)/\ell_j$. Summing over items and replacing each lottery's utility by the utility of the buyer's favorite lottery gives an upper bound: the favorite lottery's utility multiplied by the total scaling weight $\sum_jx_j(v)/\ell_j$.

Formally, individual rationality implies
\begin{equation}\label{eq:pointwise}
 0\le u(v)\le \sum_j v_j\cdot x_j(v) = \sum_j \left(\ell_j v_j\right)\cdot \left(\frac{x_j(v)}{\ell_j}\right)
 \le\left(\max_j \ell_jv_j\right)
       \left(\sum_j\frac{x_j(v)}{\ell_j}\right).
\end{equation}

As we know from \citep{goldnerlundy25}, a welfare bound for the full expected utility can be too large to yield a constant approximation. However, by applying \Cref{eq:pointwise} only to the overlap term, which is smaller than $\E[u(\Vvec)]$, we get a small enough bound:

\begin{align}
 \E&_{\Vvec,\Wvec}\bigl[\min\{u(\Vvec),u(\Wvec)\}\bigr] \\
 &\le \E_{\Vvec,\Wvec}\bigl[\sqrt{u(\Vvec) \cdot u(\Wvec)}\bigr] \notag & u(\Vvec),u(\Wvec) \geq0 \\
 &\le \E_{\Vvec,\Wvec}\!\left[
 \sqrt{\left(\max_j \ell_jV_j\right)
       \left(\sum_j\frac{x_j(\Vvec)}{\ell_j}\right)
       \cdot
       \left(\max_j \ell_jW_j\right)
       \left(\sum_j\frac{x_j(\Wvec)}{\ell_j}\right)}\right]\notag & \eqref{eq:pointwise}\\
 &\le \frac12\E_{\Vvec,\Wvec}\!\left[
       \left(\max_j \ell_jV_j\right)\sum_j\frac{x_j(\Wvec)}{\ell_j}
       +\left(\max_j \ell_jW_j\right)\sum_j\frac{x_j(\Vvec)}{\ell_j}
       \right]\notag & \text{AM-GM}\\
&=\frac12\left(
   \E_{\Vvec}\!\left[\max_j \ell_jV_j\right]
   \E_{\Wvec}\!\left[\sum_j\frac{x_j(\Wvec)}{\ell_j}\right]
   +\E_{\Wvec}\!\left[\max_j \ell_jW_j\right]
   \E_{\Vvec}\!\left[\sum_j\frac{x_j(\Vvec)}{\ell_j}\right]
   \right)\notag & \text{independence}\\
 &=\E_{\Vvec}\!\left[\max_j \ell_jV_j\right]
   \E_{\Vvec}\!\left[\sum_j\frac{x_j(\Vvec)}{\ell_j}\right] \notag\\
 &\le\E_{\Vvec}\!\left[\max_j \ell_jV_j\right]\sum_j\frac{q_j}{\ell_j}, & \E_{\Vvec}[x_j(\Vvec)]\le q_j
 \label{eq:free-bound}
\end{align}
where the two equalities use that the two vector draws are independent and identically distributed.

To bound the variation term, we replace the coordinates of $\Wvec$ with those of $\Vvec$, one at a time. Let $H^{(j)}:=(V_1,\ldots,V_j,W_{j+1},\ldots,W_m)$ be the vector after the first $j$ replacements, so $H^{(0)}=\Wvec$ and $H^{(m)}=\Vvec$. The changes in utility telescope, and discarding negative changes gives an upper bound:
\begin{align}
 \frac{1}{2}\E_{\Vvec,\Wvec}\bigl[\big|u(\Vvec)-u(\Wvec)\big|\bigr]
 &=\E_{\Vvec,\Wvec}\bigl[(u(\Vvec)-u(\Wvec))^+\bigr]   \notag & \text{symmetry}\\
 & =\E_{\Vvec,\Wvec}\left[
  \left(u(H^{(m)})-u(H^{(0)})\right)^+
\right]\notag & \text{def. of $H^{(j)}$}\\
 &=\E_{\Vvec,\Wvec}\left[
   \left(\sum_{j=1}^m
   \bigl(u(H^{(j)})-u(H^{(j-1)})\bigr)\right)^+
 \right]\notag & \text{telescoping}\\
 &\le\E_{\Vvec,\Wvec}\left[
   \sum_{j=1}^m
   \bigl(u(H^{(j)})-u(H^{(j-1)})\bigr)^+
 \right]\notag & (\sum_j a_j)^+\le\sum_j a_j^+\\
 &\le\sum_{j=1}^m
   \E_{\Vvec}\bigl[u(\Vvec)-u(\Vvec_{-j},0)\bigr]\notag & \text{monotonicity of $u$}\\
 &=D, & \text{def. of $D$}
 \label{eq:hybrid}
\end{align}
where the final inequality follows because $u$ is coordinatewise nondecreasing and $H^{(j)}$ has the 
same distribution as $\Vvec$. 

Together, these bounds give \Cref{eq:weighted-residual}, thus proving the lemma.
\end{proof}

Having upper-bounded the variation term by the deletion loss, we now show that, for each item $j$, the deletion loss $D_j$ is the expected utility attained by a single-item mechanism with ex-ante allocation probability at most $q_j$. Thus, $D_j$ is upper-bounded by the \emph{optimal} single-item ex-ante-constrained utility $\sig(V_j;q_j)$.

\begin{lemma}\label{lem:deletion-coordinate}
For any truthful, individually rational single-buyer mechanism
with nonnegative payments and utility $u$, let $D_j$ be its
expected deletion loss for item $j$, and let $q_j$ be an upper bound on its
ex-ante probability of allocating that item. Then, for every item $j$,
\[
 D_j\le\sig(V_j;q_j).
\]
\end{lemma}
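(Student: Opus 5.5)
We construct, from the given mechanism, a single-item mechanism for a buyer with value $V_j$ whose expected utility is exactly $D_j$ and whose ex-ante allocation probability is at most $q_j$. The lemma then follows from the definition \eqref{eq:sigma-definition} of $\sig(V_j;q_j)$.

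\textbf{Construction.} The single-item buyer reports $z$. The mechanism internally samples $\Vvec_{-j}$ from the product of the other marginals, which is independent of $V_j$. It then simulates the multi-item mechanism on $(\Vvec_{-j},z)$ and defines
\[
 a(z):=\E_{\Vvec_{-j}}[x_j(\Vvec_{-j},z)],\qquad
 \tau(z):=\E_{\Vvec_{-j}}\bigl[p(\Vvec_{-j},z)-\textstyle\sum_{k\ne j}V_kx_k(\Vvec_{-j},z)+u(\Vvec_{-j},0)\bigr].
\]
Equivalently, the single-item buyer's utility at report $z$ with true value $v_j$ is $\E_{\Vvec_{-j}}[u_{\Vvec_{-j}}(z\to v_j)]-\E_{\Vvec_{-j}}[u(\Vvec_{-j},0)]$. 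In words, the other items' values are treated as known and fixed, and the baseline utility from deleting $j$ is subtracted.

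\textbf{Checks.}
\begin{enumerate}[label=(\roman*)]
 \item \emph{Truthfulness.} For each fixed $\Vvec_{-j}$, truthfulness of $(x,p)$ at true type $(\Vvec_{-j},v_j)$, compared with the deviation $(\Vvec_{-j},z)$, gives $v_jx_j+\sum_{k\ne j}V_kx_k-p$ maximized at $z=v_j$. Averaging over $\Vvec_{-j}$ preserves this, and the subtracted baseline does not depend on $z$.
 \item \emph{Utility.} Truthful utility is $\E_{\Vvec_{-j}}[u(\Vvec_{-j},v_j)-u(\Vvec_{-j},0)]$, which is nonnegative by monotonicity of $u$; this also gives IR. Its expectation over $V_j$ is exactly $D_j$, using independence across items.
 \item \emph{Ex-ante cap.} $\E[a(V_j)]=\E_{\Vvec}[x_j(\Vvec)]\le q_j$, again using independence, so that $(\Vvec_{-j},V_j)\sim\Vvec$.
 \item \emph{Nonnegative payment.} This is the main obstacle, since \eqref{eq:sigma-definition} requires $\tau\ge0$. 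The payment at $z$ equals
 \[
  \E\Bigl[u(\Vvec_{-j},0)-\bigl(\textstyle\sum_{k\ne j}V_kx_k(\Vvec_{-j},z)-p(\Vvec_{-j},z)\bigr)\Bigr].
 \]
 The subtracted quantity is the utility type $(\Vvec_{-j},0)$ would obtain by misreporting $(\Vvec_{-j},z)$. Truthfulness of the original mechanism at type $(\Vvec_{-j},0)$ shows this is at most $u(\Vvec_{-j},0)$, so $\tau(z)\ge0$ pointwise in $\Vvec_{-j}$. Here we use the extension of the mechanism to all of $\Rplus^m$ from Section~\ref{sec:ex-ante-relaxation}, so that $u(\Vvec_{-j},0)$ is the truthful optimum even when $0$ lies outside the support.
\end{enumerate}

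Finally, the single-item pair $(a,\tau)$ has allocations in $[0,1]$. Since payments are only needed in expectation, it is feasible for \eqref{eq:sigma-definition}, which yields $D_j\le\sig(V_j;q_j)$.
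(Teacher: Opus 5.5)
Your proposal is correct and is essentially the paper's proof. It uses the same resampling construction with identical $a$ and $\tau$: expanding $u(\Vvec_{-j},z)$ shows your $\tau(z)$ equals the paper's $z\,a(z)-d_j(z)$. Your checks are also the paper's: truthfulness comes from averaging the original IC constraint over the resampled coordinates, and nonnegative payments come from the IC constraint of the zero-value type $(\Vvec_{-j},0)$, which is exactly the paper's $v_j=0$ specialization of its supporting inequality $d_j(v_j)\ge d_j(b_j)+(v_j-b_j)a(b_j)$.
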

\begin{proof}
For a fixed item $j$, we construct a mechanism that takes 1-dimensional input $V_j$ by resampling fresh draws to use as values for the other items.

That is, draw fresh samples $Z_k\sim F_k$ independently for every item $k\ne j$; we write $\Zvec_{-j}:=(Z_k)_{k\ne j}$. The agent then reports their bid $b_j$ for item $j$, and averaging over these draws 
$\Zvec_{-j}$, we define the function
\[ d_j(b_j):=\mathbb E_{\mathbf Z_{-j}}
 \bigl[u(\mathbf Z_{-j},b_j)-u(\mathbf Z_{-j},0)\bigr]. \]
For a realized value $v_j$, $d_j(v_j)$ is the expected utility lost by deleting item $j$, averaging over the other item values. A single-item mechanism whose truthful utility is $d_j(v_j)$ therefore has expected utility $D_j$. Indeed, taking expectations over $V_j$ gives
\[
 \E_{V_j}[d_j(V_j)]
 =\E_{V_j,\Zvec_{-j}}[u(\Zvec_{-j},V_j)-u(\Zvec_{-j},0)]
 =\E_{\Vvec}[u(\Vvec)-u(\Vvec_{-j},0)]
 =D_j.
\]

We now define the single-item mechanism $(a, \tau)$. The allocation rule $a$ averages the original mechanism's allocation probability $x_j$ over the sampled values of the other items: 
$a(b_j):=\mathbb E_{\mathbf Z_{-j}}
 \bigl[x_j(\mathbf Z_{-j},b_j)\bigr]$.
Then in expectation, it continues to satisfy the ex-ante allocation cap $q_j$, as
\[
 \E_{V_j}[a(V_j)]
 =\E_{V_j,\Zvec_{-j}}[x_j(\Zvec_{-j},V_j)]
 =\E_{\Vvec}[x_j(\Vvec)]\le q_j.
\]
We choose the payment so that truthful reporting gives expected utility $d_j(v_j)$, the average deletion loss for $v_j$: define $\tau(b_j):=b_ja(b_j)-d_j(b_j)$.

Now, for the bidder's true single-dimensional value $v_j \geq 0$ and for any alternate bid $b_j\ge0$, truthfulness of the original single-buyer mechanism that 
implements $u$ gives 
\[
\begin{aligned}
 u(\Zvec_{-j},v_j)
 & = v_j\,x_j(\Zvec_{-j},v_j)
      +\sum_{k\ne j}Z_kx_k(\Zvec_{-j},v_j)-p(\Zvec_{-j},v_j)\\
 &\ge v_j\,x_j(\Zvec_{-j},b_j)
      +\sum_{k\ne j}Z_kx_k(\Zvec_{-j},b_j)-p(\Zvec_{-j},b_j) & \text{IC}\\
 &=\left(b_j\,x_j(\Zvec_{-j},b_j)
      +\sum_{k\ne j}Z_kx_k(\Zvec_{-j},b_j)-p(\Zvec_{-j},b_j)\right)
      +(v_j-b_j)x_j(\Zvec_{-j},b_j)\\
 &=u(\Zvec_{-j},b_j)+(v_j-b_j)x_j(\Zvec_{-j},b_j),
\end{aligned}
\]
where the final two equalities add and subtract the reported value's contribution to
item $j$ and use the definition of $u$.

Subtracting $u(\Zvec_{-j},0)$ from both sides and taking expectations over $\Zvec_{-j}$ gives 
\begin{equation}
 d_j(v_j)\ge d_j(b_j)+(v_j-b_j)a(b_j).
 \label{eq:deletion-support}
\end{equation} 
Since $d_j(0)=0$ and $a(b_j)\ge0$, inequality~\eqref{eq:deletion-support} establishes the required incentive and payment properties:
\begin{description}
  \item[\normalfont\emph{Truthfulness}:] For true value $v_j$ and alternative
  report $b_j$,
  \begin{align*}
    v_j a(v_j)-\tau(v_j)
    &=d_j(v_j)\\
    &\ge d_j(b_j)+(v_j-b_j)a(b_j)\\
    &=v_j a(b_j)-\tau(b_j),
  \end{align*}  \item[\normalfont\emph{Individual rationality}:] Setting $b_j=0$ gives
  \[
    d_j(v_j)\ge d_j(0)+v_j a(0)=v_j a(0)\ge0,
  \]  \item[\normalfont\emph{Nonnegative burned payments}:] Setting $v_j=0$ and using $d_j(0)=0$ gives
  \[
    \tau(b_j)=b_j a(b_j)-d_j(b_j)\ge0.
  \]
\end{description}

We have thus constructed a single-dimensional mechanism with expected utility $D_j$
and ex-ante probability of allocation at most $q_j$. 
Since $\sig(V_j;q_j)$ is the optimal expected utility among such mechanisms, we conclude that
\[
 D_j\le\sig(V_j;q_j).
\]
\end{proof}

Together, the two lemmas immediately give an upper bound on the buyer's
ex-ante-constrained utility stated in \Cref{thm:single-buyer-benchmark}. Neither bound requires the mechanism
to allocate at most one item: both apply whenever utility is linear
in independent type parameters and payments are nonnegative. For example, they hold for any valuations that are additive up to some service constraint.

\begin{theorem}[Single-buyer ex-ante utility bound]
\label{thm:single-buyer-benchmark}
Consider a buyer with independent nonnegative type parameters
$\Vvec=(V_1,\ldots,V_m)$ and a feasible outcome set
$\mathcal F\subseteq[0,1]^m$. A mechanism specifies an expected
outcome $x(b)\in\mathcal F$ and an expected payment $p(b)$ for each
report $b\in\Rplus^m$. A buyer of type $v$ who reports $b$ obtains
expected utility $v\cdot x(b)-p(b)$.

For $\mathbf q\in[0,1]^m$, let
$\SingleOPT_{\mathcal F}(\Vvec;\mathbf q)$ be the supremum expected
utility over truthful, individually rational mechanisms with
nonnegative payments satisfying $\E_{\Vvec}[x_j(\Vvec)]\le q_j$
for every $j$. Then, for any $\ell_1,\ldots,\ell_m\in(0,1]$,
\begin{equation}\label{eq:general-utility-benchmark}
 \SingleOPT_{\mathcal F}(\Vvec;\mathbf q)
 \le
 \sum_{j=1}^m\sig(V_j;q_j)
 +
 \E_{\Vvec}\!\left[\max_j\ell_jV_j\right]
 \sum_{j=1}^m\frac{q_j}{\ell_j}.
\end{equation}
\end{theorem}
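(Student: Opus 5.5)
The plan is to fix an arbitrary truthful, individually rational mechanism $(x,p)$ with nonnegative payments, outcomes in $\mathcal F$, and ex-ante allocation $\E_{\Vvec}[x_j(\Vvec)]\le q_j$. We bound its expected utility $U$ by the right-hand side of \eqref{eq:general-utility-benchmark} and then take the supremum. The bound is assembled from \Cref{lem:weighted-residual}, which gives $U\le \E_{\Vvec}[\max_j\ell_jV_j]\sum_j q_j/\ell_j + D$, and from \Cref{lem:deletion-coordinate}, which gives $D_j\le\sig(V_j;q_j)$ for each $j$. Summing the second bound over $j$ replaces $D$ by $\sum_j\sig(V_j;q_j)$, and the theorem follows.

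Both lemmas were stated for the unit-demand single-buyer setting, so the main work is to check that their proofs only use properties that hold for general $\mathcal F\subseteq[0,1]^m$.

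\emph{Overlap term.} The proof of \Cref{lem:weighted-residual} uses:
\begin{itemize}
\item the pointwise inequality \eqref{eq:pointwise}, which needs only $u(v)\ge 0$ (individual rationality), $p\ge 0$, $x_j\ge 0$ and $\ell_j>0$;
\item AM-GM;
\item independence of the two draws $\Vvec,\Wvec$;
\item the ex-ante caps $\E[x_j]\le q_j$.
\end{itemize}
None of these uses $\sum_j x_j\le 1$.

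\emph{Variation term.} The hybrid argument \eqref{eq:hybrid} needs two facts.
\begin{itemize}
\item Each $H^{(j)}$ is distributed as $\Vvec$. This follows from independence across coordinates.
\item $u$ is coordinatewise nondecreasing. This follows from truthfulness: $u(v)\ge v\cdot x(w)-p(w)\ge w\cdot x(w)-p(w)=u(w)$ when $v\ge w$ coordinatewise, because $x\ge 0$.
\end{itemize}
These give $(u(H^{(j)})-u(H^{(j-1)}))^+\le u(H^{(j)})-u(H^{(j)}_{-j},0)$, since $W_j\ge 0$.

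\emph{Deletion loss.} The resampling construction in \Cref{lem:deletion-coordinate} uses linearity of utility in $v_j$, together with the fact that truthfulness holds against every report $b\in\Rplus^m$, including reports whose $j$th coordinate is $0$. The menu-extension remark in the preliminaries makes $u(\cdot_{-j},0)$ well-defined for general $\mathcal F$ as well. The resulting single-item mechanism $(a,\tau)$ has $a\in[0,1]$ because $x_j\in[0,1]$, and this is all $\sig$ requires.

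The step I expect to be the main obstacle is confirming that nothing in these lemmas secretly relied on the unit-demand structure, in particular the definition of $\sig$ requiring $a(z)\le 1$, and that the supremum in $\SingleOPT_{\mathcal F}$ may not be attained. Both issues are harmless. The first holds because $\mathcal F\subseteq[0,1]^m$. The second holds because the bound applies to every feasible mechanism, so it also bounds the supremum.
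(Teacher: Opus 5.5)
Your proposal is correct and matches the paper's argument. The paper also obtains the theorem directly by combining \Cref{lem:weighted-residual} with \Cref{lem:deletion-coordinate} summed over items, remarking that neither lemma uses the unit-demand constraint, only linearity in independent type parameters and nonnegative payments; your checks ($u$ coordinatewise nondecreasing from truthfulness and $x\ge 0$, $a\in[0,1]$ from $\mathcal F\subseteq[0,1]^m$, and passing to the supremum) fill in precisely the details the paper leaves implicit.
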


The proofs of all three lemmas in this section use independence across items. Lemma~\ref{lem:weighted-residual} also uses nonnegative payments; allowing negative payments substantially changes the utility maximization problem. Both \Cref{lem:weighted-residual,lem:deletion-coordinate} use that utility is linear in the type parameters, but neither requires unit-demand allocations. 

To approximate this bound with the utility of an implementable multidimensional mechanism, we compare mechanisms $\jpp$ for the single-item environment (where the bidder is only interested in item $j$) with a mechanism $\mpp$ for the multi-item environment (where, for unit-demand values, the buyer is only interested in purchasing at most one item, which may or may not be item $j$). The following lemma shows that by combining the single-item posted price mechanisms into a multi-item posted price mechanism with unit-demand constraints, not much utility is lost, and the multi-item mechanism continues to satisfy the given ex ante constraints. To recover the single-item utilities in the benchmark using a feasible mechanism, we recall the posted-price representation of $\sig(V_j;q_j)$ from \Cref{sec:sigma}: it is implemented by posting a randomly drawn price over support at most 2.

\begin{lemma}\label{lem:combine-independent-prices}
Let $V_1, \dots, V_m$ be independent item values,
and $P_j$ be an independent (possibly randomized) nonnegative price $j$ for which $\Pr_{V_j,P_j}(V_j>P_j)\le q_j$, 
which for the one-item setting yields expected utility $U(\jpp)$. If $\sum_jq_j\le1$,
then the mechanism $\mpp$ that offers these prices together and lets 
the buyer choose their utility-maximizing option allocates item $j$ 
with probability at most $q_j$ and has expected utility
\[
 U(\mpp)\ge(1-1/e)\sum_jU(\jpp).
\]
\end{lemma}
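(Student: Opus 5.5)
Fix the prices $P_1,\dots,P_m$, drawn independently of each other and of the values. In $\mpp$ the buyer faces realized prices $p_j$ and buys
\[
 \arg\max_j (V_j-P_j),
\]
provided this maximum is strictly positive; under the paper's tie-breaking convention, a zero-surplus buyer buys nothing.

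\textbf{Ex-ante caps.} Item $j$ is bought only if $V_j>P_j$. Hence
\[
 \eaqj(\mpp)\le\Pr(V_j>P_j)\le q_j.
\]
Truthfulness, individual rationality and nonnegative payments follow because $\mpp$ is a menu of (randomized) posted prices.

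\textbf{Utility.} For each item define the surplus $S_j:=(V_j-P_j)^+$. The $S_j$ are independent nonnegative random variables with $\E[S_j]=U(\jpp)$. The buyer's realized utility in $\mpp$ is exactly $\max_j S_j$, so the lemma reduces to the inequality
\[
 \E\bigl[\max_j S_j\bigr]\ge(1-1/e)\sum_j\E[S_j],
\]
valid whenever $\sum_j\Pr(S_j>0)\le\sum_jq_j\le1$. This is the standard correlation-gap / contention-resolution bound.

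\textbf{Proving the inequality.} I would give a direct coupling argument using the order of items $1,\dots,m$:
\[
 \max_j S_j\ \ge\ \sum_j S_j\,\mathbbm 1\{S_k=0\ \forall k<j\}.
\]
By independence, the expectation of the right-hand side is
\[
 \sum_j \E[S_j]\prod_{k<j}(1-\pi_k),\qquad \pi_k:=\Pr(S_k>0)\le q_k.
\]
Each product satisfies $\prod_{k<j}(1-\pi_k)\ge 1-\sum_k\pi_k\ge0$, but this only gives the factor $1-Q$, which is not enough.

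The main obstacle is improving this to a uniform factor $1-1/e$ without relying on the ordering. I would try first a random-order (fractional) argument. Embed the items in continuous time: item $j$ "arrives" at a uniform time $t_j\in[0,1]$, and it is served by the rule only if no earlier-arriving item had positive surplus. Then item $j$'s weight is
\[
 \int_0^1\prod_{k\ne j}(1-t\pi_k)\,dt\ \ge\ \int_0^1 e^{-tQ}\,dt\ =\ \frac{1-e^{-Q}}{Q}\ \ge\ 1-\frac1e,
\]
using $Q\le1$.

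This bound still holds for the pointwise maximum, because $\max_jS_j$ dominates the surplus of any single selected item, whatever the selection rule. Averaging over the random arrival order (independent of $S$) therefore gives the claim, and summing over $j$ completes the proof.
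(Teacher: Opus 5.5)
Your reduction is correct and matches the paper: $\eaqj(\mpp)\le\Pr(S_j>0)\le q_j$, the buyer's realized utility is $\max_j S_j$ with independent $S_j$, and everything reduces to $\E[\max_j S_j]\ge(1-1/e)\sum_j\E[S_j]$ under $\sum_j\Pr(S_j>0)\le 1$. The gap is in your proof of that inequality.

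The step $\int_0^1\prod_{k\ne j}(1-t\pi_k)\,dt\ge\int_0^1 e^{-tQ}\,dt$ would need $1-x\ge e^{-x}$. The true inequality is $1-x\le e^{-x}$, so the exponential only upper-bounds the product. The claimed bound is in fact false. Take $m=2$, $\pi_1=\epsilon$, $\pi_2=1-\epsilon$. Item 1's weight is $\int_0^1(1-(1-\epsilon)t)\,dt=(1+\epsilon)/2$, which is below $1-1/e$ for small $\epsilon$.

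This is not just a loose estimate. Because the arrival times are i.i.d.\ and independent of the surpluses, your rule simply selects a uniformly random item among those with positive surplus. Take $S_1=M$ with probability $\epsilon$ (else $0$) and $S_2=1$ with probability $1-\epsilon$ (else $0$), and let $\epsilon\to0$ with $\epsilon M\to\infty$. The selected item's expected surplus then approaches half of $\sum_j\E[S_j]$, even though $\E[\max_jS_j]$ captures nearly all of it. So no accounting through this selection rule can certify more than a factor $1/2$. That is exactly what the correct estimate $\prod_{k\ne j}(1-t\pi_k)\ge1-t\sum_{k\ne j}\pi_k\ge1-t$ yields.

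There are two ways to repair the argument.

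\emph{The paper's route.} It never selects a single item. By the layer-cake formula and independence, $\E[\max_jS_j]=\int_0^\infty\bigl(1-\prod_j(1-\Pr(S_j>t))\bigr)\,dt$. Here $1-x\le e^{-x}$ points the right way and gives at least $\int_0^\infty(1-e^{-L(t)})\,dt$, where $L(t):=\sum_j\Pr(S_j>t)\le1$. Concavity gives $1-e^{-s}\ge(1-1/e)s$ on $[0,1]$, and $\int_0^\infty L(t)\,dt=\sum_j\E[S_j]$ finishes the proof.

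\emph{Keeping your selection framework.} Replace uniform selection with a genuinely $(1-1/e)$-balanced rank-one contention resolution scheme, such as Feige--Vondr\'ak or the Kashaev--Santiago rule the paper itself uses in \Cref{sec:hetservice}. Such a rule depends only on which items have positive surplus. By independence it then selects item $j$ with probability at least $1-1/e$ conditional on any positive value of $S_j$, giving $\E[S_j\mathbf 1\{j\text{ selected}\}]\ge(1-1/e)\E[S_j]$ for each $j$.
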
 

To parse this statement, observe that the assumption $\Pr_{V_j,P_j}(V_j>P_j)\le q_j$ corresponds to posted price for the single-item setting satisfying the ex ante constraint $q_j$. The assumption that $\sum_j q_j \leq 1$ holds when the ex ante constraints $q_j$ are given by the multi-item setting, where we have a feasibility constraint that the mechanism allocates at most one item, which is without loss for unit-demand valuations.

\begin{proof}
Let $\mathbf P=(P_1,\ldots,P_m)$ be independent (possibly randomized) nonnegative posted prices for each item. Consider first the single-item setting, where only item $j$ is for sale. Define the surplus from posted price $P_j$ as $S_j:=(V_j-P_j)^+$. Then the expected utility of posted price $P_j$ is $U(\jpp) :=\E_{V_j,P_j}[S_j]$. 

Now consider the multi-item posted price mechanism $\mpp$ with a unit-demand feasibility constraint. That is, post prices $\mathbf{P}$ and let the buyer choose one item. In this setting, let $\eaqj(\mpp)$ denote the ex ante probability that mechanism $\mpp$ allocates item $j$. Observe that it is bounded by the probability of sale in the single-item posted price setting, and thus also the single-item cap $q_j$: 
\[
 \eaqj(\mpp)
 \leq \Prb_{\Vvec,\mathbf S}\!\left(S_j>0\ \text{and}\ S_j=\max_k S_k\right)
 \leq\Prb_{V_j,P_j}(S_j>0)
 \leq q_j.
\]

Finally, we show that the utility of the multi-item mechanism is a good approximation to the sum of that from the single-item mechanisms. For $t\ge0$, define the expected number of items whose surplus exceeds $t$ as 
$L(t):=\sum_j\Prb_{V_j,P_j} (S_j>t)$, which by the single-item ex ante constraints, is at most $\sum_jq_j\le1$. We bound the utility of the multi-item posted prices as follows.
\begin{align*}
 U(\mpp)
 =\E_{\Vvec,\mathbf S}\!\left[\max_jS_j\right] 
 &=\int_0^\infty
       \left(1-\prod_j\bigl(1-\Prb_{V_j,P_j}(S_j>t)\bigr)\right)\,dt & \text{independence}\\
 &\ge\int_0^\infty(1-e^{-L(t)})\,dt & 1-x\le e^{-x}\\
 &\ge(1-1/e)\int_0^\infty L(t)\,dt & 1-e^{-s}\ge(1-1/e)s, \, s \in [0,1]\\
 & = (1-1/e) \sum_j \int_0^\infty \Prb_{V_j,P_j}(S_j>t)\,dt \\
 &=(1-1/e)\sum_jU(\jpp).
\end{align*}
\end{proof}

\section{A constant-factor approximation for i.i.d.~buyers}\label{sec:single}

In this section, we provide a constant-factor approximation for the i.i.d.~setting: when every buyer's value for every item is drawn independently and identically from the same distribution, that is, $V_{ij}\sim F$ independently for all $i,j$, and let $V\sim F$.
First, we use the bound from Section~\ref{sec:common-deletion} to construct an approximately optimal single-buyer mechanism (\Cref{sec:iid-single}). 
Posted item prices recover the single-item utility terms, and free
allocation recovers the overlap term. This construction serves as a warm-up for \Cref{sec:non-iid-single}, which uses a similar mechanism, and also follows the steps of a two-part bound. 
 We then combine buyers
through contention resolution (\Cref{sec:service}), first stating the
general condition that preserves incentives and then verifying it for
uniform contention resolution in the i.i.d.~setting. Finally, we simplify
the single-buyer mechanism, improve its guarantee, and extend the result
to buyers whose values are i.i.d.~across items but may have different
distributions across buyers (\Cref{sec:combined-bounds}).

We use $Q$ for a buyer's total ex-ante allocation cap.  For i.i.d.~item values,
symmetrizing a mechanism over item labels preserves expected utility and
makes its allocation probabilities equal across items, hence we set a common
item cap at $q:=Q/m$, with $\mathbf q=(q,\ldots,q)$. Thus the optimum
under a total cap $Q$ equals $\SingleOPT(\Vvec;\mathbf q)$, using the
item-cap notation of \Cref{sec:setup,sec:common-deletion}.
We leave $Q\in[0,1]$ arbitrary in the single-buyer argument and set
$Q=\min\{1,m/n\}$ when combining i.i.d.~buyers.

\subsection{The single-buyer mechanism}\label{sec:iid-single}

We build directly on \Cref{thm:single-buyer-benchmark}. 
For any total ex-ante cap $Q\in[0,1]$, with common item cap $q:=Q/m$,
define the single-buyer mechanism $\singleiidmech$ as follows:

\begin{enumerate}
 \item With probability $e/(2e-1)$, run $\mpp$: Draw each item's price independently
 according to $\sig(V;q)$. Let the buyer choose the item with the largest 
 positive utility (if any) and charge its price.
 \item With probability $(e-1)/(2e-1)$, run $\mfree$: Let the buyer choose their favorite item
 and allocate it to them with probability $Q$ free of charge.
\end{enumerate}

To implement $\mpp$, we use the posted-price representation of $\sig(V;q)$
from \Cref{sec:sigma}. This gives a distribution over at most two prices, and which we note is efficiently computable (see Appendix~\ref{sec:compute}).

Note that $\mfree$ is not simply $\mpp$ with the price set to 0, as it allocates the favorite item with the combined probability $Q$ rather than only $q$, which still satisfies the per-item cap $q$ in expectation over the buyer's favorite item.

We now show that the utility of our mechanism $\singleiidmech$ is a constant-factor approximation to the optimal single-buyer ex-ante-constrained utility: its components bound the terms discussed in \Cref{thm:single-buyer-benchmark}.

\begin{lemma}\label{lem:single}
For i.i.d.~item values and every $Q\in[0,1]$, with $q=Q/m$ and
$\mathbf q=(q,\ldots,q)$, mechanism $\singleiidmech$ is truthful,
ex-post individually rational, has nonnegative payments, allocates each
item with ex-ante probability at most $q$, and satisfies
\[
 \utilsingleiidmech\ge\frac{e-1}{2e-1}\SingleOPT(\Vvec;\mathbf q).
\]
\end{lemma}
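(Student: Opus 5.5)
The plan is to check the mechanism properties of $\singleiidmech$ directly and then compare its two branches to the two terms of \Cref{thm:single-buyer-benchmark}, taking all $\ell_j=1$.

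\textbf{Properties.} Both branches are menus offered before the buyer reports, so any randomization over the branch or the prices happens independently of the report.
\begin{itemize}
\item In $\mpp$ the buyer picks the best of the posted items, which is truthful, ex-post IR (they buy only at positive surplus), and charges nonnegative prices.
\item In $\mfree$ the buyer picks their favorite item and pays nothing, so truthfulness and IR are immediate.
\end{itemize}
For the ex-ante caps, \Cref{lem:combine-independent-prices} (which applies since $mq=Q\le1$) gives that $\mpp$ allocates each item with probability at most $q$. In $\mfree$, item values are i.i.d., so with random tie-breaking the favorite item is uniform over the $m$ items. Each item is therefore allocated with probability $Q/m=q$. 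Any mixture of the two branches also respects the caps $q$.

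\textbf{Utility.} Setting $\ell_j=1$ in \Cref{thm:single-buyer-benchmark} gives
\[
\SingleOPT(\Vvec;\mathbf q)\le m\,\sig(V;q)+Q\,\E\!\left[\max_jV_j\right].
\]
The two branches cover the two terms:
\begin{itemize}
\item $U(\mfree)=Q\,\E[\max_jV_j]$ exactly, since the favorite item is allocated with probability $Q$.
\item By \Cref{lem:combine-independent-prices}, with each $P_j$ drawn from the two-price representation of $\sig(V;q)$, we get $U(\mpp)\ge(1-1/e)\,m\,\sig(V;q)$.
\end{itemize}
Let $A:=m\,\sig(V;q)$ and $B:=Q\,\E[\max_jV_j]$. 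The mixture then yields
\[
\utilsingleiidmech\ \ge\ \tfrac{e}{2e-1}\left(1-\tfrac1e\right)A+\tfrac{e-1}{2e-1}B\ =\ \tfrac{e-1}{2e-1}(A+B)\ \ge\ \tfrac{e-1}{2e-1}\SingleOPT(\Vvec;\mathbf q).
\]
This is why the weights $e/(2e-1)$ and $(e-1)/(2e-1)$ were chosen: they equalize the two coefficients.

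\textbf{Main obstacle.} The delicate step is the ex-ante cap for $\mfree$. Allocating the favorite item with probability $Q$ rather than $q$ respects the per-item cap only because of symmetry. Ties, including atoms of $F$ and the all-zero vector, must be broken uniformly at random, or else the buyer should receive nothing when all values are zero. Either convention keeps the per-item allocation probability equal to $Q/m$ while preserving truthfulness and $U(\mfree)=Q\,\E[\max_jV_j]$.

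A secondary check is that the theorem's benchmark with a common cap $q$ is the right comparator: the lemma is stated directly against $\SingleOPT(\Vvec;\mathbf q)$, so no symmetrization argument is needed. We also need that \Cref{lem:combine-independent-prices} applies to randomized prices with $\Pr(V_j>P_j)\le q$, which the $\sig$ posted-price representation from \Cref{sec:sigma} provides.
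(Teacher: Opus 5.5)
Your proof is correct and follows the paper's argument. Like the paper, you apply \Cref{thm:single-buyer-benchmark} with $\ell_j=1$, bound $m\,\sig(V;q)\le\frac{e}{e-1}U(\mpp)$ via \Cref{lem:combine-independent-prices}, identify $Q\,\E[\max_jV_j]$ with $U(\mfree)$, and use the mixture weights $e/(2e-1)$ and $(e-1)/(2e-1)$ to equalize the two coefficients. Your extra care about symmetric tie-breaking for the cap of $\mfree$ elaborates the paper's one-line remark. One small correction: if the buyer receives nothing at the all-zero vector, each item's allocation probability is at most $Q/m$ rather than exactly $Q/m$, but "at most" is all the lemma needs.
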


\begin{proof}
Apply \Cref{thm:single-buyer-benchmark} with $\ell_j=1$ and $q_j=q$ for
every item. Since $\sum_jq_j=Q$, we obtain
\begin{equation}\label{eq:iidsingleDF}
 \SingleOPT(\Vvec;\mathbf q)
 \le m\sig(V;q)+Q\E[\max_jV_j]
 =m\sig(V;q)+\UF.
\end{equation}
The equality holds because $\mfree$ gives the buyer their favorite item
with probability $Q$, independently of their values.
The prices in $\mpp$ attain $\sig(V;q)$ for each item separately, and as $mq=Q\le1$, then \Cref{lem:combine-independent-prices} gives
\begin{equation}\label{eq:deletion-price}
 m\sig(V;q)\le\frac{e}{e-1}\UP.
\end{equation}
Combining \Cref{eq:iidsingleDF,eq:deletion-price} yields
\[
 \SingleOPT(\Vvec;\mathbf q)\le\frac{e}{e-1}\UP+\UF.
\]
Using the probabilities of $\singleiidmech$ for running each sub-mechanism, we get
\[
 \utilsingleiidmech
 =\frac{e}{2e-1}\UP+\frac{e-1}{2e-1}\UF
 \ge\frac{e-1}{2e-1}\SingleOPT(\Vvec;\mathbf q).
\]
Both sub-mechanisms satisfy the item caps and let the buyer choose her
utility-maximizing option. Their combination is therefore
truthful, ex-post individually rational, and has nonnegative payments.
\end{proof}

We then conclude that the better (utility-maximizing) of these two mechanisms gives a constant-factor approximation. 

\begin{corollary}
Under the assumptions of \Cref{lem:single},
\[
 \max\{\UP,\UF\}
 \ge \frac{e-1}{2e-1}\SingleOPT(\Vvec;\mathbf q).
\]
\end{corollary}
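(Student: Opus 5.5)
The corollary follows from the proof of \Cref{lem:single}, either by the convex-combination argument or directly from the intermediate bound. The plan is to rerun that proof only as far as the bound
\[
 \SingleOPT(\Vvec;\mathbf q)\le\frac{e}{e-1}\UP+\UF ,
\]
which comes from combining \Cref{thm:single-buyer-benchmark} (with $\ell_j=1$) and \Cref{lem:combine-independent-prices}. We then compare its right-hand side to the maximum of the two sub-mechanism utilities.

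The simplest route is to observe that $\utilsingleiidmech$ is a convex combination of $\UP$ and $\UF$. The weights $e/(2e-1)$ and $(e-1)/(2e-1)$ are nonnegative and sum to one. Hence
\[
 \max\{\UP,\UF\}\ge\frac{e}{2e-1}\UP+\frac{e-1}{2e-1}\UF=\utilsingleiidmech .
\]
\Cref{lem:single} then gives the claimed $\frac{e-1}{2e-1}$ fraction of $\SingleOPT(\Vvec;\mathbf q)$.

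Alternatively, and as a sanity check, we can argue directly from the intermediate bound. Write $M:=\max\{\UP,\UF\}$. The bound gives $\SingleOPT(\Vvec;\mathbf q)\le\bigl(\frac{e}{e-1}+1\bigr)M=\frac{2e-1}{e-1}M$, which rearranges to the same inequality.

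We should also confirm that each of $\mpp$ and $\mfree$, run alone, is a valid mechanism for the constrained problem: truthful, ex-post IR, nonnegative payments, and respecting the caps $q$. This was already verified inside \Cref{lem:single}, so the maximum is taken over feasible mechanisms. The only thing to watch is that the stated inequality concerns utilities alone, so no new feasibility argument is actually required. I do not expect any real obstacle here; the corollary is immediate.
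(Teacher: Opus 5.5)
Your proposal is correct and matches the paper, which states the corollary as an immediate consequence of \Cref{lem:single}: $\utilsingleiidmech$ is a convex combination of $\UP$ and $\UF$, so it is at most their maximum. Your direct check via $\SingleOPT(\Vvec;\mathbf q)\le\frac{e}{e-1}\UP+\UF\le\frac{2e-1}{e-1}\max\{\UP,\UF\}$ is an equally valid one-line alternative.
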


All of our guarantees then apply to both the randomized $\singleiidmech$ and a replacement that uses only the utility-maximizing sub-mechanism.

\paragraph{Simpler, Stronger, and More General Guarantees.} The above mechanism and analysis is provided for its simplicity and its parallels to \Cref{sec:non-iid-single}. However, when buyers have identical, symmetric distributions for their value for each item, that is, in the \emph{item-i.i.d.}~setting, we can use an even simpler mechanism: compute the 1-item (randomized) posted price for the buyer's distribution of their value for their favorite item. Post this price for every item. For fully i.i.d.~buyers, uniform
contention resolution gives expected utility at least
$(1-1/e)\OPT/2$. For buyers with different distributions, use the
contention rule and computed caps in \Cref{sec:buyer-specific} to
obtain the guarantee of \Cref{thm:iid-stronger}.

\subsection{The multi-buyer mechanism}\label{sec:service}

Simultaneously running $\singleiidmech$ for each bidder may result in a single item being allocated to multiple bidders. In this section, we resolve the contention caused by multiple requests for an item, and regain matching-feasibility in the multi-buyer mechanism.

We first state the contention resolution argument used here and in
\Cref{sec:non-iid-single}. Run each single-buyer mechanism independently,
treating its allocation as tentative and collecting payment only if the
buyer ultimately receives the item. If a buyer is tentatively allocated item $j$, we call this a ``request''     to item $j$. For a free lottery with allocation
probability $Q$, toss the allocation coin after the buyer chooses an item
and before resolving contention; convert this to a request only if
the coin succeeds. Buyers make their choices without observing the
tentative allocations or random outcomes for other buyers.

We follow a common approach of contention resolution in showing that each request is served with constant probability $c_i$.

\begin{observation}\label{lem:utility-preserving-contention}
Suppose each buyer $i$'s request is served with probability $c_i$,
regardless of their report or tentative outcome, and charges their only when served.
This multiplies the utility of every report by $c_i$, preserving incentive
compatibility and giving total expected utility $\sum_i c_i U_i$, where
$U_i$ is their utility before contention.
Serving at most one request per item ensures feasibility; individual
rationality and nonnegative payments are inherited from the single-buyer mechanisms.
\end{observation}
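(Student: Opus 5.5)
The plan is to fix a buyer $i$ and compare the multi-buyer mechanism with buyer $i$'s single-buyer mechanism, both for each possible report and in expectation. The key modeling point is that the event ``buyer $i$'s request is served'' is generated by randomness independent of buyer $i$'s report and tentative outcome. This randomness consists of the other buyers' values, their internal coins, and the contention coins. Buyers choose without observing others' outcomes, and each single-buyer mechanism runs independently. So for any report $b_i$ of a buyer with true type $v_i$, the tentative allocation $x_i(b_i)$ and the serving event are independent. The realized allocation is served only when both occur, which gives interim allocation $c_i\,x_i(b_i)$. Payment is charged only when served, so by the same independence the interim payment is $c_i\,p_i(b_i)$.

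Next, interim utility from any report is $v_i\cdot c_i x_i(b_i)-c_ip_i(b_i)=c_i\,u_i(v_i;b_i)$. Here $u_i(v_i;b_i)$ is the utility of reporting $b_i$ in the single-buyer mechanism. Since $c_i\ge0$ is a constant not depending on $b_i$, scaling preserves the ordering of reports, so truthfulness of the single-buyer mechanism carries over to BIC. Taking expectations over $\Vvec_i$ gives $U_i(\mathcal M)=c_iU_i$, and summing over buyers gives $\sum_ic_iU_i$.

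For feasibility, each buyer makes at most one request, because the single-buyer mechanisms are unit-demand. For free lotteries, the coin is tossed before contention, so at most one item is requested. Serving at most one request per item then yields a matching. Ex-post individual rationality follows outcome by outcome. An unserved buyer gets nothing and pays nothing. A served buyer receives exactly the item and payment of its single-buyer realized outcome, which is ex-post IR with nonnegative payment.

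The main subtlety is the independence claim. I must check that the serving probability does not depend on which item buyer $i$ requests; otherwise the factor $c_i$ would vary with the report and break the scaling. This is exactly why the hypothesis requires service probability $c_i$ regardless of report or tentative outcome. The statement only asserts the consequence given this hypothesis, and establishing the hypothesis for uniform contention resolution is deferred to later results.
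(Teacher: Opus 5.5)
Your proof is correct and follows the same reasoning the paper relies on. The observation has no separate proof; its content is the scaling of both allocation and payment by $c_i$, feasibility from one request per buyer and at most one served request per item, and ex-post IR inherited outcome by outcome, exactly as you lay out. One point of precision: the service event is not literally driven by randomness independent of which item is requested, since in Section~\ref{sec:hetservice} the selection probability $c_{ij}$ depends on $j$ and is equalized only by the rescaling $(1-1/e)/c_{ij}$. So the independence in your first paragraph should be derived from the hypothesis that the conditional service probability is $c_i$ for every report and tentative outcome, as your final paragraph correctly recognizes.
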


\Cref{lem:utility-preserving-contention} does not require
identical distributions, only independence.

To identify $c_i$'s for the i.i.d.~case, we leverage that by item symmetry, each item has the same distribution of competing requests. Resolving conflicts uniformly-at-random then gives the same $c_i$ for each item. 

We now instantiate the argument for i.i.d.~buyers. Fix
\[
 Q:=\min\{1,m/n\},\qquad q:=Q/m=\min\{1/m,1/n\},
 \qquad \mathbf q=(q,\ldots,q).
\]
The multi-buyer mechanism $\iidmech$ is as follows:
\begin{enumerate}
 \item Run $\singleiidmech$ independently for each buyer, treating its
 allocation as a request. In the free sub-mechanism, a request is made only
 if the probability-$Q$ allocation coin toss succeeds.
 \item For each item with at least one request, serve one requester
 uniformly at random and charge them the corresponding price, if any.
\end{enumerate}
The allocation caps ensure expected demand of at most one per item. We next show that
uniform contention resolution serves each buyer's request with probability at
least $1-1/e$. 
By item-symmetry, a buyer faces the same competition no matter which item they select. The given (optimal) ex-ante allocation caps ensure that the expected demand for each item is at most 1. Together, this implies that uniform contention resolution gives each request a constant probability of succeeding, as needed by \Cref{lem:utility-preserving-contention}.

\begin{restatable}{theorem}{iidsimple}\label{thm:iid-simple}
For i.i.d.~values, mechanism $\iidmech$ is BIC, ex-post
individually rational, has nonnegative payments, respects matching-feasibility,
and achieves expected utility
\[
 U(\iidmech) \geq \frac{e-1}{2e-1}\left(1-\frac{1}{e}\right)\OPT.
\]
\end{restatable}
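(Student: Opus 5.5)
The proof chains three earlier results: the ex-ante relaxation \eqref{eq:ex-ante-relaxation}, the single-buyer guarantee of \Cref{lem:single}, and \Cref{lem:utility-preserving-contention}. What has to be supplied is the uniform service probability $c_i\ge 1-1/e$.

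\emph{Step 1: reduce the relaxation to symmetric caps.} Symmetrize the relaxation over item labels and buyer labels. With i.i.d.\ values, $\SingleOPT(\Vvec;\mathbf q_i)$ is concave in $\mathbf q_i$, since mixing two mechanisms mixes their caps and utilities. It is also invariant under permutations of items, and all buyers share the same function. Averaging an optimal feasible cap profile over all permutations of items and buyers therefore gives a profile with $q_{ij}=q'$ for every $i,j$ and value at least as large. This profile satisfies $nq'\le 1$, and $mq'\le 1$ is without loss for unit-demand buyers. Hence $q'\le q=\min\{1/m,1/n\}$, and monotonicity of $\SingleOPT$ in the caps gives
\[
\OPT\le n\,\SingleOPT(\Vvec;\mathbf q).
\]

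\emph{Step 2: apply the single-buyer guarantee.} By \Cref{lem:single}, each buyer's copy of $\singleiidmech$ has utility at least $\frac{e-1}{2e-1}\SingleOPT(\Vvec;\mathbf q)$. It is truthful, ex-post IR, uses nonnegative payments, and requests each item with ex-ante probability at most $q$. By item symmetry of the mechanism and the values, the request probability is the same for every item, so it is at most $q$ for each one.

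\emph{Step 3: contention resolution (the main obstacle).} Fix buyer $i$ and condition on $i$ requesting item $j$. The other $n-1$ buyers act independently of $i$. Each of them requests $j$ with probability $r\le q\le 1/n$, independently of one another. The number $K$ of competitors is therefore Binomial$(n-1,r)$, and it has this distribution whichever item $i$ picks and whatever $i$ reports. Under uniform tie-breaking, $i$ is served with probability
\[
\E\!\left[\tfrac{1}{1+K}\right]=\frac{1-(1-r)^n}{nr}.
\]
This is decreasing in $r$, so at $r=1/n$ it is at least $1-(1-1/n)^n\ge 1-1/e$. The delicate point is that this service probability must not depend on buyer $i$'s report or chosen item, which is exactly what \Cref{lem:utility-preserving-contention} requires. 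That holds because of item symmetry and because the other buyers' choices are independent of $i$. Since $c_i$ is uniform across reports and exceeds $1-1/e$, one can additionally reject each request with a compensating probability so that $c_i=1-(1-1/n)^n$ exactly, or simply note that the uniform value already suffices.

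\emph{Step 4: conclude.} \Cref{lem:utility-preserving-contention} now yields BIC, ex-post IR, nonnegative payments, and matching feasibility, since each item is served to at most one requester. Combining the steps,
\[
U(\iidmech)\ge n\left(1-\frac{1}{e}\right)\frac{e-1}{2e-1}\,\SingleOPT(\Vvec;\mathbf q)\ge \frac{e-1}{2e-1}\left(1-\frac{1}{e}\right)\OPT.
\]
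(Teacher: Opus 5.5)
Your proposal is correct and follows the paper's proof. Symmetry together with \Cref{lem:single} gives pre-contention utility at least $\frac{e-1}{2e-1}\OPT$; since $q\le 1/n$, uniform contention serves every request with a report-independent probability of at least $1-1/e$; and \Cref{lem:utility-preserving-contention} finishes. The differences are only in detail: where the paper cites the Favorites argument of \citet{goldnerlundy25}, you compute the service probability directly as $\E[1/(1+K)]=(1-(1-r)^n)/(nr)\ge 1-(1-1/n)^n$, and you spell out the symmetrization behind $\OPT\le n\,\SingleOPT(\Vvec;\mathbf q)$, which the paper leaves implicit.
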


\begin{proof}
Symmetry and \Cref{lem:single} give that before contention resolution, utility is at least $\frac{e-1}{2e-1}\OPT$.
Since $q\le1/n$, the Favorites argument of \citet[Theorem~1]{goldnerlundy25} gives that each item is available to be served to each buyer with probability at least $1-1/e$.
The result follows from \Cref{lem:utility-preserving-contention}.
\end{proof}

\section{Independent item values across multiple buyers}
\label{sec:non-iid-single}

We now allow every $V_{ij}\sim F_{ij}$ to have a different distribution,
retaining independence across items and buyers. The single-buyer bounds
of \Cref{sec:common-deletion} still apply and will drive the approximation 
mechanism. We emphasize that, even with access to the optimal mechanism's 
ex-ante probability caps\footnote{The ex-ante relaxation is typically formulated as a convex program, assuming computational access to the single-buyer benchmarks~\citep{Alaei11}. This access is nontrivial: even with independent item values, the linear program can be exponential in the number of items.}, it is not immediately clear how one can efficiently construct single-buyer mechanisms that recovers the benchmark's free-allocation 
term and abides by the desired item caps. 

We reuse the two sub-mechanisms from  \Cref{sec:iid-single}: posted item prices and free item lotteries. Without item symmetry, certain items may have higher demand and thus contention between buyers. We therefore choose the prices and lottery probabilities using a demand weights
for each item, increasing it when that item receives too many requests. Averaging these choices gives the independent single-buyer
mechanisms needed for the contention argument of \Cref{sec:hetservice}. 

\subsection{Demand weights and balancing item requests} \label{sec:het-single}

For the multi-buyer optimization problem of $\OPT$, consider the partial Lagrangian primal that moves the ex-ante allocation constraints into the objective, multiplying each by dual variables $\lambda_j$. In essence, for any mechanism $\mathcal{M}$, this objective penalizes (or rewards) over-allocation (or under-allocation) of item $j$ with weight $\lambda_j$. For a fixed choice of dual variables, we can lower-bound the item-feasibility term (corresponding to the constraint that there's at most 1 copy of each item) with the ex-ante allocation probabilities $\eaq(C)$ of any feasible multi-buyer unit-demand mechanism $C$, since $\sum_i \eaq_{ij}(C) \leq 1$ for all $j$. This only moves the baseline by a constant. We then decompose it into single-buyer terms. Then for a fixed single-buyer $i$, this results in our single-buyer proxy-Lagrangian $\G^i(\mathcal{M}_i,C_i)$. 
That is, 
\begin{align}
 \mathcal{L}_{\lambda}(\mathcal{M}) &= \sum_i U_i(\mathcal{M}) +\sum_j\lambda_j\left(1-\sum_i\eaq_{ij}(\mathcal{M})\right)
       \notag \\
 &\ge \sum_i U_i(\mathcal{M}) +\sum_j\lambda_j\left(\sum_i\eaq_{ij}(C)-\sum_i \eaq_{ij}(\mathcal{M})\right)
       \notag & \text{$\sum_i\eaq_{ij}(C) \le1$, $\lambda_j\ge0$}\\
 &:=\sum_i \G^i(\mathcal{M}_i,C_i).                   
       \notag & \text{definition of $\G$}
\end{align}
where a buyer subscript or superscript
$i$ on a mechanism or objective selects its single-buyer component. Note that the value distributions are fixed throughout this section, so we suppress the dependence on $\Vvec$. 
We use $\G^i$ as a comparison term for the utility from $\mathcal M_i$ and $C_i$. It is the utility from $\mathcal M_i$, adjusted for how its demand differs from $C_i$'s demand, and we use it to move between the contribution of multi-item and single-item mechanisms. 

We focus on the single-buyer problem, fixing one buyer $i$ and suppress their index. Then for any single-buyer unit-demand mechanisms $\mathcal{M}$ and $C$, we define this single-buyer Lagrangian lower bound as: 
\begin{equation*}
 \G(\mathcal{M}, C):=U(\mathcal{M})-\sum_j\lambda_j \eaqj(\mathcal{M}) + \sum_j\lambda_j \eaqj(C).
 \end{equation*}
 Note that $C$ only requires unit-demand feasibility, that $\sum_j \eaqj(C) \leq 1$. The cross-buyer item-feasibility is only needed for the Lagrangian lower bound above, and not for future comparison steps. We term this objective mechanism $\mathcal{M}$'s \emph{demand-adjusted} utility for $C$, and we refer to the dual variables $\lambda_j$ as \emph{demand weights}. 
 Here, $C$ is the mechanism whose utility we want to approximate using $\M$. Its allocation probabilities provide a comparison baseline for $\M$, but do not serve as caps for $\M$.  To give the multi-buyer guarantee, we will choose comparison mechanisms whose total utility is at least $\OPT$, and whose aggregate demands satisfy item feasibility (where supply is at most 1). We use the comparison $C$ only in the analysis: the construction of $\M$ depends on the value distributions and $\lambda$, not on $C$ or its caps.
 
Parallel to \Cref{sec:iid-single}, we design a single-buyer mechanism with two sub-mechanisms to bound both parts of \Cref{thm:single-buyer-benchmark}. 
Our mechanism $\singlehetmech$ mixes posted item prices $\mpp$ with free item lotteries $\mfree$, where prices and lottery probabilities are chosen using the demand weights. See the formal definition at the end of the section. 

At any fixed demand weights $\mathbf{\lambda}$, the following lemma gives one mechanism whose demand-adjusted utility is comparable to that of any other mechanism.

\begin{lemma}\label{lem:non-iid-local}
For independent item values and any nonnegative demand-weight vector
$\lambda=(\lambda_1,\ldots,\lambda_m)$, there exists a BIC, ex-post
individually rational single-buyer unit-demand mechanism $A$ with
nonnegative payments such that, for every BIC, individually rational
single-buyer unit-demand mechanism $C$ with nonnegative payments and
ex-ante allocation probabilities $\eaqj(C)$,
\[
 \G(A, C)
 \ge \frac{U(C)}{2+e/(e-1)}.
\]
\end{lemma}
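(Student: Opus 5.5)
The plan is to take $A$ to be a fixed randomized mixture of two sub-mechanisms, as in \Cref{sec:iid-single}: posted item prices $\mpp$ and free item lotteries $\mfree$. Both are chosen using only the distributions and $\lambda$, never $C$. The proof rests on two observations.
\begin{itemize}
\item For a fixed $C$, the map $\mathcal M\mapsto\G(\mathcal M,C)$ is affine in mixtures. Indeed, $U$ and $\eaq$ are linear, and $\lambda\cdot\eaq(C)$ is a constant. So a mixture placing weight $\frac{e/(e-1)}{2+e/(e-1)}$ on $\mpp$ and $\frac{2}{2+e/(e-1)}$ on $\mfree$ satisfies
\[
 \G(A,C)=\frac{\tfrac{e}{e-1}\,\G(\mpp,C)+2\,\G(\mfree,C)}{2+e/(e-1)} .
\]
\item By \Cref{thm:single-buyer-benchmark}, applied to $C$ with caps $q_j:=\eaqj(C)$ (so $\sum_j q_j\le1$), we have $U(C)\le \sum_j\sig(V_j;q_j)+R\sum_j q_j/\ell_j$ for any $\ell$, where $R:=\E[\max_j\ell_jV_j]$.
\end{itemize}
It therefore suffices to prove two bounds, valid for every $C$:
\[
 \G(\mpp,C)\ge(1-1/e)\sum_j\sig(V_j;\eaqj(C))
 \qquad\text{and}\qquad
 \G(\mfree,C)\ge\tfrac12\, R\sum_j\eaqj(C)/\ell_j .
\]

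\emph{Prices.} Let $\mathbf q^*$ maximize the concave program $(1-1/e)\sum_j\sig(V_j;q_j)-\sum_j\lambda_jq_j$ over $\mathbf q\in[0,1]^m$ with $\sum_jq_j\le1$. Concavity of $\sig(V_j;\cdot)$ in $q$ follows by mixing mechanisms. Post independent prices implementing $\sig(V_j;q^*_j)$ for each $j$. By \Cref{lem:combine-independent-prices}, $U(\mpp)\ge(1-1/e)\sum_j\sig(V_j;q^*_j)$ and $\eaqj(\mpp)\le q^*_j$. Since the penalty weights are nonnegative,
\[
 \G(\mpp,C)\ge(1-1/e)\sum_j\sig(V_j;q^*_j)-\lambda\cdot\mathbf q^*+\lambda\cdot\eaq(C).
\]
The caps $\eaq(C)$ are feasible for the program, so optimality of $\mathbf q^*$ gives the first bound. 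The scaling $(1-1/e)$ inside the program is what lets the $\lambda$ terms cancel without losing a factor.

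\emph{Lotteries.} Choose $\ell_j=\min\{1,R/(2\lambda_j)\}$, where $R=\E[\max_j\ell_jV_j]$ is a fixed point. In $\mfree$ the buyer picks $\arg\max_j\ell_jV_j$ and receives that item free with probability $\ell_j$. Let $\pi_j$ be the probability that $j$ is the favorite lottery, so $\eaqj(\mfree)=\ell_j\pi_j$ and $\sum_j\pi_j\le1$. Then
\[
 \sum_j\lambda_j\ell_j\pi_j\le \tfrac R2\sum_j\pi_j\le\tfrac R2,
 \qquad\text{hence}\qquad
 \G(\mfree,C)\ge \tfrac R2+\lambda\cdot\eaq(C).
\]
On the benchmark side, split items by whether $\ell_j=1$ or $\ell_j<1$:
\[
 R\sum_j\frac{\eaqj(C)}{\ell_j}\le R\sum_{j:\ell_j=1}\eaqj(C)+\sum_{j:\ell_j<1}2\lambda_j\eaqj(C)\le R+2\lambda\cdot\eaq(C),
\]
which is at most $2\G(\mfree,C)$.

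\emph{Truthfulness and IR.} Both sub-mechanisms are menus from which the buyer picks a utility-maximizing option, with nonnegative payments. They are therefore truthful and ex-post IR, and so is their mixture.

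The main obstacle is the existence of the fixed point $R$. The map $R\mapsto\Phi(R):=\E[\max_j\min\{1,R/(2\lambda_j)\}V_j]$ is continuous and nondecreasing, with $\Phi(R)\le\E[\max_jV_j]<\infty$ and $\Phi(0)\ge0$. The plan is to obtain a solution of $R=\Phi(R)$ by an intermediate-value argument on $\Phi(R)-R$, or alternatively as the largest fixed point, Tarski style. The degenerate cases need separate handling: $\lambda_j=0$ (set $\ell_j=1$), and $R=0$ (then $\ell_j$ would be $0$, outside $(0,1]$). If $R=0$ is the only fixed point, I would instead take $\ell_j$ at a small positive level and pass to a limit, or note that then $\E[\max_jV_j]$ restricted to items with $\lambda_j>0$ is negligible. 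If this limiting argument gets messy, a fallback is to discretize $R$ and lose an arbitrarily small $\varepsilon$.
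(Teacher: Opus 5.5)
Your proposal is correct and is essentially the paper's proof. It uses the same $C$-independent mixture with weights $e/(3e-2)$ and $2(e-1)/(3e-2)$, and the same demand-penalized cap program for the prices (your objective is $(1-1/e)$ times the paper's, so it has the same maximizer). Your lotteries are also the paper's: $\ell_j=\min\{1,h/\lambda_j\}$ with $h=R/2$ chosen so that $U(\mfree)=2h$. The paper closes the degenerate case along your first limiting idea. If $\Phi(R)<R$ for every $R>0$, take $\mfree$ to be the null mechanism, so $\G(\mfree,C)=\lambda\cdot\eaq(C)$, and let $R\to0$ in the bound $U(C)<\sum_j\sig(V_j;\eaqj(C))+R+2\lambda\cdot\eaq(C)$. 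The limit is taken in this inequality, not in the mechanism, so no $\varepsilon$ is lost and your discretization fallback is unnecessary. Your ``negligible values'' alternative is not accurate: this case arises when the $\lambda_j$ are large relative to the values, not when the values are small.
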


Before proving the lemma, we show how to use it to obtain a guarantee for all buyers at once.
Fix $\lambda$ and apply \Cref{lem:non-iid-local} to each buyer, obtaining
mechanisms $A_i$. Write $A=(A_1,\ldots,A_n)$ for this
collection and $U(A):=\sum_iU(A_i)$ for its total expected utility. 

Let $\mathbf q_i^\star$
be buyer $i$'s itemwise ex-ante allocation probabilities in an optimal
multi-buyer mechanism, and let $C_i$ be an optimal single-buyer
mechanism under these caps, attaining
$\SingleOPT(\Vvec_i;\mathbf q_i^\star)$. These mechanisms are used
only as comparators; they need not be computed.

We start with the multi-buyer Lagrangian, decomposing it into per-bidder contribution. We lower-bound the (relaxed) supply-constraint of 1 by the constrained-supply from the optimal single-buyer mechanisms, and the remainder of the term is by definition $\G$. For any single-buyer mechanism's caps, \Cref{lem:non-iid-local} then states that this quantity is a constant-factor of the sum of the utility of these single-buyer mechanisms. The ex-ante relaxation \eqref{eq:ex-ante-relaxation} guarantees that by specifically using the optimal single-buyer mechanisms, the sum of their utilities is at least $\OPT$. 
\begin{align}
 \mathcal{L}_\lambda(A) &\geq 
 \sum_i \G(A_i,C_i) 
       \notag & \text{definition of $\G$}\\
 &\ge\frac{1}{2+e/(e-1)}\sum_i U(C_i)
       \notag & \text{\Cref{lem:non-iid-local}}\\
 &\ge\frac{\OPT}{2+e/(e-1)}.
       & \text{definition of $C_i$, \cref{eq:ex-ante-relaxation}}
 \label{eq:non-iid-price-certificate}
\end{align}

At fixed $\lambda$, lower-bounding $\L_{\lambda}(A)$ does not yet lower-bound $U(A)$:
the supply term $\sum_j\lambda_j(1-\sum_i\eaqj(A_i))$ may be
positive. The projected online gradient descent procedure in \Cref{sec:lambdachoice} for learning $\lambda$ gives a no-regret bound that, on average, controls this extra term. Choosing these $\lambda$s at random over the learning timeline then converts this into a utility approximation. See \Cref{lem:non-iid-balanced}, which also reduces participation to restore ex-ante feasibility.

We now provide the proof of \Cref{lem:non-iid-local}.

\begin{proof}[Proof of \Cref{lem:non-iid-local}]
Fix an arbitrary comparison mechanism $C$ and let its deletion
loss be $D=\sum_jD_j$. The bound from
\Cref{lem:weighted-residual} separates $U(C)$ into deletion loss and a
weighted free-allocation term. We construct one mechanism for each term,
then mix over the two. Note that the
constructed mechanisms need not satisfy the ex-ante probability caps
$\eaqj(C)$ that mechanism $C$ satisfies.

\textbf{Posted Prices $\mpp$.} First, we recover deletion loss through posted prices. To find item-specific prices
with high demand-adjusted utility, consider the following program: 
\begin{equation} \label{eq:11program}
 \max_{\substack{\widehat q_j\ge0\\\sum_j\widehat q_j\le1}}
 \sum_j\left[\sig(V_j;\widehat q_j)-\frac{e}{e-1}\lambda_j\widehat q_j\right].
\end{equation}
Let $\widehat q_j^\star$ denote its optimal caps.
Each summand is concave in $\widehat q_j$. 
Appendix~\ref{app:demand-adjusted-caps} describes a discretized water-filling procedure for approximating this program.  Combining independent prices preserves a $(1-1/e)$ fraction of their single-item utilities while preserving the allocation caps. Multiplying the demand penalty by $e/(e-1)$ ensures that, after applying this utility factor, the penalty is exactly $\lambda_j\widehat q_j$.

Let $\mpp$ be the mechanism that simultaneously offers the (independently drawn) prices
$P_j$ prescribed by $\sig(V_j;\widehat q_j^\star)$ and lets the buyer pick their surplus maximizing item
(or no item). Due to the added unit-demand feasibility of $\mpp$, then $\eaqj(\mpp) \leq \widehat q_j^\star$. 
As in \Cref{sec:iid-single}, we use \Cref{lem:weighted-residual,lem:deletion-coordinate,lem:combine-independent-prices} to recover the two parts of the utility benchmark. 
Its demand-adjusted utility satisfies 
\begin{align}
  \G&(\mpp,C) \notag \\
 &=\E\!\left[\max_j(V_j-P_j)^+\right]
      -\sum_j\lambda_j\eaqj(\mpp)+\sum_j\lambda_j\eaqj(C) \notag 
       & \text{definition of $\G$}\\
 &\ge\left(1-\frac1e\right)\sum_j\sig(V_j;\widehat q_j^\star)
      -\sum_j\lambda_j\widehat q_j^\star+\sum_j\lambda_j\eaqj(C) \notag 
       & \text{\Cref{lem:combine-independent-prices}}, \eaqj(\mpp)\leq\widehat q_j^\star\\
 &\ge\left(1-\frac1e\right)\sum_j\sig(V_j;\eaqj(C)) \notag 
       & \text{ 
       $\eaqj(C)$ feasible for $\widehat q_j^\star$}\\
 &\ge\left(1-\frac1e\right)D.
       & \sig(V_j;\eaqj(C))\ge D_j  
       \label{eq:non-iid-price-bound}
\end{align}

\textbf{Free Lottery $\mfree$.} We next recover the remaining term using free allocation. We seek a free-lottery mechanism $\mfree$ satisfying
\begin{equation}\label{eq:non-iid-free-bound}
 U(C)-D
 \le 2\, \G(\mfree,C). 
\end{equation}
To choose the free lottery mechanism $\mfree$, we introduce parameter $h>0$ hat bounds the expected demand penalty of any offered lottery. For each item
$j$, define probabilities $\ell_j(h):=\min\{1,h/\lambda_j\}$, with $\ell_j(h)=1$ when
$\lambda_j=0$. Let $\mfree_h$ be the free-lottery mechanism in which the
buyer chooses their utility-maximizing item $j$, and receives it for free with probability
$\ell_j(h)$. 
We choose $h$ as the common expected penalty of each lottery, unless giving the item with certainty already incurs less. Increasing $h$ makes each lottery more generous, up to allocation probability 1. This increases the buyer's expected utility, but also allows a larger penalty in our objective $\G$. 
Using \Cref{lem:weighted-residual} we first establish a utility bound for every mechanism $\mfree_h$
\begin{align}
 U(C)-D
 &\le \E_{\Vvec}[\max_j\ell_j(h)V_j]\sum_j\frac{\eaqj(C)}{\ell_j(h)}
       \notag & \text{\Cref{lem:weighted-residual}}\\
 &= U(\mfree_h)\sum_j\frac{\eaqj(C)}{\ell_j(h)}
       \notag & \text{definition of $\mfree_h$}\\
 &\le U(\mfree_h)
       \left(\sum_j\eaqj(C)+\frac{\sum_j\lambda_j\eaqj(C)}{h}\right)
       \notag & \text{$1/\ell_j(h)\le1+\lambda_j/h$}\\
 &\le\frac{U(\mfree_h)}{h}
       \left(h+\sum_j\lambda_j\eaqj(C)\right).
       & \text{$\sum_j\eaqj(C)\le1$}
 \label{eq:non-iid-free-utility}
\end{align}

We would like to select $h$ so that the mechanism's expected utility equals $2h$. Since the buyer chooses only one lottery, the mechanism's expected penalty is at most $h$, leaving penalized utility at least $h$.

The same choice makes the coefficient $U(\mfree_h)/h$ in \eqref{eq:non-iid-free-utility} equal to two. That bound then gives $U(C)-D\le2(h+\sum_j\lambda_j\eaqj(C))$, while subtracting our mechanism's penalty leaves $\G(\mfree_h,C)\ge h+\sum_j\lambda_j\eaqj(C)$. These are the two inequalities we need to combine. The cases below show that we can either choose such an $h$ or use the null mechanism.

We consider two cases.

\emph{Case 1.} $U(\mfree_h)<2h$ for every $h>0$. Then
\eqref{eq:non-iid-free-utility} gives
\[
 U(C)-D<2h+2\sum_j\lambda_j\eaqj(C).
\]
Letting $h$ tend to zero proves \eqref{eq:non-iid-free-bound}
with $\mfree_0$ being the mechanism that never allocates (so $\G(\mfree_0,C)=\sum_j\lambda_j\eaqj(C)$). Set $\mfree=\mfree_0$.

\emph{Case 2.} $U(\mfree_h)\geq2h$ for some $h>0$. Because $U(\mfree_h)/h$ is non-increasing,
continuous and tends to zero as $h$ tends to infinity, we should be able
to find an $h'$ such that $U(\mfree_{h'}) = 2h'$. For $\mfree_{h'}$
we first bound its demand-weight term:
\begin{equation}\label{ineq:non-iid-free-demand-price}
\sum_j\lambda_j \eaqj(\mfree_{h'})
 =\sum_j\lambda_j\ell_j(h')\Prb(\text{buyer chooses item }j)
 \le h'\sum_j\Prb(\text{buyer chooses item }j)
 \le h',
\end{equation}
where the first inequality uses $\lambda_j\ell_j(h')\le h'$ by definition of $\ell_j$,
and the second is because the buyer can only choose one item.

We now lower bound the lottery's demand-adjusted utility:
\begin{align}
 \G(\mfree_{h'},C)
 &=U(\mfree_{h'})-\sum_j\lambda_j\eaqj({\mfree_{h'})}
       +\sum_j\lambda_j\eaqj(C)
       \notag & \text{definition of $\G$}\\
 &\ge U(\mfree_{h'})-h'+\sum_j\lambda_j\eaqj(C)
       \notag & \text{\Cref{ineq:non-iid-free-demand-price}}\\
 &=h'+\sum_j\lambda_j\eaqj(C).
       & \text{$U(\mfree_{h'})=2h'$}
 \label{eq:non-iid-free-adjusted}
\end{align}
Set $\mfree :=\mfree_{h'}$. The general utility bound and the bound on this
lottery's demand-adjusted utility combine as follows:
\begin{align*}
 U(C)-D
 &\le\frac{U(\mfree_{h'})}{h'}
       \left(h'+\sum_j\lambda_j\eaqj(C)\right)
       &&\text{\Cref{eq:non-iid-free-utility}}\\
 &=2\left(h'+\sum_j\lambda_j\eaqj(C)\right)
       &&\text{$U(\mfree_{h'})=2h'$}\\
 &\le 2\, \G(\mfree_{h'}, C)
       &&\text{\Cref{eq:non-iid-free-adjusted}}.
\end{align*}
The final expression is the right-hand side of
\eqref{eq:non-iid-free-bound} for $\mfree =\mfree_{h'}$. In both cases, the
choice of the free-lottery mechanism is independent of $C$.

Having established bounds for the pricing mechanism $\mpp$ and the free lottery mechanism $\mfree$ in 
\Cref{eq:non-iid-price-bound,eq:non-iid-free-bound} we can combine them to get an upper
bound on the utility of mechanism $C$ 
\[
 U(C)
 \le \frac{e}{e-1} \G(\mpp,C) + 2\, \G(\mfree, C).
\]

Let mechanism $\singlehetmech=A$ run the pricing mechanism $\mpp$ with probability $e/(3e-2)$ 
and the free-lottery mechanism $\mfree$ with probability $2(e-1)/(3e-2)$, 
independently of the buyer's report. The last inequality now yields
\[
 U(C)\le\left(2+\frac{e}{e-1}\right) \G(A,C),
\]
which rearranges to the claimed bound, giving the demand-adjusted utility guarantee.

In terms of mechanism $A$'s properties, it uses nonnegative
payments and is ex-post individually rational (since both of its sub-mechanisms do so as well), 
and since the buyer chooses their preferred offer in each sub-mechanism, the mixture is BIC as well.
\end{proof}

To recap, our single-buyer mechanism $\singlehetmech$ is defined as follows: Fix $\lambda\ge0$. Let $\widehat{\mathbf q}^\star$ maximize the cap program \eqref{eq:11program} in the proof
of \Cref{lem:non-iid-local}. Then

\begin{enumerate}
 \item With probability $e/(3e-2)$, run $\mpp$: Draw each item $j$'s price independently
 according to $\sig(V_j;\widehat q_j^\star)$. Let the buyer choose the item with the largest positive utility (if any) and charge its price.
 \item With probability $2(e-1)/(3e-2)$, run $\mfree$: Let the buyer choose their utility-maximizing item $j$
 and allocate it to them with probability $\ell_j(h)$ free of charge. Here $\ell_j(h)=\min\{1,h/\lambda_j\}$, with $\ell_j(h)=1$ when $\lambda_j=0$. Choose $h>0$ satisfying $\E[\max_j\ell_j(h)V_j]=2h$, if one exists; otherwise allocate nothing.
\end{enumerate}

We can then conclude that the better (demand-adjusted-utility-maximizing) of these two mechanisms gives a constant-factor approximation.

\begin{corollary}
Under the assumptions of \Cref{lem:non-iid-local}, choosing
$A\in\{\mpp,\mfree\}$ to maximize
$U(A)-\sum_j\lambda_j\eaqj(A)$ gives
\[
 \G(A,C)\ge\frac{U(C)}{2+e/(e-1)}
\]
for every comparison mechanism $C$.
\end{corollary}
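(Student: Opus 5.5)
The corollary follows from the two component bounds in the proof of \Cref{lem:non-iid-local} by an averaging argument. Neither $\mpp$ nor $\mfree$ depends on $C$. Both are built only from the value distributions and $\lambda$: the caps $\widehat q^\star$ come from program \eqref{eq:11program}, and $h$ is defined by $\E[\max_j\ell_j(h)V_j]=2h$, or $\mfree$ is the null mechanism. So the selection rule, which maximizes $U(A)-\sum_j\lambda_j\eaqj(A)$, is well defined without reference to $C$. Moreover, for every comparison mechanism $C$,
\[
 \G(A,C)=\Bigl(U(A)-\sum_j\lambda_j\eaqj(A)\Bigr)+\sum_j\lambda_j\eaqj(C),
\]
and the second summand does not depend on $A$. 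Hence the selected $A$ maximizes $\G(\cdot,C)$ over $\{\mpp,\mfree\}$ simultaneously for every $C$.

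Next, I recall the two inequalities established in the proof of \Cref{lem:non-iid-local}, which hold for every BIC, IR comparison $C$ with nonnegative payments:
\[
 \G(\mpp,C)\ge(1-1/e)D
\]
from \eqref{eq:non-iid-price-bound}, and
\[
 U(C)-D\le2\,\G(\mfree,C)
\]
from \eqref{eq:non-iid-free-bound}. Adding them gives
\[
 U(C)\le\tfrac{e}{e-1}\G(\mpp,C)+2\,\G(\mfree,C).
\]
The right-hand side is a nonnegative combination with total weight $2+e/(e-1)$. Therefore
\[
 U(C)\le\bigl(2+e/(e-1)\bigr)\max\{\G(\mpp,C),\G(\mfree,C)\}=\bigl(2+e/(e-1)\bigr)\G(A,C),
\]
where the equality uses the first step. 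Rearranging gives the claim.

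The only delicate point is the need for a single $A$ that works for every $C$ at once. The first step resolves it, since the $C$-dependent term of $\G$ is common to both candidates. A second subtlety is that $\G$ can be negative, but that does not affect bounding a nonnegative combination by its total weight times the maximum. Finally, the selected $A$ is BIC, ex-post IR, and has nonnegative payments, since each of the two candidates has these properties.
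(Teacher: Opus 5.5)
Your proof is correct and is essentially the argument the paper intends: the proof of \Cref{lem:non-iid-local} establishes $U(C)\le\frac{e}{e-1}\G(\mpp,C)+2\,\G(\mfree,C)$ and then takes a fixed mixture with weights $e/(3e-2)$ and $2(e-1)/(3e-2)$, and the corollary follows because the better candidate dominates this weighted average. You make explicit what the paper leaves implicit, namely that the $C$-dependent term $\sum_j\lambda_j\eaqj(C)$ is common to both candidates, so a single selection rule works for every $C$ at once.
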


\subsubsection{Choosing $\lambda$'s.} \label{sec:lambdachoice} 
We now show how we can construct an ex-ante-feasible approximately-optimal
multi-buyer mechanism $\indmech$ by learning $\lambda$.  We initialize the dual variables $\lambda$ at 0, and then iteratively update them for $T$ rounds. When an item is over-allocated, they are increased, and if under-allocated, decreased. They are constrained to be nonnegative and to never exceed a crude upper bound of welfare if allocation of all $m$ items to each of $n$ buyers was possible. 

The standard no-regret learning guarantee of projected online gradient descent \citep{Zinkevich03} is in average over the $T$ rounds. Hence, we randomly draw a mechanism from the last $T$ rounds to get the same expected guarantee.

Formally, initialize demand weights to $\lambda^1=0$ and set a 
step size $\eta = \Theta(S/(n\sqrt{T}))$ (where $S = \sum_{i,j}\E[V_{ij}]$). We will run the following procedure
for $T$ rounds. In round $t$, apply \Cref{lem:non-iid-local} with
demand weights $\lambda^t$ to compute a mechanism
$A_i^t:=\mathcal M_{\mathrm{het},i}^{1,m}(\lambda^t)$ for each buyer $i$. Let $r_j^t:=\sum_i\eaqj(A_i^t)$ be the
expected number of requests for item $j$ in this round, and update
its the demand weights according to
\begin{equation}\label{eq:non-iid-price-update}
 \lambda_j^{t+1}
 =\min\left\{\sum_{i,j}\E[V_{ij}],\max\left\{0,\lambda_j^t+\eta(r_j^t-1)\right\}\right\}.
\end{equation}
After the
$T$ rounds, define the average expected number of requests 
$\overline r_j:=T^{-1}\sum_{t=1}^T r_j^t$.
Mechanism $\indmech$ independently draws a uniform round $t_i$ for
each buyer $i$ and runs for this buyer $A_i^{t_i}$ with probability
$1/\max\{1,\max_j\overline r_j\}$ or otherwise makes no offer to
that buyer. The following lemma describes the properties and utilities
guarantees of mechanism $\indmech$ (the proof of the following lemma and exact 
implementation details of $\indmech$ are deferred to 
Appendix~\ref{app:non-iid-average-proof}).

\begin{restatable}{lemma}{noniidbalanced}\label{lem:non-iid-balanced}
Using the demand weights $\lambda^1,\ldots,\lambda^T$
generated by the procedure above, for every $\varepsilon>0$, after 
$T=O(n^4m^4/\varepsilon^2)$ rounds, $\indmech$ is
ex-ante feasible and the mechanism's total expected utility satisfies
\[
 U(\indmech)
 \ge\left(\frac{1}{2+e/(e-1)}-\varepsilon\right)\OPT.
\]
\end{restatable}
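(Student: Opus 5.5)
The plan is to treat the demand-weight update \eqref{eq:non-iid-price-update} as projected online gradient descent on the linear losses
\[
 f_t(\lambda)=\sum_j\lambda_j\bigl(1-r_j^t\bigr)
\]
over the box $[0,B]^m$ with $B:=\sum_{i,j}\E[V_{ij}]$, and to combine Zinkevich's regret bound with the per-round Lagrangian certificate \eqref{eq:non-iid-price-certificate}. Feasibility comes first and is immediate. Buyer $i$ runs $A_i^{t_i}$ for a uniform round with probability $1/\max\{1,\max_k\overline r_k\}$. Hence the expected number of requests to item $j$ is
\[
 \overline r_j/\max\{1,\max_k\overline r_k\}\le1,
\]
which gives ex-ante feasibility. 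Contention resolution is then handled afterwards as in \Cref{sec:hetservice}.

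\textbf{Per-round certificate.} For each $t$, \eqref{eq:non-iid-price-certificate} gives
\[
 \sum_iU(A_i^t)+\sum_j\lambda_j^t(1-r_j^t)\ge\alpha\,\OPT,
 \qquad \alpha:=1/(2+e/(e-1)).
\]
Averaging over $t$, write $\overline U:=T^{-1}\sum_t\sum_iU(A_i^t)$. Then
\[
 \overline U\ge\alpha\OPT-\frac1T\sum_t\sum_j\lambda_j^t(1-r_j^t).
\]
Online gradient descent guarantees that, for every comparator $\lambda^*\in[0,B]^m$,
\[
 \frac1T\sum_tf_t(\lambda^t)\le\frac1T\sum_tf_t(\lambda^*)+\mathrm{Reg}_T/T.
\]
The right-hand side equals $\sum_j\lambda^*_j(1-\overline r_j)+\mathrm{Reg}_T/T$. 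The gradients satisfy $|1-r_j^t|\le n$, so $\|\nabla\|\le n\sqrt m$ and the diameter is $B\sqrt m$. With $\eta=\Theta(B/(n\sqrt T))$, this gives $\mathrm{Reg}_T/T=O(Bnm/\sqrt T)$.

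\textbf{Two cases.} If $\max_j\overline r_j\le1$, choose $\lambda^*=0$. This yields $\overline U\ge\alpha\OPT-\mathrm{Reg}_T/T$, and nobody is scaled down. If instead $\rho:=\max_j\overline r_j>1$ is attained at item $j^*$, choose $\lambda^*=Be_{j^*}$. This yields
\[
 \overline U\ge\alpha\OPT+B(\rho-1)-\mathrm{Reg}_T/T.
\]
We then need to show the scaled utility $\overline U/\rho$ is still at least $(\alpha-\varepsilon)\OPT$. The key observation is that $\overline U\le\sum_{i,j}\E[V_{ij}]=B$, because utility is at most welfare. So from $\overline U\ge\alpha\OPT+B(\rho-1)-\delta$, with $\delta:=\mathrm{Reg}_T/T$, I will argue
\[
 \frac{\overline U}{\rho}\ge\alpha\OPT-\delta.
\]
Indeed, $\overline U-\rho(\alpha\OPT-\delta)\ge(\rho-1)\bigl(B-\alpha\OPT+\delta\bigr)\ge0$, using $\alpha\OPT\le\OPT\le B$.

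\textbf{Choosing $T$.} It remains to make $\delta\le\varepsilon\OPT$. This is the main obstacle, since $\delta$ scales with $B$ while the target scales with $\OPT$. I would use a crude comparison: giving each buyer their favorite item in a free uniform lottery over buyers already yields $\OPT\ge B/(nm)$, up to constants. One option is offering each buyer–item pair with probability $1/(nm)$. Hence $\delta=O(Bnm/\sqrt T)\le\varepsilon B/(nm)$ once $T=O(n^4m^4/\varepsilon^2)$, matching the statement.

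Finally, since $U(\indmech)=\overline U/\max\{1,\rho\}$ by independence of the round draws and linearity, the claimed bound follows. The remaining steps to verify are the subtle points: the projection cap $B$ is large enough that the comparator $Be_{j^*}$ is admissible, and averaging over rounds is consistent with drawing $t_i$ independently per buyer. The latter holds because utilities and demands are additive across buyers.
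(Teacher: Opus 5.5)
Your proposal is correct and follows essentially the same route as the paper: projected online gradient descent on the demand weights against the per-round Lagrangian certificate, a fixed comparator that charges over-demanded items, and the bound $\OPT\ge S/(nm)$ to turn the $O(Snm/\sqrt T)$ regret into a multiplicative $\varepsilon$ loss. The paper differs only in details. It puts weight $S$ on every item with $\overline r_j>1$ rather than only the most-demanded one, handles the scaling step via $\overline U\le S$ and $1-1/z\le z-1$ instead of your direct algebra, and carries an extra $n\delta$ error term for solving the cap program \eqref{eq:11program} only approximately in each round, which your exact-certificate version can legitimately omit for the statement as given.
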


\subsection{Resolving contention without changing buyers' choices} \label{sec:hetservice}

Let $\indmech$ be the mechanism constructed by the demand-weight
procedure in \Cref{sec:lambdachoice}. By \Cref{lem:non-iid-balanced},
its requests satisfy each item feasibility constraint in expectation,
but several buyers may still request the same item. In
\Cref{sec:service}, item symmetry ensured that every item a buyer
might request had the same probability of being served. Here,
competition can differ across items. To preserve the buyer's choices,
we first resolve contention using a constant lower bound on each
request's service probability, and then further reduce these probabilities to take on the same value.

Write $p_{ij}$ for the probability that buyer $i$ requests item $j$
under $\indmech$. For each item $j$, requests are independent across
buyers and $\sum_i p_{ij}\le1$. We apply the rank-one contention rule
of Kashaev and Santiago~\cite[Algorithm~2.1 and Theorem~2.1]{KashaevSantiago23}
to these requests (we could have alternatively used the rank-one rule of \cite{FV06}). This rule selects at most one buyer (item feasibility), and each
request from buyer $i$ is selected with probability
$c_{ij}\ge1-1/e$. 
When $p_{ij}=0$, define $c_{ij}$ by forcing buyer $i$ to make a request, and then
run the same rule against the other buyers' requests. The bound also covers requests due to misreports. 
Appendix~\ref{app:non-iid-probabilities} explains how to
compute $p_{ij}$ and $c_{ij}$.

If the rule selects buyer $i$ for item $j$, serve the request it with probability
$(1-1/e)/c_{ij}$, independently of the earlier randomness. Charge
the corresponding single-buyer payment only if the item is served.
Then every possible request is served with probability exactly
$1-1/e$, when holding the buyer's values, report, and single-buyer
randomness fixed. Both the allocation and payment are multiplied by
this same factor, so the buyer's preference between item options is
unchanged. By \Cref{lem:utility-preserving-contention}, the resulting
mechanism $\hetmech$ is BIC, ex-post individually rational, and
matching-feasible, has nonnegative payments, and satisfies
\begin{equation}\label{eq:non-iid-contention-utility}
 U(\hetmech)=\left(1-\frac1e\right)U(\indmech).
\end{equation}
Combining this utility guarantee with \Cref{lem:non-iid-balanced}
gives the following approximation.

\begin{restatable}{theorem}{independentbuyers}\label{thm:non-iid-multi}
In the independent-values setting, for every $\varepsilon>0$,
the demand-weight procedure followed by the contention rule above
gives a BIC, ex-post individually rational, matching-feasible mechanism
$\hetmech$ with nonnegative payments and expected utility at least
\[
 U(\hetmech) \geq \frac{\OPT}{2e/(e-1)+(e/(e-1))^2+\varepsilon}.
\]
The mechanism can be computed in time polynomial in
$n,m,1/\varepsilon$, given access to the single-buyer optimization
and expectation evaluations used in this section.
\end{restatable}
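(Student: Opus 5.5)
The plan is to chain \Cref{lem:non-iid-balanced} with the contention-resolution identity \eqref{eq:non-iid-contention-utility}, and then rescale $\varepsilon$. Properties first. By \Cref{lem:non-iid-balanced}, $\indmech$ runs, for each buyer independently, a mechanism $A_i^{t_i}$ with the uniform participation reduction. Each $A_i^t$ is BIC, ex-post IR and has nonnegative payments by \Cref{lem:non-iid-local}. A report-independent mixture over rounds, with a report-independent chance of making no offer, keeps these properties. The request probabilities $p_{ij}$ satisfy $\sum_ip_{ij}\le1$ by ex-ante feasibility. The Kashaev--Santiago rule followed by thinning with probability $(1-1/e)/c_{ij}\le1$ serves every possible request, including one from a misreport, with probability exactly $1-1/e$. \Cref{lem:utility-preserving-contention} then gives BIC, ex-post IR, nonnegative payments and matching-feasibility for $\hetmech$, together with $U(\hetmech)=(1-1/e)U(\indmech)$.

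For the utility bound, set $\alpha:=1/(2+e/(e-1))$. \Cref{lem:non-iid-balanced} applied with parameter $\delta$ gives
\[
U(\hetmech)\ge\Bigl(1-\tfrac1e\Bigr)(\alpha-\delta)\OPT .
\]
Note that $(1-1/e)\alpha=1/\bigl((e/(e-1))(2+e/(e-1))\bigr)=1/\bigl(2e/(e-1)+(e/(e-1))^2\bigr)$. Write $K:=2e/(e-1)+(e/(e-1))^2$, so the target is $1/(K+\varepsilon)$. Since $1/K-1/(K+\varepsilon)=\varepsilon/(K(K+\varepsilon))$, it suffices to pick
\[
\delta:=\frac{e}{e-1}\cdot\frac{\varepsilon}{K(K+\varepsilon)},
\]
which is $\Theta(\varepsilon)$ for $\varepsilon\in(0,1]$. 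Then $(1-1/e)(\alpha-\delta)=1/K-\varepsilon/(K(K+\varepsilon))=1/(K+\varepsilon)$.

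Running time is next. With this $\delta$, \Cref{lem:non-iid-balanced} needs $T=O(n^4m^4/\varepsilon^2)$ rounds. Each round computes, for every buyer:
\begin{itemize}
\item the cap program \eqref{eq:11program}, via the discretized water-filling of Appendix~\ref{app:demand-adjusted-caps};
\item the two-price implementations of $\sig(V_j;\cdot)$, via Appendix~\ref{sec:compute};
\item the lottery parameter $h$, found by bisection on the monotone ratio $\E[\max_j\ell_j(h)V_j]/h$;
\item the request probabilities $r^t_j$.
\end{itemize}
The final mechanism also needs the $p_{ij}$ and $c_{ij}$ (Appendix~\ref{app:non-iid-probabilities}). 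All of these are polynomially many calls to the assumed single-buyer and expectation oracles.

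The main obstacle is the approximate steps. The water-filling solves \eqref{eq:11program} only approximately, and $h$ is found only to finite precision, so \Cref{lem:non-iid-local} holds only up to an additive error. I would absorb this by using the normalization $S=\sum_{i,j}\E[V_{ij}]$ and a crude lower bound on $\OPT$ in terms of these quantities. Each per-round error is made at most $\delta\OPT/2$ by taking polynomial precision, so the total slack stays within the $\delta$ budget above.
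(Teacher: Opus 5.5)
Your proposal is correct and follows the paper's proof: you chain \Cref{lem:non-iid-balanced} with the contention identity \eqref{eq:non-iid-contention-utility}, and your choice $\delta=\frac{e}{e-1}\cdot\frac{\varepsilon}{K(K+\varepsilon)}$ is algebraically identical to the paper's accuracy parameter $\theta=\frac{1}{2+e/(e-1)}-\frac{1}{2+e/(e-1)+(1-1/e)\varepsilon}$. Your final paragraph is unnecessary, and as written it spends the same $\delta$ budget twice: the water-filling error is already absorbed inside \Cref{lem:non-iid-balanced} through the $n\delta$ term of \Cref{lem:non-iid-average}, and any genuinely extra error (such as finite precision in $h$) should be handled by running the lemma with half the accuracy parameter, not by reusing the full budget.
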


\begin{proof}
It suffices to consider $0<\varepsilon\le1$; for larger
$\varepsilon$, use the mechanism obtained with $\varepsilon=1$.
We choose the accuracy in \Cref{lem:non-iid-balanced} so that,
after the utility loss due to contention, the additive error in the
approximation factor is $\varepsilon$. Set
\[
 \theta:=\frac{1}{2+e/(e-1)}
 -\frac{1}{2+e/(e-1)+(1-1/e)\varepsilon}>0.
\]
Running the demand-weight procedure with accuracy $\theta$ gives
\[
 U(\indmech)
 \ge\left(\frac{1}{2+e/(e-1)}-\theta\right)\OPT
 =\frac{\OPT}{2+e/(e-1)+(1-1/e)\varepsilon}.
\]
By \eqref{eq:non-iid-contention-utility},
\[
 U(\hetmech)
 \ge\frac{1-1/e}{2+e/(e-1)+(1-1/e)\varepsilon}\OPT
 =\frac{\OPT}{2e/(e-1)+(e/(e-1))^2+\varepsilon}.
\]
Since $\theta=\Theta(\varepsilon)$ for $0<\varepsilon\le1$,
the construction uses polynomially many rounds and single-buyer
calls. The contention rule and the computation of its selection probabilities also take polynomial time under the stated access
assumptions.
\end{proof}

\subsection*{AI Disclosure}

We used ChatGPT extensively in developing this paper. This work built on five years of our preliminary research, including results, technical approaches, and intuition about what had and had not worked. We supplied this background and guided an iterative investigation: we formulated the original research questions, posed further questions as the work developed, suggested techniques, and directed which approaches to pursue. Within this process, ChatGPT generated a draft paper and proofs of most of the results, primarily using GPT-6 Astra. The first proof of the result for independent values was obtained using GPT-5.6. 

We wrote the paper and proofs presented here by completely rewriting the AI-generated drafts. This involved reinterpreting and simplifying the arguments, developing new notation, breaking the proofs into smaller conceptual steps, and drawing out the underlying intuition. ChatGPT contributed substantially to discovering results and generating proofs within that process. We take responsibility for the paper's content.

\nocite{BoydSubgradient}
\bibliographystyle{plainnat}
\bibliography{masterbib}

\clearpage
\appendix

\section{Stronger favorite-item guarantees}
\label{sec:combined-bounds}

We prove a factor-$2$ single-buyer guarantee for the \emph{favorite-item}
mechanism, then combine it with contention resolution for fully i.i.d.~and item-i.i.d.~buyers.
Throughout this appendix, $Q$ denotes a buyer's total
allocation cap, $q=Q/m$ the common item cap, and
$\mathbf q=(q,\ldots,q)$ the vector of item caps. By item symmetry,
the optimum under total cap $Q$ equals
$\SingleOPT(\Vvec;\mathbf q)$, using the notation of the main body.

\subsection{A factor-2 approximation through the favorite-item mechanism}
\label{app:improved-single}

The simplified mechanism offers one random price, common to all items, according to the \emph{favorite-item 
mechanism} $\sig(M;Q)$, where
$M:=\max_j V_j$. The buyer selects a favorite item if its value strictly
exceeds the price. By symmetry, each item's allocation
probability is at most $Q/m$.
To prove \Cref{lem:favorite-two}, we use the prices from $\sig(V;q)$ to construct a mechanism for
the favorite value $M$, recovering the single-item utility terms of
\Cref{thm:single-buyer-benchmark} without the loss from combining
independent item prices. The following lemma states the resulting approximation guarantee.

\begin{restatable}{lemma}{favoritetwo}\label{lem:favorite-two}
For i.i.d.~item values $\Vvec$ and every total ex-ante cap $Q\in[0,1]$,
with $q=Q/m$ and $\mathbf q=(q,\ldots,q)$, the optimal 
favorite-item  mechanism $\sig(M;Q)$ is a 2-approximation to the optimal single-buyer
mechanism $\SingleOPT(\Vvec;\mathbf q)$, that is
\[
 \SingleOPT(\Vvec;\mathbf q)\le2\sig(M;Q).
\]
\end{restatable}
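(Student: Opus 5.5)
The plan is to apply \Cref{thm:single-buyer-benchmark} with $\ell_j=1$ and $q_j=q$ for all $j$, as in the proof of \Cref{lem:single}. This gives
\[
 \SingleOPT(\Vvec;\mathbf q)\le m\,\sig(V;q)+Q\,\E[M].
\]
It then suffices to show that each of the two terms is at most $\sig(M;Q)$.

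\textbf{Free term.} The free term is immediate. The single-dimensional mechanism that allocates to value $M$ with probability $Q$ at price $0$ is BIC and IR, and satisfies $\E[a(M)]=Q$. Hence $Q\,\E[M]\le\sig(M;Q)$. In the multi-item interpretation this is exactly $\mfree$.

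\textbf{Deletion term.} The substantive step is $m\,\sig(V;q)\le\sig(M;Q)$, a single-dimensional statement comparing two distributions. Let $G(t):=\Pr(V>t)$ and $H(t):=\Pr(M>t)=1-(1-G(t))^m$. Take the optimal mechanism for $\sig(V;q)$ in its monotone-allocation form: $a_V$ is nondecreasing, utility is $u_V(v)=\int_0^v a_V$, so
\[
 \sig(V;q)=\int_0^\infty a_V(t)G(t)\,dt,
 \qquad
 q\ge\E[a_V(V)]=a_V(0)+\int G\,da_V .
\]
I would build a mechanism for $M$ with allocation
\[
 a_M(t):=\min\Bigl\{1,\;m\,a_V(t)\,\frac{G(t)}{H(t)}\Bigr\},
\]
with payments given by the envelope formula $\tau(z)=z\,a_M(z)-\int_0^z a_M$. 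Truthfulness, IR and nonnegative payments then follow from monotonicity of $a_M$. Monotonicity holds because $a_V$ is nondecreasing and $G/H$ is nondecreasing in $t$: since $H$ is a concave function of $G$ vanishing at $0$, the ratio $G/H$ increases as $G$ decreases, with $G/H\to1/m$ as $G\to0$.

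Its utility is $\int a_M H$. Ignoring the cap at $1$, this equals $m\int a_VG=m\,\sig(V;q)$. Its allocation is $\E[a_M(M)]=a_M(0)+\int H\,da_M$, which I would compare to $m\bigl(a_V(0)+\int G\,da_V\bigr)\le Q$.

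\textbf{Main obstacle.} I expect the allocation comparison and the truncation at $1$ to be the hardest part. Differentiating $a_M=m a_V G/H$ produces an extra term $m a_V H\,d(G/H)$. I would first try to show this term is nonpositive after integration, or otherwise absorbed, using $H\le mG$ and an integration by parts. If that fails, I would fall back on the posted-price form of $\sig(V;q)$ (at most two prices). For a single price $p$ with $G(p)\le q$, I would choose a lower price $p'$ for $M$ with $H(p')=\min\{1,mG(p)\}$, so that $H(p')\le Q$, and verify directly that
\[
 \int_{p'}^\infty H\;\ge\; m\int_p^\infty G ,
\]
using that $H\ge mG\cdot\frac{H(p')}{mG(p)}$ on $[p',p]$ by the monotonicity of $G/H$. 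I would then mix the two prices with the same weights as in $\sig(V;q)$. The truncation case $mG(p)>1$ should reduce to the free-lottery bound.
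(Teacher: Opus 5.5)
\textbf{The deletion-term step fails.} The claim $m\,\sig(V;q)\le\sig(M;Q)$ is false. Let $V=1$ with probability $Q/m$ and $V=0$ otherwise. The price $0$ (accepted iff $V>0$) meets the cap $q=Q/m$ and yields utility $\E[V]$, so $m\,\sig(V;q)=Q$. But $\sig(M;Q)\le\E[M]=1-(1-Q/m)^m<Q$ for every $Q\in(0,1]$ and $m\ge2$. For $m=2$ and $Q=1$ even the whole decomposition bound is too large: $m\,\sig(V;q)+Q\,\E[M]=1+3/4>3/2=2\,\sig(M;1)$. So no split of those two terms can give the factor $2$ at $Q=1$.

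Your construction also breaks internally. First, $H/G=\sum_{\ell=0}^{m-1}(1-G)^\ell$ is nondecreasing in $t$, so $G/H$ is nonincreasing in $t$. Your own limits, $G/H\to1/m$ as $G\to0$ and $G/H=1$ at $G=1$, say exactly this, so $a_M$ need not be monotone. Second, $H\le mG$ gives $m\,a_VG/H\ge a_V$, so the truncation at $1$ binds wherever a posted-price allocation has $a_V=1$. Then $a_M=a_V$, and you recover only $\int_p^\infty H\le m\int_p^\infty G$. Third, the fallback inequality $\int_{p'}^\infty H\ge m\int_p^\infty G$ fails on a smoothed version of the example above. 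The pointwise bound you cite for it already fails at $t=p'$ whenever $G(p')>G(p)$.

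\textbf{What is missing.} Compressing the $m$ item mechanisms into one mechanism for $M$ must give up some cap. The paper keeps the prices $t_k$ of $\sig(V;Q/m)$ and reweights them by $\widetilde\omega_k=m\omega_k\pi_k^V/((1+Q)\pi_k^M)$, where $\pi_k^V=\Pr(V>t_k)$ and $\pi_k^M=\Pr(M>t_k)$.
\begin{itemize}
\item The bound $\pi_k^M\ge m\pi_k^V/(1+m\pi_k^V)$ makes these weights a valid distribution.
\item The conditional-surplus inequality $\E[(M-t)^+]/\Pr(M>t)\ge\E[(V-t)^+]/\Pr(V>t)$, which is where monotonicity of $H/G$ enters, then yields $m\,\sig(V;Q/m)\le(1+Q)\,\sig(M;Q/(1+Q))$.
\end{itemize}
The factor $2$ does not come from bounding each term by $\sig(M;Q)$. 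It comes from one mixture with total cap $Q$: run $\sig(M;Q/(1+Q))$ with probability $(1+Q)/2$ and the free favorite item with probability $Q/2$. This mixture is valid only when $Q\le 1/2$. For $Q>1/2$, the paper bypasses the decomposition and uses $\SingleOPT(\Vvec;\mathbf q)\le\E[M]\le 2Q\,\E[M]\le 2\,\sig(M;Q)$. Your free-term bound $Q\,\E[M]\le\sig(M;Q)$ is correct, but it spends the entire cap on the lottery, whereas the paper shares the cap between the two pieces.
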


Fix a single buyer with i.i.d.~item values, a total cap $Q\in[0,1]$,
and a comparison mechanism with utility $U$ and ex-ante allocation
probability at most $Q$. As before, it is without loss of generality to consider mechanisms with buyer item ex-ante probabilities of $Q/m$. From 
\Cref{lem:weighted-residual,lem:deletion-coordinate} with $\ell_j = Q/m$ we get 
\begin{equation}\label{eq:appendix-decomposition}
 U\le D+Q\E[M]\le m\sig(V;Q/m)+Q\E[M].
\end{equation}
We first recover the deletion bound through a mechanism for the
favorite value $M$, using a smaller allocation cap.

\begin{lemma}\label{lem:favorite-compression}
Let $V,V_1,\ldots,V_m$ be i.i.d.~nonnegative random variables drawn
from $F$, and let $M:=\max_jV_j$. For every $Q\in[0,1]$,
\begin{equation}\label{eq:parent-compression}
 m\sig(V;Q/m)\le(1+Q)\sig\left(M;\frac{Q}{1+Q}\right).
\end{equation}
\end{lemma}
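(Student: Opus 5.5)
The plan is to start from an optimal single-item mechanism for $V$ at cap $q:=Q/m$ and convert it directly into a feasible single-dimensional mechanism for $M$ at cap $Q/(1+Q)$. Its utility should be at least $\frac{m}{1+Q}\sig(V;q)$, so that optimality of $\sig(M;\frac{Q}{1+Q})$ gives \eqref{eq:parent-compression}.

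\textbf{Setup.} By the posted-price representation in \Cref{sec:sigma}, $\sig(V;q)$ is attained by a random price $P$ drawn independently of $V$. Write $s_p:=\Pr(V>p)$. Then $\sig(V;q)=\E_P[\E(V-P)^+]$ and $\E_P[s_P]\le q$. Every single-dimensional mechanism of the following form is truthful, individually rational and has nonnegative payments: draw a price $p$ from some law, then offer it with probability $w_p\in[0,1]$ and otherwise offer nothing. Such a mechanism is therefore feasible for $\sig(M;\cdot)$ as soon as its sale probability respects the cap.

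\textbf{Construction.} I would reuse the law of $P$ and set
\[
 w_p:=\frac{m\,s_p}{(1+Q)\Pr(M>p)} .
\]
Two quantities then follow.
\begin{itemize}[nosep]
 \item The sale probability is $\E_P[w_P\Pr(M>P)]=\E_P[m s_P]/(1+Q)\le Q/(1+Q)$, so the cap holds.
 \item The utility is $\frac{1}{1+Q}\E_P\!\bigl[m s_P\,\E[(M-P)^+]/\Pr(M>P)\bigr]$.
\end{itemize}
For the utility to dominate $\frac{m}{1+Q}\E_P[\E(V-P)^+]=\frac{m}{1+Q}\E_P\!\bigl[s_P\,\E[V-P\mid V>P]\bigr]$, it suffices to prove, for every fixed $p$, the mean-excess comparison
\[
 \E[M-p\mid M>p]\ge\E[V-p\mid V>p].
\]
I would prove this by showing that $M$ conditioned on $M>p$ stochastically dominates $V$ conditioned on $V>p$. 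Fix $t\ge p$ and write $s_t:=\Pr(V>t)$, so that
\[
 \Pr(M>t\mid M>p)=\frac{1-(1-s_t)^m}{1-(1-s_p)^m},\qquad \Pr(V>t\mid V>p)=\frac{s_t}{s_p}.
\]
The claim $\Pr(M>t\mid M>p)\ge s_t/s_p$ is then equivalent to the map $s\mapsto(1-(1-s)^m)/s$ being nonincreasing on $(0,1]$. That holds because $1-(1-s)^m$ is concave and vanishes at $0$.

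\textbf{Main obstacle.} The remaining requirement is $w_p\le 1$, i.e. $m s_p\le(1+Q)\Pr(M>p)$. Using $(1-s)^m\le 1/(1+ms)$, we get $\Pr(M>p)\ge m s_p/(1+m s_p)$. Hence $w_p\le(1+m s_p)/(1+Q)$, which is at most $1$ exactly when $m s_p\le Q$, i.e. $s_p\le q$ for each price in the support.

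This is the step I expect to be delicate. The optimal $\sig(V;q)$ may randomize between two prices, one of which sells with probability exceeding $q$; only the average satisfies the cap. My first attempt would be to show that in the characterization of Appendix~\ref{sec:compute} the binding randomization can be taken between a price with $s_p\le q$ and the infinite price ($s_p=0$). When the cap binds, the optimal $\sig(V;q)$ is a lottery with weight $\le 1$ on a single price with $s_p\le q$. That matches the required form, since a weight below one is just a further scaling of $w_p$.

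Failing that, I would use concavity of $\sig(V;\cdot)$ and the fact that the mixture structure lets us cap each $s_p$ at $q$ by merging high-probability prices into a partial lottery at a lower sale rate. If neither works directly, I would fall back to truncating $w_p$ at $1$. The excess mass is only present at prices with $m s_p>Q$, and I would then show the lost utility is compensated by slack in the cap constraint.
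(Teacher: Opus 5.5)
Your reweighting $w_p=m s_p/\bigl((1+Q)\Pr(M>p)\bigr)$ is exactly the paper's ratio $\widetilde\omega_k/\omega_k$. Your cap computation, your utility computation, the bound $(1-s)^m\le 1/(1+ms)$, and your mean-excess comparison are all correct. The comparison uses monotonicity of $(1-(1-s)^m)/s$, which is equivalent to the paper's observation that $\sum_{\ell<m}F(s)^\ell$ is nondecreasing.

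The gap is the step you flag as the main obstacle. You read $w_p$ as a per-price thinning probability and therefore require $w_p\le1$ at every price in the support. This fails in general, and so does your primary proposed fix. Take $V$ uniform on $[0,1]$. A price with sale probability $s$ yields utility $s^2/2<s/2$ for $s\in(0,1)$, while price $0$ sells with probability $1$. Hence every optimal mechanism for $\sig(V;q)$ with $q>0$ offers price $0$ with probability $q$ and nothing otherwise. There $s_0=1>q$, and $w_0=m/(1+Q)>1$ whenever $m\ge2$ and $0<Q<1$. So no per-price thinning exists. The claim that a binding optimum puts its weight on a single price with $s_p\le q$ is false for every MHR-type distribution where the free lottery is optimal. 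Your two fallback ideas (merging prices, truncating and using slack) are not developed into arguments.

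The missing observation is that pointwise $w_p\le1$ is unnecessary. Let the new mechanism draw its price from the reweighted measure $w_p\,d\mu(p)$, where $\mu$ is the law of $P$ and $w_p:=0$ when $s_p=0$. With the leftover probability it makes no offer. This is a valid mechanism as soon as $\E_P[w_P]\le1$.

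That inequality follows by averaging the bound you already derived. From $w_p\le(1+ms_p)/(1+Q)$ and $\E_P[s_P]\le Q/m$, we get $\E_P[w_P]\le\bigl(1+m\,\E_P[s_P]\bigr)/(1+Q)\le1$. With this reinterpretation, your sale-probability and utility calculations go through verbatim. This is precisely the paper's verification that $\sum_k\widetilde\omega_k\le1$.
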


\begin{proof}[Proof of \Cref{lem:favorite-compression}]
Fix an optimal $(Q/m)$-constrained single-dimensional mechanism for
$V$, whose expected utility is $\sig(V;Q/m)$. Since this is a
single-dimensional mechanism, we can view it as a distribution over
posted-price mechanisms; for details we refer to \cite{HartlineBook}. For a single-dimensional value $X$, a
posted price $t$ is accepted with probability $\Prb(X>t)$ and provides
expected utility $\E[(X-t)^+]$. (Accepting when 
$X=t$ contributes no utility and may therefore be omitted.)

Let $\omega_k$ and $t_k$ denote the weights and prices in the
distribution over posted prices representing $\sig(V;Q/m)$. For each
component $k$, let $\pi_k^V:=\Prb(V>t_k)$ denote the probability that its
posted price is accepted. Feasibility requires that the weights define
a valid distribution and that the ex-ante allocation probability under
$V$ is at most $Q/m$, that is,
\[
  \sum_k\omega_k\le1,
  \qquad
  \sum_k\omega_k \pi_k^V\le Q/m.
\]
We will construct a valid distribution over posted-price mechanisms
for the maximum value $M$ that obtains a fraction of the utility of
the mechanism for $V$. We use the same prices $t_k$ and let
$\pi_k^M:=\Prb(M>t_k)$ denote their acceptance probabilities under $M$.
For each component $k$ with $\pi_k^V>0$, define its probability weight by
\[
  \widetilde\omega_k
  :=\frac{m\omega_k\pi_k^V}{(1+Q)\pi_k^M}.
\]

We first verify that the weights $\widetilde\omega_k$ define a
valid distribution. Independence gives
\begin{equation}
  \pi_k^M
  =1-(1-\pi_k^V)^m
  \ge 1-e^{-m\pi_k^V}
  \ge 1-\frac{1}{1+m\pi_k^V}
  =\frac{m\pi_k^V}{1+m\pi_k^V},
  \label{eq:max-acceptance-lower-bound}
\end{equation}
where both inequalities follow from $e^x\ge1+x$, applied with
$x=-\pi_k^V$ and $x=m\pi_k^V$, respectively. We can therefore upper-bound
the sum of the new probability weights:
\[
  \sum_k\widetilde\omega_k
  =\sum_k\frac{m\omega_k\pi_k^V}{(1+Q)\pi_k^M}
  \le\frac{\sum_k\omega_k(1+m\pi_k^V)}{1+Q}
  \le\frac{1+Q}{1+Q}
  =1,
\]
where the first inequality is by
\eqref{eq:max-acceptance-lower-bound}, while the second follows from
the feasibility of the original distribution, namely
$\sum_k\omega_k\le1$ and $\sum_k\omega_k\pi_k^V\le Q/m$.
Thus, the weights define a valid distribution (with any remaining
probability assigned to making no offer).

The new mechanism's ex-ante allocation probability is
\[
  \sum_k\widetilde\omega_k\pi_k^M
  =\sum_k\frac{m\omega_k\pi_k^V}{(1+Q)\pi_k^M}\pi_k^M
  =\frac{m}{1+Q}\sum_k\omega_k\pi_k^V
  \le\frac{Q}{1+Q},
\]
where the inequality follows from the ex-ante allocation probability
of the original mechanism, $\sum_k\omega_k\pi_k^V\le Q/m$.

To compare utilities after reweighting, we show that an accepted
offer gives at least as much expected surplus for $M$ as for $V$.
Fix a price $t$ with $\Prb(V>t)>0$.
For $s\ge t$ with $\Prb(V>s)>0$,
\[
 \frac{\Prb(M>s)}{\Prb(V>s)}
 =\frac{1-F(s)^m}{1-F(s)}
 =\sum_{\ell=0}^{m-1}F(s)^\ell
 \ge\sum_{\ell=0}^{m-1}F(t)^\ell
 =\frac{\Prb(M>t)}{\Prb(V>t)}.
\]
Rearranging the tail inequality and integrating over $s\ge t$
yields
\begin{equation}\label{ineq:max-single-comp}
 \frac{\E[(M-t)^+]}{\Prb(M>t)}
 \ge\frac{\E[(V-t)^+]}{\Prb(V>t)}.    
\end{equation}

Let $U_M$ denote the expected utility of the new mechanism. We lower bound its utility
\[
  \begin{aligned}
    U_M
    &:=\sum_k\widetilde\omega_k\E[(M-t_k)^+]
      =\frac{m}{1+Q}\sum_k\omega_k\frac{\pi_k^V}{\pi_k^M}\E[(M-t_k)^+]\\
    &\ge\frac{m}{1+Q}\sum_k\omega_k\E[(V-t_k)^+]
      =\frac{m}{1+Q}\sig(V;Q/m),
  \end{aligned}
\]
where the inequality is due to~\eqref{ineq:max-single-comp}.

The constructed mechanism is therefore feasible for $\sig(M;Q/(1+Q))$. By optimality and the
preceding utility bound,
\[
  \sig(M;Q/(1+Q))\ge U_M\ge\frac{m}{1+Q}\sig(V;Q/m),
\]
which rearranges to prove the claim.
\end{proof}

We can now apply this to inequality~\eqref{eq:appendix-decomposition} to prove \Cref{lem:favorite-two}:

\begin{proof}[Proof of \Cref{lem:favorite-two}]
Consider $0<Q\le1/2$. We construct a single buyer mechanism
for $M$ with ex-ante probability $Q$ by mixing the two mechanisms corresponding to the terms of
\eqref{eq:appendix-decomposition}. With probability $(1+Q)/2$, run
$\sig(M;Q/(1+Q))$; with probability $Q/2$, allocate the favorite item
for free; and with the remaining probability $(1-2Q)/2$, offer nothing. The mixture's ex-ante allocation
probability is at most
\[
 \frac{1+Q}{2}\frac{Q}{1+Q}+\frac Q2=Q.
\]
By \Cref{lem:favorite-compression}, its expected utility is at least
\begin{equation}\label{eq:combined-q-bound}
 \frac{1+Q}{2}\sig\left(M;\frac{Q}{1+Q}\right)
 +\frac Q2\E[M]
 \ge\frac12\bigl(m\sig(V;Q/m)+Q\E[M]\bigr)
 \ge\frac U2,
\end{equation}
where the final inequality is \eqref{eq:appendix-decomposition}. The
mixture preserves incentive compatibility, individual rationality,
and nonnegative payments, and is therefore feasible for
$\sig(M;Q)$. Hence $U\le2\sig(M;Q)$ in this case.

For $1/2<Q\le1$, nonnegative payments
and unit-demand feasibility imply $U\le\E[M]$, while giving the
favorite item for free with probability $Q$ shows that
$\sig(M;Q)\ge Q\E[M]\ge U/2$. 

At $Q=0$, both the comparison mechanism
and $\sig(M;Q)$ have zero utility. 

Since the comparison mechanism was
arbitrary, we can apply this result to a mechanism with utility $\SingleOPT(\Vvec;\mathbf q)$, which means we have proven
\begin{equation}\label{eq:combined-single-bound}
 \SingleOPT(\Vvec;\mathbf q)\le2\sig(M;Q).
\end{equation}
\end{proof}

\subsection{Fully i.i.d.~buyers}
\label{app:iid-favorite}

Using \Cref{lem:favorite-two}, we can immediately improve on the
approximation factor of \Cref{thm:iid-simple} by a factor of
$(2e-1)/(2(e-1))$.

\begin{restatable}{theorem}{iidfavorite}\label{thm:iid-favorite}
For i.i.d.~values, independently running
$\sig(M;\min\{1,m/n\})$ for each buyer and resolving contention uniformly at
random gives a BIC, ex-post individually rational, matching-feasible mechanism with
nonnegative payments, and achieves expected utility at least
\[
 \frac{1-1/e}{2}\OPT.
\]

\end{restatable}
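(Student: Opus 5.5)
The proof follows the pattern of \Cref{thm:iid-simple}, with \Cref{lem:single} replaced by the stronger single-buyer bound \Cref{lem:favorite-two}. Fix $Q:=\min\{1,m/n\}$ and $q:=Q/m=\min\{1/m,1/n\}$. Each buyer independently runs $\sig(M_i;Q)$, where $M_i=\max_jV_{ij}$. We implement this as a single randomized price, drawn from the at-most-two-point posted-price representation in \Cref{sec:sigma} and common to all items. The buyer requests a favorite item if its value strictly exceeds the price, breaking ties among favorite items uniformly at random. Each item that receives a request serves one requester uniformly at random, and that buyer is charged the price only if served.

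\emph{Step 1: upper bound on $\OPT$.} By symmetry of the i.i.d.\ setting, an optimal multi-buyer mechanism can be symmetrized over buyers and item labels without changing its utility. The symmetrized mechanism gives each buyer-item pair an ex-ante allocation probability at most $\min\{1/n,1/m\}=q$, since supply forces $\sum_i q_{ij}\le1$ and unit demand forces $\sum_j q_{ij}\le1$. The ex-ante relaxation \eqref{eq:ex-ante-relaxation} then gives $\OPT\le n\,\SingleOPT(\Vvec;\mathbf q)$. Applying \Cref{lem:favorite-two} gives $\OPT\le 2n\,\sig(M;Q)$. So before contention, the total utility $n\,\sig(M;Q)$ is at least $\OPT/2$.

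\emph{Step 2: single-buyer properties.} The favorite-item price mechanism is truthful, ex-post IR, and charges nonnegative payments, because the buyer simply chooses her best option at a posted price. The mechanism $\sig(M;Q)$ allocates with total probability at most $Q$. Since the price is independent of the values and the favorite item is uniform by symmetry, each item is requested with probability at most $Q/m=q\le 1/n$, and these requests are independent across buyers.

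\emph{Step 3: contention, the main point to check.} A buyer who requests item $j$ is served with probability $\E[1/(1+K)]$, where $K\sim\mathrm{Bin}(n-1,r)$ counts the competing requests and $r\le 1/n$. This expression is decreasing in $r$, so it is at least its value at $r=1/n$, which is at least $1-1/e$ by the Favorites argument of \citet[Theorem~1]{goldnerlundy25}, exactly as in \Cref{thm:iid-simple}. We then lower this to exactly $1-1/e$ (serving a selected request with an extra independent probability) so that the service probability $c_i$ does not depend on the report or on which item is requested. The service probability is the same for every item by item symmetry. This matters for misreports: a deviating buyer changes only her own requests, and every competitor's requests are still uniform over items with rate at most $q$, so her service probability is unaffected. \Cref{lem:utility-preserving-contention} then gives BIC, ex-post IR, nonnegative payments, matching-feasibility, and total utility $(1-1/e)\,n\,\sig(M;Q)\ge\frac{1-1/e}{2}\OPT$.

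Uniform random resolution without the scaling step also works for incentives: every item faces the same competitor distribution, so the service probability is a constant independent of the report, and the utility only improves.
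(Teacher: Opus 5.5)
Your proposal is correct and follows the paper's proof, which reruns the analysis of \Cref{thm:iid-simple} with \Cref{lem:favorite-two} in place of \Cref{lem:single}: symmetry and the ex-ante relaxation give pre-contention utility $n\,\sig(M;Q)\ge\OPT/2$, and since each item is requested with probability at most $q\le 1/n$, the Favorites contention argument with \Cref{lem:utility-preserving-contention} loses at most a factor $1-1/e$. Your explicit computation of the service probability as $\E[1/(1+K)]$ is harmless extra detail, and the optional scaling down to exactly $1-1/e$ is unnecessary, as you note yourself.
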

This follows by applying \Cref{lem:favorite-two} instead of \Cref{lem:single}
in the analysis of Section~\ref{sec:service}.

\subsection{Different distributions across buyers}
\label{sec:buyer-specific}

We now extend the guarantee to the \emph{item-i.i.d.}~setting:
buyer $i$ has i.i.d.~item values from $F_i$, but these distributions may
differ across buyers, although they remain independent. Write
$M_i:=\max_jV_{ij}$. Both single-buyer guarantees (\Cref{sec:iid-single}, \Cref{lem:favorite-two}) apply separately to
each buyer, with total cap $Q_i$ and item caps $q_{ij}=Q_i/m$.
The multi-buyer argument changes in two respects.

The first issue pertains to computing appropriate ex-ante total probability caps $Q_i$. 
If the number of items is larger than the number of buyers ($m\ge n$), setting $Q_i=1$ 
for every buyer remains feasible and optimal, so we restrict to cases with fewer items ($m<n$).
Without symmetry across buyers, the common cap $Q=m/n$ may not be optimal, and we therefore need buyer-specific caps $Q_i$.

Ideally, we would choose these caps by optimizing the ex-ante relaxation itself. Even though its objective is concave in the allocation caps, we cannot efficiently optimize it directly because we cannot efficiently evaluate feasible solutions and compute their gradients. Instead, we use \Cref{lem:favorite-two} to upper-bound the relevant single-buyer ex-ante objectives by the simpler 1-dimensional objective $2\sig(M_i;Q_i)$, and optimize

\begin{equation}\label{eq:ex-ante-favorite}
 \mathcal{R}:=\max_{\substack{0\le Q_i\le1\\\sum_iQ_i\le m}}
       \sum_{i=1}^n\sig(M_i;Q_i).
\end{equation}
For the single-buyer interim mechanisms induced by an optimal
multi-buyer mechanism, let $Q_i^C$ be buyer $i$'s ex-ante allocation
probability. Since $Q_i^C\le1$ and $\sum_iQ_i^C\le m$, applying
\eqref{eq:combined-single-bound} to each buyer gives
\begin{equation}\label{eq:heterogeneous-upper}
 \OPT\le2\mathcal{R}.
\end{equation}
The advantage of $\mathcal{R}$ is that each $\sig(M_i;Q_i)$ can be evaluated efficiently. Moreover, the objective is the sum of concave functions, each depending only on one buyer's cap. We can therefore use a discretized greedy water-filling procedure to find approximately optimal caps. In particular, we can obtain caps whose objective value is within a multiplicative factor $1+\varepsilon$ of $\mathcal{R}$ in time $\operatorname{poly}(n,m,1/\varepsilon)$ (see Appendix~\ref{app:greedy-caps} and 
\Cref{lem:greedy-caps} for the details).

The second issue is contention resolution. Unlike the
fully i.i.d.~setting, some buyers may contribute much more utility
than others (consider for example, a buyer whose item values are always
much higher than everyone else's) so treating all requests
uniformly may not retain the same utility guarantee. Fortunately,
the contention construction of \Cref{sec:hetservice} applies here
as well: it serves every possible request with probability
$1-1/e$ while preserving buyers' choices.

The following theorem extends the favorite-item guarantee to
item-i.i.d.~values.

\begin{restatable}{theorem}{iidstronger}
\label{thm:iid-stronger}
For item-i.i.d.~values and every $0 < \varepsilon \leq 1$, we can compute
buyer allocation caps $Q_i\in[0,1]$ with $\sum_iQ_i\le m$ and
a contention-resolution scheme such that independently running
$\sig(M_i;Q_i)$ for each buyer, where $M_i:=\max_jV_{ij}$,
and resolving the resulting requests using this scheme gives a
mechanism that is BIC, ex-post individually rational, has
nonnegative payments, respects matching-feasibility, and achieves
expected utility at least
\[
    \frac{\OPT}{2e/(e-1)+\varepsilon}.
    \]
The mechanism can be computed in $\operatorname{poly}(n,m,1/\varepsilon)$
time under the single-dimensional optimization and probability-evaluation
assumptions of Appendix~\ref{app:optimization}.
    
\end{restatable}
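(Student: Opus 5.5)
The plan chains three things: the upper bound $\OPT\le2\mathcal R$ from \eqref{eq:heterogeneous-upper}, approximately optimal caps for $\mathcal R$, and the contention scheme of \Cref{sec:hetservice}, which serves every request with probability exactly $1-1/e$.

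\textbf{Step 1: caps.} Run the discretized water-filling procedure for \eqref{eq:ex-ante-favorite} (\Cref{lem:greedy-caps}). With accuracy $\delta$, this returns caps $Q_i\in[0,1]$ with $\sum_iQ_i\le m$ and
\[
 \sum_i\sig(M_i;Q_i)\ge\mathcal R/(1+\delta)
\]
in $\operatorname{poly}(n,m,1/\delta)$ time. When $m\ge n$ we simply take $Q_i=1$.

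\textbf{Step 2: single-buyer mechanisms and request probabilities.} Buyer $i$ runs $\sig(M_i;Q_i)$, implemented as a randomized price drawn independently of values (at most two options). A successful purchase of the favorite item $j$ becomes a request to item $j$. Item symmetry for buyer $i$ gives
\[
 p_{ij}=\Prb(\text{buyer $i$ requests $j$})\le Q_i/m .
\]
Ties in the favorite item are broken uniformly, and ties with the price are broken toward not buying. We then need $\sum_ip_{ij}\le\sum_iQ_i/m\le1$ for each $j$. This is the expected-demand condition required by the rank-one contention rule of \citet{KashaevSantiago23}. That rule selects each potential request, including hypothetical requests arising from misreports, with probability $c_{ij}\ge1-1/e$.

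\textbf{Step 3: contention and incentives.} We thin each selected request by $(1-1/e)/c_{ij}$, so every request is served with probability exactly $1-1/e$ regardless of report or item. \Cref{lem:utility-preserving-contention} then gives BIC, ex-post IR, nonnegative payments, matching-feasibility, and utility equal to
\[
 (1-1/e)\sum_i\sig(M_i;Q_i)\ge\frac{1-1/e}{1+\delta}\,\mathcal R\ge\frac{1-1/e}{2(1+\delta)}\,\OPT ,
\]
where the last inequality uses \eqref{eq:heterogeneous-upper}. Since $2e/(e-1)=2/(1-1/e)$, choosing $\delta=\varepsilon(1-1/e)/2$ makes the denominator $2e/(e-1)+\varepsilon$ exactly. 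Running time is polynomial because $1/\delta=O(1/\varepsilon)$, and the $p_{ij}$ and $c_{ij}$ are computable under the stated probability-evaluation assumptions (Appendix~\ref{app:non-iid-probabilities}).

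\textbf{Main obstacle.} The delicate point is that the service probability must be independent of which item the buyer picks, even though demand $\sum_ip_{ij}$ differs across items when buyers have different $F_i$. The thinning to exactly $1-1/e$ handles this, but it must also cover items a buyer could request only under a misreport, where $p_{ij}$ might be zero. I will follow the paper's convention of forcing buyer $i$'s request and running the rule against the other buyers' requests, which keeps $c_{ij}\ge1-1/e$ since the other buyers' demand is still at most $1$. The remaining check is that approximation error enters only multiplicatively through $\delta$, so that no additive loss from discretization remains; this relies on \Cref{lem:greedy-caps} giving a multiplicative guarantee.
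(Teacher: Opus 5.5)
Your proposal is correct and follows the paper's proof essentially step by step. It uses the same ingredients: greedy caps from \Cref{lem:greedy-caps} with $\delta=\varepsilon(1-1/e)/2$ (or $Q_i=1$ when $m\ge n$), the item-symmetry bound $p_{ij}\le Q_i/m$ giving expected demand at most $1$ per item, the thinned Kashaev--Santiago contention rule of \Cref{sec:hetservice}, and the chain $(1-1/e)\sum_i\sig(M_i;Q_i)\ge\frac{1-1/e}{1+\delta}\mathcal R\ge\frac{1-1/e}{2(1+\delta)}\OPT$, with the same choice of $\delta$ making the denominator exactly $2e/(e-1)+\varepsilon$.
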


\begin{proof}[Proof of \Cref{thm:iid-stronger}]
If $m\ge n$, set $Q_i=1$ for every buyer---these are optimal ex-ante caps. 
Otherwise, choose
$\delta:=\varepsilon(1-1/e)/2$ and use the
item-i.i.d.~cap procedure in \Cref{lem:greedy-caps} to compute
feasible caps satisfying
$\sum_i\sig(M_i;Q_i)\ge\mathcal R/(1+\delta)$.
For the following bound, use the same definition of $\delta$
in both cases.

Independently run $\sig(M_i;Q_i)$ for each
buyer, breaking item ties uniformly. Item symmetry bounds buyer
$i$'s probability of requesting any fixed item by $Q_i/m$, so
the requests are independent across buyers and have total
expected demand at most $\sum_iQ_i/m\le1$ for every item.
The contention construction of \Cref{sec:hetservice} therefore
applies directly: it preserves the stated incentive, payment,
and feasibility properties and retains a $1-1/e$ fraction of
the single-buyer utilities. Expected utility is thus at least
\[
 \left(1-\frac1e\right)\sum_i\sig(M_i;Q_i)
 \ge\frac{1-1/e}{1+\delta}\mathcal R
 \ge\frac{1-1/e}{2(1+\delta)}\OPT
 \ge\frac{\OPT}{2e/(e-1)+\varepsilon},
\]
where the inequalities use the computed caps' objective guarantee,
the ex-ante upper bound
\eqref{eq:heterogeneous-upper}, and the choice of $\delta$,
respectively. The computational guarantees follow from the cap
procedure, the single-dimensional optimization, and contention
probability calculations in Appendix~\ref{app:optimization}.
\end{proof}

\section{Approximate optimization and further implementation details}
\label{app:optimization}

This appendix provides the optimization and implementation
details used by our mechanisms. We first relate the computation of
$\sig$ to standard single-dimensional optimization and specify the
computational access used here (Appendix~\ref{sec:compute}).
We then give a shared discretized greedy procedure for choosing
item-i.i.d.~caps and demand-adjusted caps, with the approximation
guarantees and number of calls to the single-dimensional optimization
procedure for each application (Appendix~\ref{app:greedy-caps}).
Next, we prove that averaging the mechanisms from the demand-weight rounds
and reducing how often they run gives ex-ante feasible mechanisms with
the utility bound in \Cref{lem:non-iid-balanced} (Appendix~\ref{app:non-iid-average-proof}).
Finally, we explain how to compute request probabilities across these
rounds and the selection probabilities used to resolve contention
(Appendix~\ref{app:non-iid-probabilities}).
In the optimization procedures, we count the number of calls to the single-dimensional optimization procedure
separately from the operations used to choose caps.

\subsection{Solving the single-dimensional optimization problem}\label{sec:compute}

We formulate the single-dimensional problem directly in terms
of allocation and payment under posted prices. At a finite price $t\ge0$,
the buyer receives the item and pays $t$ exactly when $X>t$. Thus the
allocation and payment rules are
\[
  a_t(z):=\mathbf 1_{\{z>t\}},
 \qquad p_t(z):=t\mathbf 1_{\{z>t\}}.
\]
Rejecting at equality contributes no utility and can only
reduce the allocation. We use $t=\infty$ to denote no offer, with
$a_\infty=p_\infty=0$. Using the posted-price representation of a
single-dimensional mechanism, we optimize over probability distributions
$\mu$ on these prices:
\[
    \sig(X;q)
 =\max_{\mu}\left\{
   \int \E_X[Xa_t(X)-p_t(X)]\,d\mu(t):
   \int \E_X[a_t(X)]\,d\mu(t)\le q,\;\;
   \mu\text{ is a distribution}
 \right\}.
\]

Previous works have already characterized the optimal solution to be a mixture
over at most two prices (possibly including a price of $0$ or $\infty$), and
have described methodology on how to solve this type of optimization program \citep{CMM11,AFHH13,Alaei11}.
We refer to \citet{Alaei11} (Theorem 9 and Lemma 2) for details
on how to efficiently compute the concave closures required
in ironing, and how to efficiently find approximately-optimal mechanisms.
Following the standard computational assumptions in Bayesian
mechanism design, we assume efficient access to the ironed
virtual-value functions and the threshold and probability
calculations required by this procedure.

\subsection{Discretized greedy cap optimization}
\label{app:greedy-caps}
\label{app:demand-adjusted-caps}

The item-i.i.d.~cap program \eqref{eq:ex-ante-favorite} and the demand-adjusted program
\eqref{eq:11program} can be approximated by the same greedy discretized procedure.
Both maximize a sum of concave functions, each depending on one
cap, subject to a bound on the sum of the caps. We give the
procedure and its discretized optimality argument once, then
derive the accuracy guarantees for the two applications.

Let $\phi_1,\ldots,\phi_d$ be concave functions on $[0,1]$
with $\phi_i(0)=0$, and define the concave objective
$\Phi(q):=\sum_{i=1}^d\phi_i(q_i)$ for $q\in[0,1]^d$.
For an integer total cap $C\in\{1,\ldots,d\}$, consider
\begin{equation}\label{eq:concave-program}
 \max_{\substack{0\le q_i\le1\\\sum_iq_i\le C}}
\Phi(q).  
\end{equation}

We use the following greedy algorithm.
Fix an integer $K\ge1$ and use increments of $1/K$.
The integer $k_i$ records the increments assigned to coordinate $i$.
Define the gain of coordinate $i$'s $k$th increment by
$g_{ik}:=\phi_i(k/K)-\phi_i((k-1)/K)$ for $k=1,\ldots,K$.
The algorithm maintains $g_i$, the gain of coordinate $i$'s next
available increment.
\begin{enumerate}
 \item Set $k_i=0$ and $g_i=g_{i1}$ for every $i$.
 \item While $\sum_i k_i<CK$:
 \begin{enumerate}
  \item Choose a coordinate $i$ maximizing $g_i$, breaking ties
  arbitrarily. If $g_i\le0$, stop.
  \item Increase $k_i$ by 1. If $k_i=K$, set $g_i=-\infty$.
  Otherwise, update $g_i=g_{i,k_i+1}$.
 \end{enumerate}
 \item Return $q_i=k_i/K$ for every $i$.
\end{enumerate}
The returned caps satisfy $q_i\le1$ and $\sum_iq_i\le C$.
We stop when the largest available gain is nonpositive, since 
no remaining increment can improve the objective. Note that 
the procedure uses at most $d+CK$ evaluations of the
functions $\phi_i$: $d$ for initialization and at most one per
increment, since only the selected coordinate's next gain changes.
Selecting the largest available gain requires only
polynomial additional time in $d$ and $CK$.

We now analyze its approximation guarantees.
Let $q^{\mathrm G}$ be the vector returned by the actual greedy
algorithm. Denote its objective value by
$\Phi_{\mathrm G}:=\Phi(q^{\mathrm G})$. The optimal
value of the discretized problem is
\[
 \Phi_K^\star
 :=\max_{\substack{k_i\in\{0,\ldots,K\}\\\sum_i k_i\le CK}}
       \Phi(k_1/K,\ldots,k_d/K).
\]
We first prove that the greedy algorithm solves the discretized
problem optimally, that is, $\Phi_{\mathrm G}=\Phi_K^\star$.
Using the
increment gains defined in the algorithm, concavity gives
$g_{i1}\ge\cdots\ge g_{iK}$, so the largest available gain is
also the largest of all unselected gains.
Any optimal solution to the discretized problem decomposes
into at most $CK$ increments, whose gains we can order nonincreasingly.
A greedy exchange argument then shows that the greedy algorithm attains
at least the same objective value, so $\Phi_{\mathrm G}=\Phi_K^\star$.

We now compare the greedy objective $\Phi_{\mathrm G}$ with
the continuous optimal value $\Phi^\star:=\Phi(q^\star)$, where
$q^\star$ is an optimal continuous cap vector. For this comparison,
round each cap upward to $q_i^+:=\lceil Kq_i^\star\rceil/K$ and
denote the rounded vector's objective by $\Phi_+:=\Phi(q^+)$.
This vector remains in $[0,1]^d$, but
its number of increments satisfies
\begin{equation}\label{eq:greedy-rounded-increments}
 \sum_i Kq_i^+
 =\sum_i\lceil Kq_i^\star\rceil
 \le K\sum_i q_i^\star+d
 \le CK+d,
\end{equation}
where the first inequality uses $\lceil x\rceil\le x+1$,
and the second uses feasibility of $q^\star$. Thus $q^+$ may exceed the actual
algorithm's limit of $CK$ increments.

To bound the benefit of these additional $d$ increments, we define an \emph{extended
greedy run}, used only in the proof. It follows the greedy
algorithm's selection rules, assigning at
most $K$ increments to each coordinate, but omits the total
increment limit of $CK$ and stops when no positive gain remains. The extra
increments do not need to be assigned one per coordinate; the extended
run selects whichever remaining gains are largest. 

Let $\gamma_t$ be the gain selected at step $t$ of the extended
run, setting $\gamma_t=0$ after it stops. The greedy argument used to establish optimality
of $\Phi_{\mathrm G}$ also bounds the rounded vector's objective:
\begin{equation}
 \begin{aligned}
 \Phi_+
 &\le\sum_{t=1}^{CK+d}\gamma_t
 =\sum_{t=1}^{CK}\gamma_t+\sum_{t=CK+1}^{CK+d}\gamma_t\\
 &\le\sum_{t=1}^{CK}\gamma_t+d\gamma_{CK}
 \le\left(1+\frac{d}{CK}\right)\sum_{t=1}^{CK}\gamma_t
 =\left(1+\frac{d}{CK}\right)\Phi_{\mathrm G}.
 \end{aligned}
 \label{eq:greedy-prefix-comparison}
\end{equation}
where the first inequality follows from the greedy argument,
the second inequality bounds each additional gain by
$\gamma_{CK}$, and the third uses
$\gamma_{CK}\le (\sum_{t=1}^{CK}\gamma_t) / (CK)$,
since the gains are nonnegative and nonincreasing, and the
final equality holds because the greedy algorithm and the extended
run make the same choices until the greedy algorithm stops.

It remains to relate $\Phi_+$ to the continuous optimum
$\Phi^\star$. We account for the effect of rounding
separately in each application below.

\paragraph{Item-i.i.d.~optimization.}
If $m\ge n$, setting every $Q_i=1$ is optimal by monotonicity.
Otherwise, with $d=n$, $C=m$, and $\phi_i(q)=\sig(M_i;q)$,
the general concave program \eqref{eq:concave-program}
coincides with the item-i.i.d.~cap program
\eqref{eq:ex-ante-favorite}, whose optimal value is $\mathcal R$.

For $0<\varepsilon\le1$, choose
$K:=\lceil n/(m\varepsilon)\rceil$ and apply the preceding
greedy algorithm.
It returns caps $Q_i=k_i/K$ with $\sum_iQ_i\le m$.

\begin{lemma}\label{lem:greedy-caps}
For $m<n$ and $0<\varepsilon\le1$, the greedy algorithm
runs in $\operatorname{poly}(n,m,1/\varepsilon)$ time, makes
$\operatorname{poly}(n,1/\varepsilon)$ calls to the single-dimensional
optimization procedure, and returns feasible caps $Q_i$ satisfying
\[
 \sum_i\sig(M_i;Q_i)\ge\frac{\mathcal R}{1+\varepsilon}.
\]
\end{lemma}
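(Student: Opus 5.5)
We will obtain the lemma by combining three ingredients that are already in place: the runtime count of the greedy procedure, its exact optimality on the discretized problem, and the prefix bound \eqref{eq:greedy-prefix-comparison}. The only new work is to compare the rounded-up vector $q^+$ with the continuous optimum, for $\phi_i(q)=\sig(M_i;q)$, $d=n$, $C=m$ and $K=\lceil n/(m\varepsilon)\rceil$.

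\textbf{Runtime and feasibility.} The procedure uses at most $d+CK=n+mK\le n+n/\varepsilon+m$ evaluations of the functions $\sig(M_i;\cdot)$. Each evaluation is one call to the single-dimensional optimization procedure of Appendix~\ref{sec:compute}. Since $m<n$, this is $\operatorname{poly}(n,1/\varepsilon)$ calls. The selection steps add only polynomial overhead in $n$ and $mK$, hence $\operatorname{poly}(n,m,1/\varepsilon)$ time. Feasibility, namely $Q_i\le1$ and $\sum_iQ_i\le m$, holds by construction of the algorithm. We also check the hypotheses of the general procedure: $C=m\in\{1,\ldots,n\}$, and each $\phi_i$ is concave with $\phi_i(0)=0$. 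Concavity of $\sig(X;q)$ in $q$ holds because mixing two feasible mechanisms with caps $q,q'$ gives a feasible mechanism for the averaged cap with the averaged utility. Also $\sig(X;0)=0$, since a zero-probability allocation yields zero utility.

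\textbf{Rounding step.} The function $\sig(M_i;\cdot)$ is nondecreasing, because a larger cap enlarges the feasible set. Since $q_i^+\ge q_i^\star$ coordinatewise, we get $\Phi_+\ge\Phi^\star=\mathcal R$. Combining this with \eqref{eq:greedy-prefix-comparison} gives
\[
 \mathcal R\le\Phi_+\le\Bigl(1+\frac{n}{mK}\Bigr)\Phi_{\mathrm G}\le(1+\varepsilon)\,\Phi_{\mathrm G},
\]
where the last step uses $K\ge n/(m\varepsilon)$. Since $\Phi_{\mathrm G}=\sum_i\sig(M_i;Q_i)$, this is exactly the claimed bound.

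\textbf{Main obstacle.} The delicate point is the first inequality of \eqref{eq:greedy-prefix-comparison}, $\Phi_+\le\sum_{t\le CK+d}\gamma_t$. To justify it, we write $\Phi_+$ as the sum of the at most $CK+d$ increment gains of $q^+$. Each coordinate's gains are nonincreasing by concavity, so $q^+$ uses a prefix of each coordinate's gain list. Discarding any nonpositive gains only increases this sum. The extended greedy run takes the globally largest available gains in order, and each coordinate's list is decreasing. Therefore its first $CK+d$ selections dominate any feasible choice of $CK+d$ prefix increments; this is the standard exchange argument. We then need that the extended run agrees with the actual greedy run for its first $CK$ steps. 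This holds because both use the same selection rule; the actual run may stop early only when the largest gain is nonpositive, in which case every later $\gamma_t=0$. Finally, the step $\gamma_{CK}\le\frac1{CK}\sum_{t\le CK}\gamma_t$ relies on the selected gains being nonincreasing across steps. That property follows from each per-coordinate list being nonincreasing, so the maximum available gain never increases from one step to the next.
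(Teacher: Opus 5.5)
Your proposal is correct and follows the paper's proof: you round the optimal caps up, use monotonicity of $\sig(M_i;\cdot)$ to get $\mathcal R\le\Phi_+$, apply the prefix comparison with $d=n$, $C=m$, conclude with $mK\ge n/\varepsilon$, and bound the runtime by the $n+mK$ evaluation count. Your additional checks (concavity, $\sig(X;0)=0$, and the exchange argument behind the prefix bound) fill in details that the paper states only once, for the general greedy procedure.
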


\begin{proof}
With $d=n$, $C=m$, and $\phi_i(q)=\sig(M_i;q)$, we have
$\Phi^\star=\mathcal R$ and
$\Phi_{\mathrm G}=\sum_i\sig(M_i;Q_i)$.
For the rounded optimal caps $q_i^+=\lceil Kq_i^\star\rceil/K$,
we have $q_i^+\ge q_i^\star$. Combining monotonicity with
the greedy comparison gives
\begin{align*}
 \mathcal R
 &=\Phi^\star\le\Phi_+\\
 &\le\left(1+\frac{n}{mK}\right)\Phi_{\mathrm G}\\
 &\le(1+\varepsilon)\sum_i\sig(M_i;Q_i),
\end{align*}
where the first inequality follows from rounding the optimal caps upward and monotonicity of $\sig(M_i;\cdot)$,
the second is \eqref{eq:greedy-prefix-comparison} with $C=m$
and $d=n$, and the third uses $mK\ge n/\varepsilon$.

Finally, the greedy algorithm uses at most
$n+mK=O(n/\varepsilon)$ function evaluations, with
polynomial running time as established for the general
procedure, giving the stated computational guarantees.

\end{proof}
\paragraph{Demand-adjusted optimization.}

We next apply the greedy algorithm to the demand-adjusted
optimization problem from \eqref{eq:11program}. This problem
chooses itemwise allocation caps with total at most 1 to
maximize the sum of the optimal single-dimensional utilities,
minus their demand penalties. In the general continuous concave
program \eqref{eq:concave-program}, set $d=m$, $C=1$, and
\[
 \phi_j(q):=\sig(V_j;q)-\frac{e}{e-1}\lambda_jq.
\]
With these choices, the general program is precisely the
demand-adjusted optimization problem. Denote its optimal value by
$\mathcal R_\lambda$, meaning that $\Phi^\star=\mathcal R_\lambda$,
and let $\widehat q_j^\star$ be optimal caps.

For $0<\delta\le1$, choose
$K:=\lceil m(1+e/(e-1))/\delta\rceil$ and apply the greedy algorithm.
The following lemma gives an additive guarantee for this optimization.

\begin{lemma}\label{lem:greedy-demand-adjusted}
Let $S:=\sum_{i,j}\E[V_{ij}]>0$ and
$0\le\lambda_j\le S$ for every item $j$.
For $0<\delta\le1$, the greedy algorithm runs in
$\operatorname{poly}(m,1/\delta)$ time, makes
$\operatorname{poly}(m,1/\delta)$ calls to the single-dimensional
optimization procedure, and returns feasible caps
$\widehat q_j$ satisfying
\[
 \sum_j\left[\sig(V_j;\widehat q_j)
             -\frac{e}{e-1}\lambda_j\widehat q_j\right]
 \ge\mathcal R_\lambda-\delta S.
\]
\end{lemma}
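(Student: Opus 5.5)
The plan is to reuse the generic analysis of the discretized greedy procedure with $d=m$, $C=1$ and $\phi_j(q)=\sig(V_j;q)-\frac{e}{e-1}\lambda_jq$. Each $\phi_j$ is concave with $\phi_j(0)=0$, since $\sig(V_j;\cdot)$ is concave and the penalty is linear. The only new ingredient is an additive (rather than multiplicative) rounding comparison, because $\phi_j$ is no longer monotone. Feasibility ($\widehat q_j\le1$, $\sum_j\widehat q_j\le1$) and the running time, at most $m+K=\operatorname{poly}(m,1/\delta)$ evaluations of the $\phi_j$, follow directly from the general procedure. The argument above also gives $\Phi_{\mathrm G}=\Phi_K^\star$.

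\textbf{Step 1 (rounding cost).} Take optimal continuous caps $\widehat q^\star$ and round up to $q_j^+=\lceil K\widehat q_j^\star\rceil/K$, exactly as before. Since $\sig(V_j;\cdot)$ is nondecreasing, rounding up can only raise the utility part. The penalty part grows by at most $\frac{e}{e-1}\lambda_j/K\le\frac{e}{e-1}S/K$ per coordinate, using $\lambda_j\le S$. Summing over items gives
\[
 \Phi_+\ge\mathcal R_\lambda-\frac{e}{e-1}\frac{mS}{K}.
\]

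\textbf{Step 2 (extra increments).} By \eqref{eq:greedy-rounded-increments}, $q^+$ uses at most $K+m$ increments. Its objective is the sum of its own increments, which is at most the sum of its positive increments. That sum is in turn at most the sum of the $K+m$ largest gains $\gamma_t$ of the extended greedy run. This is the same exchange argument as in \eqref{eq:greedy-prefix-comparison}, and it uses only concavity, not monotonicity. As there, the extra gains give
\[
 \Phi_+\le\Phi_{\mathrm G}+m\gamma_K\le\Bigl(1+\frac mK\Bigr)\Phi_{\mathrm G},
\]
where $\gamma_K\le\Phi_{\mathrm G}/K$ because the selected gains are nonnegative and nonincreasing.

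\textbf{Step 3 (converting to additive error).} This is the step I expect to need care, because here the multiplicative slack must be turned into additive slack. It requires an absolute bound on $\Phi_{\mathrm G}$. The penalty is nonnegative, and a single-dimensional mechanism yields at most the full value, so $\sig(V_j;q)\le\E[V_j]$. Hence
\[
 \Phi_{\mathrm G}\le\sum_j\sig(V_j;\widehat q_j)\le\sum_j\E[V_j]\le S,
\]
where the last sum runs over this buyer's items and is at most $S$. Combining the three steps gives
\[
 \Phi_{\mathrm G}\ge\mathcal R_\lambda-\Bigl(1+\frac{e}{e-1}\Bigr)\frac{mS}{K}.
\]
The choice $K=\lceil m(1+e/(e-1))/\delta\rceil$ makes the error term at most $\delta S$, which is the claimed guarantee.

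A minor point to check is the case where the greedy run stops early because all remaining gains are nonpositive. In that case the remaining $\gamma_t$ are set to $0$, so Step 2 still holds.
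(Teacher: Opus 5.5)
Your proposal is correct and follows essentially the same route as the paper. Like the paper, you round the optimal caps up, losing at most $\frac{e}{e-1}\lambda_j/K$ per item in penalty, apply the greedy prefix comparison \eqref{eq:greedy-prefix-comparison} with $C=1$ and $d=m$, and turn the multiplicative $m/K$ slack into additive slack using an absolute bound. The only cosmetic difference is that you bound $\Phi_{\mathrm G}\le S$ directly in original units, whereas the paper normalizes by $S$, uses $\mathcal R_\lambda\le 1$ and $\sum_j\lambda_j\le m$, and gets the marginally tighter denominator $K+m$.
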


\begin{proof}
To bound the discretization error,  
we normalize values, payments, and demand weights by
dividing them by $S$. All quantities in the following calculation
are expressed in normalized units, using the same notation;
scaling by a positive constant preserves the greedy choices.
For an integer $K\ge1$, apply the greedy algorithm to obtain caps
$\widehat q_j$, with objective
$\Phi_{\mathrm G}=\sum_j[\sig(V_j;\widehat q_j)
-\frac{e}{e-1}\lambda_j\widehat q_j]$.
As before, let $\widehat q_j^+:=\lceil K\widehat q_j^\star\rceil/K$
be the rounded optimal caps and write $\Phi_+$ for their objective.
Rounding upward does not decrease the utility terms, by
monotonicity of $\sig$, but increases each cap's demand penalty
by at most $\frac1K \cdot \frac{e}{e-1}\lambda_j$.
Subtracting this upper bound on the rounding loss from
the continuous optimum gives
\[
 \mathcal R_\lambda-\frac{e}{(e-1)K}\sum_{j=1}^m\lambda_j
  \le\Phi_+.
\]
Separately, the greedy comparison in
\eqref{eq:greedy-prefix-comparison}, with $C=1$ and $d=m$,
bounds the rounded objective in terms of the returned objective:
\[
    \Phi_+\le\left(1+\frac mK\right)\Phi_{\mathrm G}.
\]

Combining these two inequalities gives
\[
 \mathcal R_\lambda-\Phi_{\mathrm G}
 \le\frac{m\mathcal R_\lambda+\frac{e}{e-1}\sum_j\lambda_j}{K+m}
 \le\frac{m(1+e/(e-1))}{K+m}
  \le\frac{m(1+e/(e-1))}{K}
 \le\delta,
\]
where the second inequality uses $\mathcal R_\lambda\le\sum_j\E[V_j]\le1$, since payments and
demand penalties are nonnegative and values are normalized by $S$,
and $\sum_j\lambda_j\le m$, since demand weights $\lambda_j$ are bounded by
$S$ before normalization and thus by 1 afterward,
the last inequality is by the definition of $K$.

The additive error is thus $\delta$ in normalized units, 
or $\delta S$ in the original units.
Finally, the greedy algorithm uses at most
$m+K=O(m/\delta)$ function evaluations, with polynomial
running time as established for the general procedure, giving the stated
computational guarantees.
\end{proof}

\subsection{Proof of the demand-weight balancing lemma}
\label{app:non-iid-average-proof}

In this subsection, we prove \Cref{lem:non-iid-balanced}, restated below. 
We first bound the average utility minus the excess demand generated
by the demand-weight updates, then show that reducing how often
the resulting mechanisms run restores ex-ante feasibility while
preserving the claimed utility guarantee.

Recall the construction of $\indmech$ in
\Cref{sec:lambdachoice}: starting from $\lambda^1=0$, we compute
single-buyer mechanisms $A_i^t$ in each of $T$ rounds and update
the demand weights using their expected requests. The final
mechanism independently samples a round for each buyer and runs
the corresponding $A_i^t$, possibly making no offer to reduce
expected demand. We analyze these $T$ rounds to justify that
choice of how often the sampled mechanisms run.

Let $S:=\sum_{i,j}\E[V_{ij}]$. Divide values, payments, and
demand weights by $S$ for the purposes of this subsection; this preserves incentives and
multiplicative approximation ratios.
The update \eqref{eq:non-iid-price-update} becomes
\[
  \lambda_j^{t+1}
 :=\min\left\{1,\max\left\{0,
 \lambda_j^t+\eta(r_j^t-1)\right\}\right\}.
\]

Write $U^t:=\sum_iU(A_i^t)$ and retain the definitions
of expected requests $r_j^t$ from \Cref{sec:lambdachoice}. Write $\overline U:=T^{-1}\sum_tU^t$ and
$\overline r_j:=T^{-1}\sum_tr_j^t$ for the average total utility
and the average number of requests for item $j$. We will lower-bound
$\overline U-\sum_j(\overline r_j-1)^+$: the utility remaining
after subtracting each item's expected requests above its supply
of 1. This bound will let us reduce how often the mechanisms run
to restore ex-ante feasibility while preserving high utility.

To obtain this bound, we compare the demand-weight updates with a
fixed vector that assigns weight 1 to items with
$\overline r_j>1$ and zero to the others. Its demand-adjustment
term is exactly $-\sum_j(\overline r_j-1)^+$. Combining this
comparison with the utility inequality that holds in every round
gives the following guarantee for the averaged mechanisms.

\begin{restatable}{lemma}{noniidaverage}\label{lem:non-iid-average}
With step size $\eta=1/(n\sqrt T)$, after $T$
demand-weight update rounds,
\[
 \overline U-\sum_j(\overline r_j-1)^+
  \ge\frac{\OPT}{2+e/(e-1)}-\frac{nm}{\sqrt T}-n\delta.
\]
where $\delta\in(0,1]$ is the additive error in
each constructed single-buyer mechanism's demand-adjusted utility
guarantee.
\end{restatable}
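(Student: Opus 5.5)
The plan is to combine a per-round Lagrangian lower bound with the standard projected online gradient descent regret bound. The comparison point is the fixed vector $\lambda^\ast$ with $\lambda^\ast_j=\mathbf 1\{\overline r_j>1\}$, which lies in the feasible box $[0,1]^m$ after normalization by $S$.

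\textbf{Per-round bound.} In round $t$, \Cref{lem:non-iid-local} (with the additive error $\delta$ from the approximate cap computation of \Cref{lem:greedy-demand-adjusted}) gives $\G(A_i^t,C_i)\ge U(C_i)/(2+e/(e-1))-\delta$ for every buyer $i$, taking $C_i$ to be the optimal ex-ante single-buyer comparators. Summing over buyers and using $\sum_i\eaqj(C_i)\le1$ together with $\lambda^t\ge0$, as in \eqref{eq:non-iid-price-certificate}, yields
\[
 U^t+\sum_j\lambda_j^t(1-r_j^t)\ \ge\ \mathcal L_{\lambda^t}(A^t)\ \ge\ \frac{\OPT}{2+e/(e-1)}-n\delta .
\]
The normalization by $S$ rescales $\OPT$, utilities and $\lambda$ by the same factor, so I state the bound in normalized units.

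\textbf{Regret step.} The functions $f_t(\lambda):=\sum_j\lambda_j(1-r_j^t)$ are linear with gradient $g^t=(1-r_j^t)_j$, and the update \eqref{eq:non-iid-price-update} is exactly projected gradient descent on them over $[0,1]^m$. Zinkevich's bound therefore gives, for any fixed $\lambda^\ast\in[0,1]^m$,
\[
 \sum_t f_t(\lambda^t)-\sum_t f_t(\lambda^\ast)\le \frac{\mathrm{diam}^2}{2\eta}+\frac{\eta}{2}\sum_t\|g^t\|^2 .
\]
Here $\mathrm{diam}^2\le m$, and $|1-r_j^t|\le n$ since $r_j^t\in[0,n]$, so $\|g^t\|^2\le mn^2$. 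With $\eta=1/(n\sqrt T)$ the right-hand side is at most $mn\sqrt T/2+mn\sqrt T/2=nm\sqrt T$. The plan is then:
\begin{itemize}
\item Average the per-round inequality over $t$: $\overline U+\frac1T\sum_t f_t(\lambda^t)\ge \OPT/(2+e/(e-1))-n\delta$.
\item Replace $\frac1T\sum_tf_t(\lambda^t)$ by $\frac1T\sum_tf_t(\lambda^\ast)+nm/\sqrt T$.
\item Note that $\frac1T\sum_tf_t(\lambda^\ast)=\sum_j\lambda^\ast_j(1-\overline r_j)=-\sum_j(\overline r_j-1)^+$ by the choice of $\lambda^\ast$.
\end{itemize}
Rearranging gives the claimed inequality.

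\textbf{Main obstacle.} The delicate step is making sure the per-round bound uses only the learned $\lambda^t$ while the comparators $C_i$ stay fixed across rounds. I will invoke \Cref{lem:non-iid-local} with the same $C_i$ every round, and check that its additive $\delta$ error, which after normalization is at most $\delta$ per buyer, accumulates to exactly $n\delta$.

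A second point to verify is that the regret inequality is applied in the right direction. We minimize the penalty terms $f_t$ against $\lambda^\ast$, and a larger $\sum_tf_t(\lambda^t)$ only helps the lower bound. So I need the upper bound on $\sum_t f_t(\lambda^t)$ relative to $\lambda^\ast$, which is what the descent guarantee for minimization of $f_t$ supplies. Constants in Zinkevich's bound may need the slack $\sqrt m\le m$ to land on exactly $nm/\sqrt T$.
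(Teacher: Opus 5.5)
Your proposal is correct and follows the paper's proof essentially step for step. The steps are the per-round bound $U^t+\lambda^t\cdot(\mathbf 1-r^t)\ge \OPT/(2+e/(e-1))-n\delta$ from \Cref{lem:non-iid-local}, the projected online gradient descent regret bound against a fixed $\lambda^{\mathrm{fix}}\in[0,1]^m$ (which the paper re-derives from non-expansiveness of clipping and telescoping rather than citing Zinkevich), and the choice $\lambda^{\mathrm{fix}}_j=\mathbf 1\{\overline r_j>1\}$; no extra slack is needed for the constant, since $\lambda^1=0$ gives $\|\lambda^1-\lambda^{\mathrm{fix}}\|_2^2\le m$ and the two regret terms are each exactly $nm\sqrt T/2$ as you computed.
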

\begin{proof}[Proof of \Cref{lem:non-iid-average}]
We follow the online gradient descent analysis of
\citet{Zinkevich03}, giving the details for our demand-weight updates.
Write $r^t=(r_1^t,\ldots,r_m^t)$ and
$\overline r=T^{-1}\sum_{t=1}^T r^t$, and let $\mathbf 1$ be the
all-1s vector. We compare the term $\lambda^t\cdot(\mathbf 1-r^t)$
with the corresponding term obtained by using one fixed weight vector
$\lambda^{\mathrm{fix}}\in[0,1]^m$ in every round, with an average
error that vanishes as $T$ grows. We leave $\lambda^{\mathrm{fix}}$ arbitrary
so that, after bounding this comparison, we can choose its weights
to count exactly the items' average excess demand. This vector is
used only in the proof and does not change the update procedure.
To obtain this comparison, we track how the squared distance
from $\lambda^{\mathrm{fix}}$ changes in each update:
\begin{align*}
 \|\lambda^{t+1}-\lambda^{\mathrm{fix}}\|_2^2
 &\le\|\lambda^t+\eta(r^t-\mathbf 1)-\lambda^{\mathrm{fix}}\|_2^2\\
 &=\|\lambda^t-\lambda^{\mathrm{fix}}\|_2^2
   -2\eta(\lambda^t-\lambda^{\mathrm{fix}})\cdot(\mathbf 1-r^t)
   +\eta^2\|r^t-\mathbf 1\|_2^2,
\end{align*}
where the inequality is because clipping $\lambda^{t+1}$ moves $\lambda^t+\eta(r^t-\mathbf 1)$ inside $[0,1]^m$ (if required),
and $\lambda_j^{\mathrm{fix}}$ also lies $\in[0,1]^m$, and therefore clipping can only reduce the distance
to $\lambda_j^{\mathrm{fix}}$.

Rearranging and dividing by $2\eta>0$ yields
\begin{equation}\label{eq:appendix-price-one-round}
 (\lambda^t-\lambda^{\mathrm{fix}})\cdot(\mathbf 1-r^t)
 \le \frac{\|\lambda^t-\lambda^{\mathrm{fix}}\|_2^2-\|\lambda^{t+1}-\lambda^{\mathrm{fix}}\|_2^2}{2\eta}
      +\frac{\eta}{2}\|r^t-\mathbf 1\|_2^2.
\end{equation}

We now sum inequality~\eqref{eq:appendix-price-one-round}
over rounds and divide by $T$:
\begin{align}
 \frac1T\sum_{t=1}^T\lambda^t\cdot(\mathbf 1-r^t)
 &=\lambda^{\mathrm{fix}}\cdot(\mathbf 1-\overline r)
   +\frac1T\sum_{t=1}^T(\lambda^t-\lambda^{\mathrm{fix}})\cdot(\mathbf 1-r^t)
   \notag \\
 &\le\lambda^{\mathrm{fix}}\cdot(\mathbf 1-\overline r)
   +\frac{1}{2\eta T}\sum_{t=1}^T
     \left(\|\lambda^t-\lambda^{\mathrm{fix}}\|_2^2
           -\|\lambda^{t+1}-\lambda^{\mathrm{fix}}\|_2^2\right)
   +\frac{\eta}{2T}\sum_{t=1}^T\|r^t-\mathbf1\|_2^2
   \notag\\
 &=\lambda^{\mathrm{fix}}\cdot(\mathbf 1-\overline r)
   +\frac{\|\lambda^1-\lambda^{\mathrm{fix}}\|_2^2
          -\|\lambda^{T+1}-\lambda^{\mathrm{fix}}\|_2^2}{2\eta T}
   +\frac{\eta}{2T}\sum_{t=1}^T\|r^t-\mathbf1\|_2^2
   \notag\\
 &\le\lambda^{\mathrm{fix}}\cdot(\mathbf 1-\overline r)
   +\frac{\sum_{j=1}^m(\lambda_j^{\mathrm{fix}})^2}{2\eta T}
   +\frac{\eta}{2T}\,Tmn^2
   \notag\\
 &\le\lambda^{\mathrm{fix}}\cdot(\mathbf 1-\overline r)+\frac{nm}{\sqrt T},
 \label{eq:appendix-price-comparator}
\end{align}
where the first equality adds and subtracts the fixed-weight term and
uses $\overline r=T^{-1}\sum_t r^t$, the first inequality applies
\eqref{eq:appendix-price-one-round}, the second equality
telescopes terms, the second inequality uses
$\lambda^1=0$, drops the nonpositive final-distance term and
uses $|r_j^t-1|\le n$ (since $r_j^t$ is the sum of $n$ request
probabilities), and the last inequality uses
$0\le\lambda_j^{\mathrm{fix}}\le1$ and $\eta=1/(n\sqrt T)$.

Since this bound holds for every $\lambda^{\mathrm{fix}}\in[0,1]^m$, choose
$\lambda_j^{\mathrm{fix}}=1$ if $\overline r_j>1$ and $\lambda_j^{\mathrm{fix}}=0$ otherwise. For this
choice, $\lambda^{\mathrm{fix}}\cdot(\mathbf 1-\overline r)=-\sum_j(\overline r_j-1)^+$, so
inequality~\eqref{eq:appendix-price-comparator} turns into
\begin{equation}\label{eq:appendix-price-excess}
 \frac1T\sum_{t=1}^T\lambda^t\cdot(\mathbf 1-r^t)
 \le -\sum_j(\overline r_j-1)^+ +\frac{nm}{\sqrt T}.
\end{equation}

Computing approximate caps with \Cref{lem:greedy-demand-adjusted}
and using them to design the mechanisms prescribed by
\Cref{lem:non-iid-local} incurs at most $\delta$ additive error
per buyer in normalized units. Thus, the total error across
the $n$ buyers is at most $n\delta$, giving in every round
\[
 U^t+\lambda^t\cdot(\mathbf1-r^t)
 \ge\frac{\OPT}{2+e/(e-1)}-n\delta.
\]
 Averaging this inequality over the $T$ rounds and 
 using \eqref{eq:appendix-price-excess} proves the lemma.
\[
 \frac{\OPT}{2+e/(e-1)}-n\delta
 \le\overline U
      +\frac1T\sum_{t=1}^T\lambda^t\cdot(\mathbf 1-r^t)
 \le\overline U-\sum_j(\overline r_j-1)^+
      +\frac{nm}{\sqrt T}.
\]
\end{proof}

We are now ready to prove \Cref{lem:non-iid-balanced}.

\noniidbalanced*

\begin{proof}[Proof of \Cref{lem:non-iid-balanced}]
It suffices to consider
$0<\varepsilon<1/(2+e/(e-1))$, since otherwise the Lemma's
utility lower bound is nonpositive (and is trivially true). 
We maintain the normalized units used in this subsection.
Choose
\[
  T:=\left\lceil\frac{4n^4m^4}{\varepsilon^2}\right\rceil,
 \qquad \delta\le\frac{\varepsilon}{2n^2m}.
\]
By the definition of $\indmech$, its total expected
number of requests for each item $j$ is $\frac{\overline r_j}{\max\{1,\max_k\overline r_k\}}\le1$
for every $j$, making mechanism $\indmech$ ex-ante feasible.

We now bound the utility lost by reducing participation:
\begin{align}
U(\indmech)
 &=\frac{\overline U}{\max\{1,\max_j\overline r_j\}}
       \notag & \text{definition of $\indmech$}\\
 &\ge\overline U-\max_j(\overline r_j-1)^+
       \notag & \text{$\overline U\le1$, $1-1/z\le z-1$ for $z\ge1$}\\
 &\ge\overline U-\sum_j(\overline r_j-1)^+
       \notag & \text{maximum bounded by sum}\\
 &\ge\frac{\OPT}{2+e/(e-1)}-\frac{nm}{\sqrt T}-n\delta.
       & \text{\Cref{lem:non-iid-average}}
 \label{eq:non-iid-request-scaling}
\end{align}
The first inequality uses $\overline U\le1$ which is because utility is
upper bounded by welfare, and
expected welfare is at most $\sum_{i,j}\E[V_{ij}]=1$, even without
enforcing a matching allocation.
The final guarantee follows:
\begin{align*}
 U(\indmech)
 &\ge\frac{\OPT}{2+e/(e-1)}-\frac{nm}{\sqrt T}-n\delta
   &\text{\eqref{eq:non-iid-request-scaling}}\\
 &\ge\frac{\OPT}{2+e/(e-1)}
       -\frac{\varepsilon}{2nm}-\frac{\varepsilon}{2nm}
   &\text{choices of $T$ and $\delta$}\\
 &\ge\frac{\OPT}{2+e/(e-1)}-\varepsilon\OPT
   &\text{$\OPT\ge1/(nm)$}\\
 &=\left(\frac{1}{2+e/(e-1)}-\varepsilon\right)\OPT,
\end{align*}
where $\OPT\ge1/(nm)$ because allocating the item for free to the buyer in the buyer-item pair with the largest expected value is feasible and yields utility at least $1/(nm)$, since $\sum_{i,j}\E[V_{ij}]=1$.
Multiplying values
and payments by the normalizing factor $S$ restores the original units without changing
request probabilities or incentive properties.
\end{proof}

\subsection{Computing request and contention probabilities}
\label{app:non-iid-probabilities}

This subsection explains how to compute the request
probabilities $p_{ij}$ and selection probabilities $c_{ij}$ required
to implement the contention-resolution scheme in \Cref{sec:hetservice}.
These probabilities are computed before buyers report their values
or draw their rounds. A buyer's request
probabilities depend only on that buyer's value distributions and
single-buyer mechanisms; contention probabilities use the final
request probabilities of all buyers for the same item.

For each round $t$, let $p_{ij}^t:=\eaqj(A_i^t)$ be the
probability that buyer $i$ requests item $j$ under mechanism $A_i^t$.
To compute this for a fixed price vector, you condition on item $j$ yielding positive
surplus $y$. Independence makes the probability that $j$ is chosen
the product of the probabilities that each other item's surplus
is below $y$. Averaging over the conditioned
item $j$'s surplus $y$, assigning zero probability when it is nonpositive,
gives the request probability for a fixed price. This idea can be applied to the pricing sub-mechanism. 
For the free-lottery sub-mechanism we can use (weighted) values instead of surpluses,
multiplying the probability of choosing item $j$ by $\ell_j(h)$. Taking into account the mixing probabilities of the sub-mechanisms, we can thus compute $p_{ij}^t$. 
We assume these probability evaluations can be performed efficiently from the value distributions.

The final request probabilities are
\[
 p_{ij}=\frac{1}{T\max\{1,\max_k\overline r_k\}}
          \sum_{t=1}^T p_{ij}^t.
\]
The formula averages over all rounds and includes the probability
of making no offer; it does not condition on the drawn round.

To compute the probability $c_{ij}$ that the
contention rule selects buyer $i$ when they request item $j$,
we include buyer $i$ among the requesters for item $j$ and let every other buyer $k$ request
independently with probability $p_{kj}$. Then $c_{ij}$ is buyer
$i$'s probability of selection under the contention rule.
We also define $c_{ij}$ when $p_{ij}=0$, keeping the rule's
probability parameters unchanged, to preserve incentives for requests
that can arise through misreports. 

If $i$ is the only buyer requesting item $j$, the
contention rule in \Cref{sec:hetservice} selects them with
probability 1; if there are other requests, forming a requester
set $A$ of size $r\ge2$, their selection probability is
\begin{equation}\label{eq:non-iid-rationing}
 \frac1r\left(
  1+\frac{\sum_{k\in A\setminus\{i\}}p_{kj}}{r-1}-p_{ij}
 \right).
\end{equation}
Let $K:=|A\setminus\{i\}|$
count the other requesters. Averaging the contention rule with
$r=k+1$ on the event $K=k$ gives
\[
\begin{aligned}
 c_{ij}
 &=\Prb(K=0)+\sum_{k=1}^{n-1}
   \E\!\left[\frac1{k+1}\left(
       1+\frac{\sum_{\ell\in A\setminus\{i\}}p_{\ell j}}{k}-p_{ij}
       \right)
              \mathbf1_{\{K=k\}}\right]\\
 &=\Prb(K=0)+\sum_{k=1}^{n-1}
 \frac{(1-p_{ij})\Prb(K=k)
             +\frac1k\E\!\left[
                \left(\sum_{\ell\in A\setminus\{i\}}p_{\ell j}\right)
                \mathbf1_{\{K=k\}}\right]}{k+1}.
\end{aligned}
\]
For each pair $(i,j)$, we compute $c_{ij}$ in $O(n^2)$ 
operations by the following procedure. We compute
$a_k:=\Prb(K=k)$ and
$b_k:=\E[(\sum_{\ell\in A\setminus\{i\}}p_{\ell j})\mathbf1_{\{K=k\}}]$
by processing the other buyers one at a time.
Start with $a_0=1$, $b_0=0$, and zero elsewhere. When the next
buyer $\ell$ has final request probability $p=p_{\ell j}$, update
the arrays to
\[
 a'_k=(1-p)a_k+pa_{k-1},\qquad
 b'_k=(1-p)b_k+pb_{k-1}+p^2a_{k-1},
\]
where entries with negative indices are zero. The first
update separates the cases in which this buyer does or does not
request $j$. In the second, a request also adds $p$ to the sum
of the requesters' probabilities; this
event has probability $p$, producing the term $p^2a_{k-1}$.
After processing all buyers other than $i$, substitute the resulting
$a_k$ and $b_k$ into the displayed formula for $c_{ij}$.

\end{document}

%% file: paper-macros.tex
\newtheorem{corollary}{Corollary}
\newtheorem{observation}{Observation}
\crefname{observation}{observation}{observations}
\Crefname{observation}{Observation}{Observations}
\newcommand{\mfree}{\mathcal M^{\mathrm{free}}}
\newcommand{\mpp}{\mathcal{M}^{UD}}
\newcommand{\jpp}{\mathcal{M}^{1}_{j}}
\newcommand{\E}{\mathbb E}
\newcommand{\Prb}{\Pr}
\newcommand{\Vvec}{\mathbf V}
\newcommand{\Wvec}{\mathbf W}
\newcommand{\Zvec}{\mathbf Z}
\newcommand{\Vprofile}{\mathcal V}
\newcommand{\Rplus}{\mathbb R_+}
\newcommand{\SingleOPT}{\mathrm{OPT}^{1,m}}
\newcommand{\OPT}{\mathrm{OPT}^{n,m}}

\newcommand{\UP}{U(\mpp)}
\newcommand{\UF}{U(\mfree)}
\newcommand{\utilsingleiidmech}{U(\singleiidmech)}
\newcommand{\M}{\mathcal{M}}
\renewcommand{\L}{\mathcal{L}}
\newcommand{\G}{G_{\lambda}}
\newcommand{\eaq}{q^{\mathrm{alloc}}}
\newcommand{\eaqj}{\eaq_j}
\newcommand{\sig}{\mathrm{OPT}^{1,1}}
\newcommand{\singlehetmech}{\mathcal M_{\mathrm{het}}^{1,m}}
\newcommand{\singleiidmech}{\mathcal M_{\mathrm{iid}}^{1,m}(Q)}
\newcommand{\iidmech}{\mathcal M_{\mathrm{iid}}^{n,m}}
\newcommand{\hetmech}{\mathcal M_{\mathrm{het}}^{n,m}}
\newcommand{\indmech}{\overline{\mathcal M}_{\mathrm{het}}^{n,m}}
\newtheorem{theorem}{Theorem}
\newtheorem{lemma}{Lemma}
\numberwithin{equation}{section}

%% file: masterbib.bib
@INPROCEEDINGS{Alaei11,
author={Alaei, S.},
booktitle={Foundations of Computer Science (FOCS), 2011 IEEE 52nd Annual Symposium on},
title={{Bayesian Combinatorial Auctions: Expanding Single Buyer Mechanisms to Many Buyers}},
year={2011},
pages={512-521},
doi={10.1109/FOCS.2011.90},
ISSN={0272-5428},
month={Oct},}

@inproceedings{AFHH13,
  author    = {Alaei, Saeed and Fu, Hu and Haghpanah, Nima
               and Hartline, Jason D.},
  title     = {{The Simple Economics of Approximately Optimal Auctions}},
  booktitle = {Proceedings of the 54th Annual IEEE Symposium
               on Foundations of Computer Science (FOCS)},
  pages     = {628--637},
  year      = {2013},
  publisher = {IEEE},
  doi       = {10.1109/FOCS.2013.73},
  url       = {https://arxiv.org/abs/1206.3541}
}

@inproceedings{BILW,
	Author = {Babaioff, Moshe and Immorlica, Nicole and Lucier, Brendan and Weinberg, S. Matthew},
	Booktitle = {Foundations of Computer Science (FOCS), 2014 IEEE 55th Annual Symposium on},
	Organization = {IEEE},
	Pages = {21--30},
	Title = {{A Simple and Approximately Optimal Mechanism for an
                   Additive Buyer}},
	Year = {2014}}

@inproceedings{BOT25,
author = {Berzack, Batya and Oshman, Rotem and Talgam-Cohen, Inbal},
title = {{Dynamic Rental Games with Stagewise Individual Rationality}},
year = {2025},
isbn = {9798400719431},
publisher = {Association for Computing Machinery},
address = {New York, NY, USA},
url = {https://doi.org/10.1145/3736252.3742624},
doi = {10.1145/3736252.3742624},
booktitle = {Proceedings of the 26th ACM Conference on Economics and Computation},
pages = {785},
numpages = {1},
location = {Stanford University, Stanford, CA, USA},
series = {EC '25}
}

@misc{BOT26,
      title={{Optimal Auction Design for Constrained Buyers}}, 
      author={Batya Berzack and Rotem Oshman and Inbal Talgam-Cohen},
      year={2026},
      eprint={2606.30776},
      archivePrefix={arXiv},
      primaryClass={cs.GT},
      url={https://arxiv.org/abs/2606.30776}, 
}

@misc{BoydSubgradient,
  author = {Boyd, S.},
  title = {{Subgradient Methods}},
  howpublished = {EE364b lecture notes, Stanford University},
  year = {2014},
  note = {With help from J. Park},
  url = {https://web.stanford.edu/class/ee364b/lectures/subgrad_method_notes.pdf}
}

@article{BravermanCK16,
  title={{Optimal Provision-After-Wait in Healthcare}},
  author={Braverman, Mark and Chen, Jing and Kannan, Sampath},
  journal={Mathematics of Operations Research},
  volume={41},
  number={1},
  pages={352--376},
  year={2016},
  publisher={INFORMS}
}

@inproceedings{CDW,
 author = {Cai, Yang and Devanur, Nikhil R. and Weinberg, S. Matthew},
 title = {{A Duality Based Unified Approach to Bayesian Mechanism Design}},
 booktitle = {Proceedings of the Forty-eighth Annual ACM Symposium on Theory of Computing},
 series = {STOC '16},
 year = {2016},
 isbn = {978-1-4503-4132-5},
 location = {Cambridge, MA, USA},
 pages = {926--939},
 numpages = {14},
 url = {http://doi.acm.org/10.1145/2897518.2897645},
 doi = {10.1145/2897518.2897645},
 acmid = {2897645},
 publisher = {ACM},
 address = {New York, NY, USA},
}

@inproceedings{CZ17,
 author = {Cai, Yang and Zhao, Mingfei},
 title = {{Simple Mechanisms for Subadditive Buyers via Duality}},
 booktitle = {Proceedings of the 49th Annual ACM SIGACT Symposium on Theory of Computing},
 series = {STOC 2017},
 year = {2017},
 isbn = {978-1-4503-4528-6},
 location = {Montreal, Canada},
 pages = {170--183},
 numpages = {14},
 url = {http://doi.acm.org/10.1145/3055399.3055465},
 doi = {10.1145/3055399.3055465},
 acmid = {3055465},
 publisher = {ACM},
 address = {New York, NY, USA},
}

@inproceedings{CHMS,
	Author = {Chawla, Shuchi and Hartline, Jason D and Malec, David L and Sivan, Balasubramanian},
	Booktitle = {Proceedings of the forty-second ACM symposium on Theory of computing},
	Organization = {ACM},
	Pages = {311--320},
	Title = {{Multi-Parameter Mechanism Design and Sequential Posted Pricing}},
	url = {http://dl.acm.org/citation.cfm?id=1806733},
	Year = {2010}}

@inproceedings{CMM11,
  title={Bayesian mechanism design for budget-constrained agents},
  author={Chawla, Shuchi and Malec, David L and Malekian, Azarakhsh},
  booktitle={Proceedings of the 12th ACM conference on Electronic commerce},
  pages={253--262},
  year={2011}
}

@article{CMS15,
title = {{The Power of Randomness in Bayesian Optimal Mechanism Design}},
journal = {Games and Economic Behavior},
volume = {91},
number = {},
pages = {297 - 317},
year = {2015},
note = {},
issn = {0899-8256},
doi = {http://dx.doi.org/10.1016/j.geb.2012.08.010},
url = {http://www.sciencedirect.com/science/article/pii/S0899825612001303},
author = {Shuchi Chawla and David Malec and Balasubramanian Sivan}
}

@inproceedings{CM,
	Acmid = {2940756},
	Address = {New York, NY, USA},
	Author = {Chawla, Shuchi and Miller, J. Benjamin},
	Booktitle = {Proceedings of the 2016 ACM Conference on Economics and Computation},
	Doi = {10.1145/2940716.2940756},
	Isbn = {978-1-4503-3936-0},
	Location = {Maastricht, The Netherlands},
	Numpages = {18},
	Pages = {579--596},
	Publisher = {ACM},
	Series = {EC '16},
	Title = {{Mechanism Design for Subadditive Agents via an Ex Ante Relaxation}},
	Url = {http://doi.acm.org/10.1145/2940716.2940756},
	Year = {2016}}

@inproceedings{chawla2023buy,
  title={{Buy-Many Mechanisms for Many Unit-Demand Buyers}},
  author={Chawla, Shuchi and Rezvan, Rojin and Teng, Yifeng and Tzamos, Christos},
  booktitle={International Conference on Web and Internet Economics},
  pages={21--38},
  year={2023},
  organization={Springer}
}

@inproceedings{EFFGK,
 author = {Eden, Alon and Feldman, Michal and Fiat, Amos and Goldner, Kira and Karlin, Anna R.},
 title = {{Combinatorial Auctions with Interdependent Valuations: SOS to the Rescue}},
 booktitle = {Proceedings of the 2019 ACM Conference on Economics and Computation},
 series = {EC '19},
 year = {2019},
 isbn = {978-1-4503-6792-9},
 location = {Phoenix, AZ, USA},
 numpages = {2},
 url = {https://arxiv.org/abs/1903.08384},
 doi = {10.1145/3328526.3329759},
 publisher = {ACM},
 address = {Phoenix, AZ, USA},
}

@inproceedings{EFGMMfocs,
  title={{Constant Approximation for Private Interdependent Valuations}},
  author={Eden, Alon and Feldman, Michal and Goldner, Kira and Mauras, Simon and Mohan, Divyarthi},
  booktitle={2023 IEEE 64th Annual Symposium on Foundations of Computer Science (FOCS)},
  pages={148--163},
  year={2023},
  organization={IEEE}
}

@inproceedings{EGKT26,
  title={{From Welfare to Utility: Generalized Objectives in Budget-Feasible Procurement}},
  author={Eden, Alon and Goldner, Kira and Kerner, Eldar and Tsilivis, Thodoris},
  booktitle={Forty-third International Conference on Machine Learning},
  year={2026}
}

@inproceedings{EGZ,
  title={{Private Interdependent Valuations}},
  author={Eden, Alon and Goldner, Kira and Zheng, Shuran},
  booktitle={Proceedings of the 2022 Annual ACM-SIAM Symposium on Discrete Algorithms (SODA)},
  pages={2920--2939},
  year={2022},
  organization={SIAM}
}

@inproceedings{EssaidiGW24,
 author = {Meryem Essaidi and
               Kira Goldner and
               S. Matthew Weinberg},
 title = {{To Regulate or Not to Regulate: Using Revenue Maximization Tools to Maximize Consumer Utility}},
 booktitle = {Proceedings of the 17th International Symposium on Algorithmic Game Theory (SAGT)},
editor="Sch{\"a}fer, Guido
and Ventre, Carmine",
address="Cham",
pages="315--332",
isbn="978-3-031-71033-9",
 url       = {https://arxiv.org/abs/2002.06326},
 year = {2024}
}

@inproceedings{ezra2024optimal,
  title={{Multi-Parameter Mechanisms for Consumer Surplus Maximization}},
  author={Ezra, Tomer and Schoepflin, Daniel and Shaulker, Ariel},
  booktitle={Proceedings of the 57th Annual ACM Symposium on Theory of Computing},
  pages={483--494},
  year={2025}
}

@article{fotakis2016efficient,
  title={{Efficient Money Burning in General Domains}},
  author={Fotakis, Dimitris and Tsipras, Dimitris and Tzamos, Christos and Zampetakis, Emmanouil},
  journal={Theory of Computing Systems},
  volume={59},
  pages={619--640},
  year={2016},
  publisher={Springer}
}

@inproceedings{FV06,
author = { Feige, Uriel and Vondrak, Jan },
booktitle = { 2006 47th Annual IEEE Conference on Foundations of Computer Science },
title = {{ Approximation Algorithms for Allocation Problems: Improving the Factor of 1 - 1/e }},
year = {2006},
volume = {},
ISSN = {0272-5428},
pages = {667-676},
doi = {10.1109/FOCS.2006.14},
publisher = {IEEE Computer Society},
address = {Los Alamitos, CA, USA},
month =Oct}

@inproceedings{GH24,
  title={{Combinatorial Pen Testing (Or Consumer Surplus of Deferred-Acceptance Auctions)}},
  author={Ganesh, Aadityan and Hartline, Jason},
  booktitle={16th Innovations in Theoretical Computer Science Conference (ITCS 2025)},
  pages={52--1},
  year={2025},
  organization={Schloss Dagstuhl--Leibniz-Zentrum f{\"u}r Informatik}
}

@inproceedings{goldnerlundy25,
  title={{Multidimensional Bayesian Utility Maximization: Tight Approximations to Welfare}},
  author={Goldner, Kira and Lundy, Taylor},
  booktitle={Proceedings of the Thirty-ninth Annual Conference on Neural Information Processing Systems},
  year={2025}
}

@inproceedings{HN13,
 author = {Hart, Sergiu and Nisan, Noam},
 title = {{The Menu-Size Complexity of Auctions}},
 booktitle = {Proceedings of the Fourteenth ACM Conference on Electronic Commerce},
 series = {EC '13},
 year = {2013},
 isbn = {978-1-4503-1962-1},
 location = {Philadelphia, Pennsylvania, USA},
 pages = {565--566},
 numpages = {2},
 url = {http://doi.acm.org/10.1145/2482540.2482544},
 doi = {10.1145/2482540.2482544},
 acmid = {2482544},
 publisher = {ACM},
 address = {New York, NY, USA},
}

@article{HartlineBook,
  title={{Mechanism Design and Approximation}},
  author={Hartline, Jason D.},
  journal={Book draft. October},
  volume={122},
  url = {http://jasonhartline.com/MDnA/},
  year={2013}
}

@inproceedings{HartlineR08,
 author = {Hartline, Jason D. and Roughgarden, Tim},
 title = {{Optimal Mechanism Design and Money Burning}},
 booktitle = {STOC '08: Proceedings of the 40th annual ACM symposium on Theory of computing},
 year = {2008},
 isbn = {978-1-60558-047-0},
 pages = {75--84},
 location = {Victoria, British Columbia, Canada},
 doi = {http://doi.acm.org/10.1145/1374376.1374390},
 publisher = {ACM},
 address = {New York, NY, USA},
 }

@article{Heflin,
  title={{Local Control, Discretion, and Administrative Burden: SNAP Interview Waivers and Caseloads During the COVID-19 Pandemic}},
  author={Heflin, Colleen  and Fannin, William Clay and Lopoo, Leonard},
  journal={American Review of Public Administration},
  year={2023}
}

@article{KashaevSantiago23,
  author = {Kashaev, D. and Santiago, R.},
  title = {{A Simple Optimal Contention Resolution Scheme for Uniform Matroids}},
  journal = {Theoretical Computer Science},
  volume = {940},
  pages = {81--96},
  year = {2023},
  url = {https://arxiv.org/abs/2105.11992}
}

@inproceedings{KleinbergW12,
  author    = {Robert Kleinberg and
               S. Matthew Weinberg},
  editor    = {Howard J. Karloff and
               Toniann Pitassi},
  title     = {{Matroid Prophet Inequalities}},
  booktitle = {Proceedings of the 44th Symposium on Theory of Computing Conference,
               {STOC} 2012, New York, NY, USA, May 19 - 22, 2012},
  pages     = {123--136},
  publisher = {{ACM}},
  year      = {2012},
  url       = {https://doi.org/10.1145/2213977.2213991},
  doi       = {10.1145/2213977.2213991},
  bibsource = {dblp computer science bibliography, https://dblp.org}
}

@inproceedings{lundy2024pay,
  title={Pay to (Not) Play: Monetizing Impatience in Mobile Games},
  author={Lundy, Taylor and Raman, Narun and Fu, Hu and Leyton-Brown, Kevin},
  booktitle={Proceedings of the AAAI Conference on Artificial Intelligence},
  volume={38},
  number={9},
  pages={9856--9864},
  year={2024}
}

@article{Samuel-Cahn,
  title={{Comparison of Threshold Stop Rules and Maximum for Independent Nonnegative Random Variables}},
  author={Samuel-Cahn, Ester},
  journal={The Annals of Probability},
  pages={1213--1216},
  year={1984},
  publisher={JSTOR}
}

@inproceedings{Yao15,
	Author = {Yao, Andrew Chi-Chih},
	Booktitle = {Proceedings of the Twenty-Sixth Annual ACM-SIAM Symposium on Discrete Algorithms},
	Organization = {SIAM},
	Pages = {92--109},
	Title = {{An $n$-to-1 Bidder Reduction for Multi-Item Auctions
                   and Its Applications}},
	Year = {2015},
	doi = {10.1137/1.9781611973730.8},
        url =  {http://arxiv.org/abs/1406.3278}}

@inproceedings{Zinkevich03,
  author    = {Martin Zinkevich},
  title     = {Online Convex Programming and Generalized
               Infinitesimal Gradient Ascent},
  booktitle = {Proceedings of the Twentieth International
               Conference on Machine Learning (ICML)},
  pages     = {928--936},
  year      = {2003}
}

@inproceedings{lundy2019allocation,
  title={Allocation for social good: auditing mechanisms for utility maximization},
  author={Lundy, Taylor and Wei, Alexander and Fu, Hu and Kominers, Scott Duke and Leyton-Brown, Kevin},
  booktitle={Proceedings of the 2019 ACM Conference on Economics and Computation},
  pages={785--803},
  year={2019}
}

@article{patel2022costly,
  title={Costly Verification and Money Burning},
  author={Patel, Rohit and Urgun, Can},
  year={2022},
  publisher={Working Paper}
}

@misc{ezra2026moneyburning,
  author = {Ezra, Tomer},
  title = {{Money Burning Mechanism Design: From Welfare to Surplus}},
  year = {2026},
  eprint = {2608.01693},
  archivePrefix = {arXiv},
  primaryClass = {cs.GT},
  note = {Version 2, September 24, 2026},
  url = {https://arxiv.org/abs/2608.01693v2}
}

@inproceedings{GMT26,
  author = {Goldner, Kira and Mohan, Divyarthi and Tsilivis, Thodoris},
  title = {{Knowing Who, Not How Much: Learning-Augmented Mechanisms
            for Consumer Utility Maximization}},
  booktitle = {Forty-third International Conference on Machine Learning},
  year = {2026},
  url = {https://arxiv.org/abs/2607.00175}
}

@misc{BMMP24,
  author = {Bhawalkar, Kshipra and Mertzanidis, Marios and
            Mohan, Divyarthi and Psomas, Alexandros},
  title = {{Mechanism Design via the Interim Relaxation}},
  year = {2024},
  eprint = {2407.12699},
  archivePrefix = {arXiv},
  primaryClass = {cs.GT},
  url = {https://arxiv.org/abs/2407.12699}
}
